\documentclass[a4paper, 11pt]{article}

\usepackage{amsmath, amsthm, amssymb}
\usepackage{myMath}
\usepackage{enumitem}
\usepackage{graphicx}
\usepackage[caption=false]{subfig}
\usepackage{stackrel}
\usepackage{hyperref}

\title{Definability Equals Recognizability for $k$-Outerplanar
Graphs\footnote{The research was done while the first author was a
student at Utrecht University.}\footnote{The research of the second
author was partially funded by the Networks programme, funded by the Dutch Ministry of Education, Culture and
Science through the Netherlands Organisation for Scientific
Research.}
}

\author{Lars Jaffke\protect\footnote{Centrum Wiskunde \& Informatica.
Postbus 94079, 1090 GB Amsterdam. Email: \texttt{l.jaffke@cwi.nl}}
\and 
Hans L. Bodlaender\protect\footnote{Department of
Information and Computing Sciences, Utrecht University, P.O. Box 80.089, 3508 TB
Utrecht, The Netherlands.
Department of Mathematics and Computer Science,
University of Technology Eindhoven,
P.O. Box 513, 5600 MB Eindhoven, The Netherlands.
Email: \texttt{h.l.bodlaender@uu.nl}}}

\theoremstyle{theorem}
\newtheorem{theorem}{Theorem}[section]
\newtheorem{lemma}[theorem]{Lemma}
\newtheorem{proposition}[theorem]{Proposition}
\newtheorem{corollary}[theorem]{Corollary}
\theoremstyle{definition}
\newtheorem{definition}[theorem]{Definition}
\theoremstyle{remark}
\newtheorem{remark}[theorem]{Remark}
\newtheorem{example}[theorem]{Example}

\bibliographystyle{plain}

\date{}

\begin{document}
	
	\maketitle
	
	\begin{abstract}
		One of the most famous algorithmic meta-theorems states that every graph
		property that can be defined by a sentence in counting monadic second order
		logic (CMSOL) can be checked in linear time for graphs of bounded treewidth,
		which is known as Courcelle's Theorem \cite{Cou90}.
		These algorithms are constructed as finite state tree automata, and hence every
		CMSOL-definable graph property is recognizable. Courcelle also conjectured that
		the converse holds, i.e.\ every recognizable graph property is definable in
		CMSOL for graphs of bounded treewidth. We prove this conjecture for
		$k$-outerplanar graphs, which are known to have treewidth at most $3k-1$
		\cite{Bod98}.
	\end{abstract}
	
	\section{Introduction}\label{secIntro}
A seminal result from 1990 by Courcelle states that
for every graph property $P$ that can be formulated in a language called
counting monadic second order logic (CMSOL), and each fixed $k$, there is a
linear time algorithm that decides $P$ for a graph given a tree decomposition of
width at most $k$ \cite{Cou90} (while similar results were discovered by Arnborg
et al.
\cite{ALS91} and Borie et al. \cite{BPT92}).
Counting monadic second order logic generalizes monadic second order logic
(MSOL) with a collection of predicates testing the size of sets modulo
constants. Courcelle showed that this makes the logic strictly more powerful
\cite{Cou90}.
The algorithms constructed in Courcelle's proof have the shape of a finite state
tree automaton and hence we can say that CMSOL-definable graph properties are
recognizable (or, equivalently, regular or finite-state).
Courcelle's Theorem generalizes one direction of a classic result in automata
theory by B{\"u}chi, which states that a language is recognizable,
if and only if it is MSOL-definable \cite{Bue60}.
Courcelle conjectured in 1990 that the other direction of B{\"u}chi's result can
also be generalized for graphs of bounded treewidth in CMSOL, i.e.\ that each
recognizable graph property is CMSOL-definable. \par
This conjecture is still regarded to be open. Its claimed resolution by
Lapoire \cite{Lap98} is not considered to be valid by several experts.
In the course of time proofs were given for the classes of trees and forests 
\cite{Cou90}, partial 2-trees \cite{Cou91}, partial 3-trees and $k$-connected
partial $k$-trees \cite{Kal00}. A sketch of a proof for graphs of pathwidth
at most $k$ appeared at ICALP 1997 \cite{Kab97}. Very recently, one of the
authors proved, in collaboration with Heggernes and Telle, that Courcelle's
Conjecture holds for partial $k$-trees without chordless cycles of length at
least $\ell$ \cite{BHT15}. 
\par
By the results presented in this paper, we add the class of $k$-outerplanar
graphs to this list. In particular, we first prove the conjecture for
3-connected $k$-outerplanar graphs and then generalize this result to all
$k$-outerplanar graphs, based on the decomposition of a connected graph into its
3-connected components, discovered by Tutte \cite{Tut66} and shown to be
definable in monadic second order logic by Courcelle \cite{Cou99}.
\par
The rest of the paper is organized as follows. In Section \ref{secPrel} we give
the basic definitions and review the concepts involved in our proofs. We present
the main result in Section \ref{secKOPG} and conclude in Section \ref{secConc}.

\section{Preliminaries}\label{secPrel}
\subsection{Graphs and Tree Decompositions}\label{secPrelGTD}
Throughout the paper, a graph $G = (V, E)$ with vertex set $V$ and edge set $E$
is undirected, connected and simple. We denote the subgraph
relation by $G \sqsubseteq H$ and for a set $W \subseteq V$, $G[W]$ denotes the induced
subgraph over $W$ in $G$, so $G[W] = (W, E \cap (W \times W))$. We call a set
$C \subset V$ a \emph{cut} of $G$, if $G[V \setminus C]$ is disconnected. An
\emph{$\ell$-cut} of $G$ is a cut of size $\ell$. A set $S \subseteq V$ is
said to be \emph{incident} to an $\ell$-cut $C$, if $C \subset S$. We call a
graph \emph{$\ell$-connected}, if it does not contain a cut of size at most $\ell-1$.
\par
We now define the class of $k$-outerplanar graphs and some central notions used
extensively throughout the rest of the paper.
\begin{definition}[(Planar) Embedding]
	A drawing of a graph in the plane is called an \emph{embedding}. If no pair of
	edges in this drawing crosses, then it is called \emph{planar}.
\end{definition}
\begin{definition}[$k$-Outerplanar Graph]\label{defKOPG}
	Let $G = (V, E)$ be a graph. $G$ is called a \emph{planar
	graph}, if there exists a planar embedding of $G$.
	An embedding of a graph $G$ is \emph{$1$-outerplanar}, if it is
	planar, and all vertices lie on the exterior face. For $k \ge 2$, an embedding
	of a graph $G$ is \emph{$k$-outerplanar}, if it is planar, and when
	all vertices on the outer face are deleted, then one obtains a
	$(k-1)$-outerplanar embedding of the resulting graph. If $G$ admits a
	$k$-outerplanar embedding, then it is called a \emph{$k$-outerplanar graph}.
\end{definition}
The following definition will play a central role in many of the proofs of
Section \ref{secKOPG}.
\begin{definition}[Fundamental Cycle]\label{defFundCyc}
	Let $G = (V, E)$ be a graph with maximal spanning forest $T = (V, F)$. Given an
	edge $e = \{v, w\}$, $e \in E \setminus F$, its \emph{fundamental cycle} is a
	cycle that is formed by the unique path from $v$ to $w$ in $F$ together with
	the edge $e$.
\end{definition}
\begin{definition}[Tree Decomposition, Treewidth]
	A \emph{tree decomposition} of a graph $G = (V, E)$ is a pair $(T, X)$ of a
	tree $T = (N, F)$ and an indexed family of vertex sets $(X_t)_{t \in N}$
	(called \emph{bags}), such that the following properties hold.
	\begin{enumerate}[label=(\roman*)]
	  \item Each vertex $v \in V$ is contained in at least one bag.
	  \item For each edge $e \in E$ there exists a bag containing both endpoints.
	  \item For each vertex $v \in V$, the bags in the tree decomposition that
	  contain $v$ form a subtree of $T$.
	\end{enumerate}
	The \emph{width} of a tree decomposition is the size of the largest bag minus 1
	and the \emph{treewidth} of a graph is the minimum width of all its tree
	decompositions. We might sometimes refer to graphs of treewidth at most $k$ as
	\emph{partial $k$-trees}.\footnote{For several characterizations of graphs of
	treewidth at most $k$, see e.g.\ \cite[Theorem 1]{Bod98}}
\end{definition}
To avoid confusion, in the following we will refer to elements of $N$ as
\emph{nodes} and elements of $V$ as \emph{vertices}. Sometimes, to shorten the
notation, we might not differ between the terms \emph{node} and \emph{bag} in a
tree decomposition. \par
We use the following notation. If $P$ denotes a graph property (e.g.\ a graph
has a Hamiltonian cycle), then by '$P(G)$' we express that a graph $G$ has
property $P$.

\subsection{Monadic Second Order Logic of Graphs}\label{secPrelDef}
 We now define counting monadic second order logic of graphs $G = (V, E)$,
 using terminology from \cite{BPT92} and \cite{Kal00}.
 Variables in this predicate logic are either single vertices/edges or
 vertex/edge sets. We form predicates by joining \emph{atomic predicates}
 (vertex equality $v = w$, vertex membership $v \in V$, edge membership $e \in
 E$ and vertex-edge incidence $\Inc(v, e)$) via negation $\neg$, conjunction
 $\wedge$, disjunction $\vee$, implication $\to$ and equivalence 
 $\leftrightarrow$ together with existential quantification $\exists$ and 
 universal quantification $\forall$ over variables in our domain $V \cup E$. 
 To extend this monadic second order logic (MSOL) to \emph{counting}
 monadic second order logic (CMSOL), one additionally allows the use of
 predicates $\bmod_{p, q}(S)$ for sets $S$, which are true, if and only if $|S|
 \bmod q = p$, for constants $p$ and $q$ (with $p < q$). \par
 Let $\phi$ denote a predicate without unquantified (so-called \emph{free})
 variables constructed as explained above and $G$ be a graph. We call $\phi$ a
 \emph{sentence} and denote by $G \models \phi$ that $\phi$ yields a truth
 assignment when evaluated with the graph $G$. 
 \begin{definition}
 	Let $P$ denote a graph property.
 We say that $P$ is
 \emph{(C)MSOL-definable}, if there exists a (C)MSOL-sentence $\phi_P$ such
 that $P(G)$ if and only if $G \models \phi$. 
 \end{definition}
  We distinguish between two types of free variables.
  Consider a predicate $\phi$ with free variables $x_1,\ldots,x_p$. A subset of
  $x_1,\ldots,x_p$, say $x_1,\ldots,x_a$ (where $a \le p$), can be considered
  its \emph{arguments}, and the variables $x_{a+1},\ldots,x_p$ are its
  \emph{parameters}. We denote this predicate as $\phi(x_1,\ldots,x_a)$, i.e.\
  its parameters do not appear in the notation. We illustrate the difference
  between arguments and parameters in the following example.
 \begin{example}
  Let $P$ denote the property that a graph has a $k$-coloring and
  $\phi_{col}(v, w)$ a predicate, which is true, if and only if a vertex $v$
  has a lower numbered color than $w$ in a given coloring. Then $\phi_{col}$ has
  two arguments, vertices $v$ and $w$, and $k$ parameters, the $k$ color classes. Clearly, the
  choice of the parameters influences the evaluation of $\phi_{col}$, but in
  most applications of parameters for predicates, it is sufficient to show
  that one can guess \emph{some} variables of the evaluation graph to define a
  property.
 \end{example}
 Now, let $R(x_1,\ldots,x_r)$ denote a relation with arguments $x_1,\ldots,x_r$.
 We say that $R$ is \emph{(C)MSOL-definable}, if there exists a parameter-free
 predicate $\phi_R(x_1,\ldots,x_r)$, encoding the relation $R$. Furthermore we
 call $R$ \emph{existentially (CMSOL)-definable}, if there exists a predicate
 $\phi_R(x_1,\ldots,x_r)$ with parameters $x_1,\ldots,x_p$, which, after
 substituting the parameters by fixed values in the evaluation graph, encodes
 the relation $R$.
\par
A central concept used in this paper is an implicit representation
of tree decompositions in monadic second order logic, as we cannot refer to
its bags and edges as variables in MSOL directly. We have to define predicates,
which encode the construction of a tree decomposition of each member of a given
graph class. We require two types of predicates. The $\Bag$-predicates will
allow us to verify whether a vertex is contained in some bag and whether any
vertex set in the graph constitutes a bag in its tree decomposition.
Each bag will be associated with either a vertex or an edge
in the underlying graph (its \emph{witness}) together with some \emph{type},
whose definition depends on the graph class under consideration.
The $\Parent$-predicate allows for identifying edges in the tree decomposition,
i.e. for any two vertex sets $S_p$ and $S_c$, this predicate will be true if and
only if both $S_p$ and $S_c$ are bags in the tree decomposition and $S_p$ is the
bag corresponding to the parent node of $S_c$.
\begin{definition}[MSOL-definable tree decomposition]
 	A tree decomposition $(T = (N, F), X)$ of a graph $G = (V, E)$ is called
 	\emph{existentially MSOL-definable}, if the following are
 	existentially MSOL-definable (with parameters $x_1,\ldots,x_p$ for some
 	constant $p$).
 	\begin{enumerate}[label={(\roman*)}]
 	  \item Each bag $X_p, p \in N$ in the tree decomposition is associated with
 	  either a vertex $v \in V$ or an edge $e \in E$ (called its
 	  \emph{witness}) and can be identified by one of the following predicates
 	  (where $S \subseteq V$ and $s$ and $t$ are constants).
 	  \begin{enumerate}[label={(\alph*)}]
 	    \item $\Bag_{\tau_1}(v, S),\ldots,\Bag_{\tau_t}(v, S)$: The vertex set
 	    $S$ forms a bag in the tree decomposition of $G$, i.e.\
 	    $S = X_p$ for some $p \in N$, it is of type $\tau_i$ ($1 \le i \le
 	    t$) and its witness is $v$.
 	    \item $\Bag_{\sigma_1}(e, S),\ldots,\Bag_{\sigma_s}(e, S)$: The vertex set
 	    $S$ forms a bag in the tree decomposition of $G$, i.e.\ $S = X_p$ for some
 	    $p \in N$, it is of type $\sigma_i$ ($1 \le i \le s$) and its
 	    witness is $e$.
 	  \end{enumerate}
 		\item Each edge in $F$ can be identified with a predicate $\Parent(S_p,
 		S_c)$, where $S_p, S_c \subseteq V$: The vertex sets $S_p$ and $S_c$ form
 		bags in $(T, X)$, i.e.\ $S_p = X_p$ and $S_c = X_c$ for some $p, c \in N$,
 		and $p$ is the parent node of $c$ in $T$.
 	\end{enumerate}
\end{definition}
\begin{lemma}\label{lemExDefTD}
	Let $(T, X)$ be an existentially MSOL-definable tree decomposition with
	parameters $x_1,\ldots,x_p$.
	There exists a predicate $\phi$ with zero parameters and $p$ arguments, which
	is true if and only if the predicates $\Bag_{\tau_1},\ldots,\Bag_{\tau_t}$,
	$\Bag_{\sigma_1},\ldots,\Bag_{\sigma_s}$ and $\Parent$ describe a width-$k$
	rooted tree decomposition of an evaluation graph $G$.
\end{lemma}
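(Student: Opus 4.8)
The plan is to let $\phi(x_1,\ldots,x_p)$ be the conjunction of several sub-predicates, each verifying one requirement of a rooted tree decomposition of width at most $k$, where the $p$ parameters of the given $\Bag$- and $\Parent$-predicates are now taken as the explicit arguments of $\phi$ (so that $\phi$ itself is parameter-free). A convenient first building block is an auxiliary predicate asserting that a vertex set $S$ is a bag,
\[
  \mathrm{IsBag}(S)\;:\equiv\;\Bigl(\exists v\,\bigvee_{i=1}^{t}\Bag_{\tau_i}(v,S)\Bigr)\vee\Bigl(\exists e\,\bigvee_{i=1}^{s}\Bag_{\sigma_i}(e,S)\Bigr),
\]
which is available because the numbers $t$ and $s$ of types are constants. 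With $\mathrm{IsBag}$ in hand, three of the requirements are routine. The width bound is expressed by forbidding $k+2$ pairwise distinct vertices inside any bag,
\[
  \forall S\,\Bigl(\mathrm{IsBag}(S)\to\neg\,\exists v_1\cdots\exists v_{k+2}\bigl(\textstyle\bigwedge_{a<b}v_a\neq v_b\wedge\bigwedge_{a}v_a\in S\bigr)\Bigr),
\]
which is a finite formula since $k$ is fixed; vertex coverage is $\forall v\,\exists S\,(\mathrm{IsBag}(S)\wedge v\in S)$; and edge coverage is $\forall e\,\forall v\,\forall w\,\bigl((\Inc(v,e)\wedge\Inc(w,e)\wedge v\neq w)\to\exists S\,(\mathrm{IsBag}(S)\wedge v\in S\wedge w\in S)\bigr)$. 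I would also add a consistency clause guaranteeing that each witness--type pair determines at most one bag, so that $\Parent$ acts on well-defined nodes.

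The genuinely delicate requirements are that $\Parent$ organizes the bags into a \emph{rooted tree} and that, for every vertex $v$, the bags containing $v$ induce a \emph{connected subtree} (the interpolation property, condition (iii) of a tree decomposition). Both are statements about connectivity and reachability \emph{in the decomposition tree}, and the naive formulation quantifies over \emph{sets of bags}. Since a bag is itself a vertex set, this would amount to quantifying over sets of sets, which is outside the reach of monadic second order logic; this is the main obstacle to overcome.

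The resolution exploits that every bag carries a witness of one of a constant number of types. A set of bags can therefore be encoded by a constant-size tuple of monadic variables: $t$ vertex sets $U_1,\ldots,U_t$ and $s$ edge sets $D_1,\ldots,D_s$, where $U_i$ (respectively $D_j$) collects the witnesses of the type-$\tau_i$ (respectively type-$\sigma_j$) bags that are to be included. Quantifying over such tuples is legitimate in MSOL, and it lets us simulate quantification over sets of bags. With this encoding the remaining conditions become standard MSOL formulas. For the tree structure: exactly one bag has no $\Parent$, every other bag has exactly one parent under $\Parent$, and every bag reaches the root along the parent chain, the last point being a transitive-closure statement of the usual form ``every set of bags that contains the root and is closed under passing from a parent to its children contains all bags.'' For interpolation: for each $v$, the set $\mathcal{B}_v$ of bags containing $v$ is connected, i.e.\ it admits no partition into two nonempty parts with no $\Parent$-edge between them, where both $\mathcal{B}_v$ and the partition are described through their witness encodings.

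Taking $\phi$ to be the conjunction of the width, coverage, consistency, tree-structure and interpolation clauses, it is true exactly when the given predicates describe a rooted tree decomposition of width at most $k$ of the evaluation graph $G$, as required. I expect the connectivity and reachability arguments of the third paragraph to carry essentially all of the technical weight; once the witness-based encoding of sets of bags is set up, every clause reduces to a bounded Boolean combination of the given predicates together with textbook MSOL idioms for connectivity and transitive closure.
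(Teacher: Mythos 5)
Your proposal is correct and follows essentially the same route as the paper, whose own ``proof'' is merely a pointer to Lemma 4.7 of Kaller's work: the construction there rests on exactly the idea you identified, namely that quantification over sets of bags (which would be sets of sets, hence non-monadic) is simulated by quantifying over a constant-size tuple of witness sets, one per bag type, after which the width, coverage, rooted-tree and interpolation conditions become routine MSOL formulas. Since the paper omits all details, your write-up is in effect the argument the authors are deferring to, including the consistency clause making each witness--type pair determine at most one bag, which is what legitimizes the encoding.
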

\begin{proof}
	The proof can be done analogously to the proof of Lemma 4.7 in \cite{Kal00}.
\end{proof}
A fundamental result about definable graph properties, which we use extensively
throughout our proofs, states
that one can define any edge orientation of partial $k$-trees in MSOL. 
For an in-depth study of
MSOL-definable edge orientations on graphs, see \cite{Cou95}. 
\begin{lemma}[Lemma 4.8 in \cite{Kal00}]\label{lemtwkOrd} Any direction over a
subset of the edges of an undirected graph of treewidth at most $k$ is
existentially MSOL-definable with $k+2$ parameters.
\end{lemma}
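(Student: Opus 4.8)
The plan is to reduce an arbitrary prescribed orientation to a single, uniformly definable \emph{reference} orientation together with a bounded amount of correction data supplied through the parameters. First I would exploit that a graph of treewidth at most $k$ is $k$-degenerate and hence admits a proper vertex colouring with $k+1$ colours; such a colouring can be recorded by only $k$ set parameters $X_1,\dots,X_k$, the $(k+1)$-st colour class being the definable complement $V\setminus(X_1\cup\dots\cup X_k)$. Writing $c(v)$ for the colour of $v$ (a value in $\{1,\dots,k+1\}$ that a fixed, constant-size formula can read off from membership in the $X_i$), the rule ``orient $\{u,v\}$ from the lower-coloured to the higher-coloured endpoint'' yields a reference orientation of every edge that is captured by a single MSOL formula, since $c(u)\neq c(v)$ for every edge of a properly coloured graph.

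Next I would encode the prescribed direction over the subset $E'\subseteq E$ relative to this reference. Concretely I use two \emph{edge}-set parameters $P$ and $Q$: I place $e=\{u,v\}\in E'$ into $P$ if its prescribed direction agrees with the colour order (from lower to higher colour) and into $Q$ otherwise, while edges outside $E'$ lie in neither set. The target relation $\mathrm{arc}(u,v)$ (``$e$ is oriented $u\to v$'') is then captured by the fixed predicate asserting that $\{u,v\}$ is an edge and that either $e\in P$ with $c(u)<c(v)$, or $e\in Q$ with $c(u)>c(v)$. Antisymmetry and the correct treatment of unoriented edges are immediate from the construction, and the predicate uses no parameters beyond $X_1,\dots,X_k,P,Q$, giving $k+2$ in total and matching the stated bound.

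The step I expect to be the genuine obstacle is conceptual rather than computational: one cannot define the orientation by comparing positions in the tree of a width-$k$ tree decomposition, because that tree is not part of $G$ and, indeed, making such a decomposition MSOL-accessible is precisely the goal that Lemma~\ref{lemtwkOrd} is meant to support. This forces the use of an \emph{intrinsic} graph structure, and the proper $(k+1)$-colouring is exactly such a structure: it is the device that lets one fixed formula break the symmetry of each edge. Two further points need care to stay within budget. An arbitrary orientation may have unbounded out-degree, so it cannot be captured by listing out-neighbours; the reference-plus-correction idea sidesteps this by treating all edges uniformly. And hitting $k+2$ rather than $k+3$ relies on the complement trick that saves one colour-class parameter, leaving room for the two edge-set parameters needed to distinguish the three possible states (forward, reversed, unoriented) of each edge.
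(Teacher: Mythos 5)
Your proof is correct and follows essentially the same route as the paper's (i.e.\ Kaller's): a proper $(k+1)$-colouring, which exists since treewidth at most $k$ implies $k$-degeneracy, provides a definable reference orientation on each edge, and edge-set parameters record the prescribed direction relative to it. The only difference is bookkeeping: the paper spends $k+1$ vertex-set parameters on the colour classes plus a single edge set $F$ that flips the reference orientation, whereas you save one colour class via the complement trick and use two edge sets $P$ and $Q$, which additionally encodes the domain $E'$ of the partial orientation --- both allocations give $k+2$ parameters.
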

The idea of the proof of Lemma \ref{lemtwkOrd} is to find a $(k+1)$-coloring
$\gamma : V \to \{1,\ldots,k+1\}$ (expressed in MSOL by $k+1$ vertex sets) of
the graph and an edge set $F$, such that an edge $e = \{v, w\}$ is directed from
$v$ to $w$, if and only if $\gamma(v) < \gamma(w)$ and $e \in F$ or $\gamma(v) >
\gamma(w)$ and $e \notin F$. Hence, by the choice of the set $F$ we can define
any orientation on the edges of a graph in MSOL, if some $(k+1)$-vertex coloring
of the graph can be fixed.

\subsection{Tree Automata for Graphs of Bounded Treewidth}\label{secPrelTA}
We briefly review the concept of tree automata and recognizability of graph
properties for graphs of bounded treewidth. For an introduction to the topic we
refer to \cite[Chapter 12]{DF13}. For the formal details of the following
notions, the reader is referred to \cite{Kal00}. \par
A tree automaton $\cA$
is a finite state machine accepting as an input a tree structure over an 
alphabet $\Sigma$ as opposed to words in classical word automata.
Formally, $\cA$ is a triple $(\cQ, \cQ_{Acc}, f)$ of a set of states $\cQ$, a
set of accepting states $\cQ_{Acc} \subseteq \cQ$ and a transition function $f$,
deriving the state of a node in the input tree $\cT$ from the states of its children and its
own symbol $s \in \Sigma$.
$\cT$ is \emph{accepted} by $\cA$, if the state of the root node of $\cT$
is an element of the accepting states $\cQ_{Acc}$ (after a run of $\cA$ with
$\cT$ as an input).
\par 
To recognize a graph property on graphs of treewidth at most $k$, one encodes a
rooted width-$k$ tree decompositions as a labeled tree over a special type of
alphabet, in the following denoted by $\Sigma_k$ (see Definition 3.5,
Proposition 3.6 in \cite{Kal00}). We say that a tree
automaton over such an alphabet \emph{processes} width-$k$ tree decompositions.
\begin{definition}[Recognizable Graph Properties]
	Let $P$ denote a graph property. 
	We call $P$ \emph{recognizable} (for graphs of treewidth $k$), if there
	exists a tree automaton $\cA_P$ processing width-$k$ tree decompositions, such
	that following are equivalent.
	\begin{enumerate}[label={(\roman*)}]
	  \item $(T, X)$ is a width-$k$ tree decomposition of a graph $G$ with $P(G)$.
	  \item $\cA_P$ accepts (the labeled tree over $\Sigma_k$ corresponding to)
	  $(T, X)$.
	\end{enumerate}
\end{definition}
Kaller has shown that Courcelle's Conjecture follows immediately from the
construction of an MSOL-definable tree decomposition.
\begin{lemma}[Lemma 5.4 in \cite{Kal00}]\label{lem5_4_Kal}
	Let $P$ denote a graph property, which is recognizable for graphs of bounded
	treewidth. Suppose that there is an MSOL-definable tree decomposition of width
	at most $k$ for any partial $k$-tree $G$. Then, one can write a CMSOL-sentence
	$\Phi$, such that $G \models \Phi$ if and only if $P(G)$.
\end{lemma}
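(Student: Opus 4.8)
The plan is to prove the lemma by the classical Büchi--Elgot--Trakhtenbrot technique of encoding an accepting run of the tree automaton as a collection of monadic set variables, adapted to the implicit (predicate-based) representation of the tree decomposition. Let $\cA_P = (\cQ, \cQ_{Acc}, f)$ be a tree automaton witnessing that $P$ is recognizable, with finite state set $\cQ = \{q_1, \ldots, q_m\}$, and let the predicates $\Bag_{\tau_1}, \ldots, \Bag_{\sigma_s}$ and $\Parent$ (with parameters $x_1, \ldots, x_p$) be those of the given MSOL-definable tree decomposition. The sentence $\Phi$ will begin by existentially quantifying the parameters $x_1, \ldots, x_p$ and asserting, via the predicate supplied by Lemma \ref{lemExDefTD}, that these predicates indeed describe a valid rooted width-$k$ tree decomposition $(T, X)$ of the evaluation graph $G$. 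All remaining work takes place relative to this one canonical decomposition.

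Next I would simulate the run of $\cA_P$ on the labeled tree over $\Sigma_k$ encoding $(T, X)$. Since the nodes of $T$ are represented by their witnesses (vertices or edges) rather than as first-class objects, I would guess the run as a partition of the set of witnesses into $m$ sets $Q_1, \ldots, Q_m \subseteq V \cup E$, where a witness lies in $Q_j$ exactly when the automaton assigns state $q_j$ to the corresponding node. The sentence then has to verify two things. First, local consistency: for every node $t$, reading off its symbol $s \in \Sigma_k$ from the bag contents and the structure (which is MSOL-definable directly from the $\Bag$-predicates, e.g.\ the vertices of the bag, those shared with the parent located through $\Parent$, and the type of $t$), the state assigned to $t$ by the guessed partition must equal the value of $f$ applied to $s$ and the states of the children of $t$, with leaves handled as the base case. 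Second, acceptance: the root, identified as the unique node without a parent, is labeled by a state in $\cQ_{Acc}$.

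The correctness is then immediate in both directions: if $P(G)$ holds, the canonical decomposition is accepted by $\cA_P$, and its (unique) run supplies the sets $Q_1, \ldots, Q_m$ satisfying $\Phi$; conversely, any assignment satisfying $\Phi$ exhibits an accepting run on the canonical decomposition, whence $P(G)$ by recognizability. Because Lemma \ref{lemExDefTD} guarantees that every partial $k$-tree has exactly the decomposition described by the predicates, this equivalence is well-defined for all graphs in the class.

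I expect the main obstacle to lie in expressing the local transition step when a node $t$ has unboundedly many children. A finite tree automaton processing an unranked tree can depend only on the multiset of its children's states, and by its pumping behaviour only on the multiplicity of each state up to a fixed threshold and modulo a fixed constant. Encoding this is exactly where plain MSOL is insufficient and the counting predicates $\bmod_{p,q}$ are required: for each state $q_j$ I would form the set of children of $t$ lying in $Q_j$ and test its cardinality against the relevant thresholds and residues, then case-split over the finitely many resulting counting profiles to check that $f$ is respected. This modular counting is the sole reason the resulting sentence $\Phi$ is CMSOL rather than MSOL, and it is precisely the feature that lets the construction capture recognizable properties---such as parity of the number of vertices---that are provably not MSOL-definable.
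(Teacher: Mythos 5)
The paper itself contains no proof of this lemma: it is imported verbatim as Lemma 5.4 of \cite{Kal00}, and the argument there is precisely the automaton-run simulation you describe --- existentially quantify the decomposition parameters, verify them via Lemma \ref{lemExDefTD}, guess the states of the run of $\cA_P$ as monadic set variables over the witnesses, check local transitions and acceptance at the root, and use the $\bmod_{p,q}$ predicates to aggregate the states of unboundedly many children, which is exactly why the sentence lands in CMSOL rather than MSOL. Your proposal therefore matches the intended proof; the only repair needed is bookkeeping: since a single vertex or edge may be the witness of several bags of different types, the run cannot be encoded as one partition of $V \cup E$ into $m$ sets, but must be guessed as one set per (state, type) pair, i.e.\ $m(t+s)$ sets, after which your consistency, acceptance, and counting conditions go through unchanged.
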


\section{The Main Result}\label{secKOPG}
In this section we investigate Courcelle's Conjecture in the context of
$k$-outerplanar graphs (see Definition \ref{defKOPG}). 
Bodlaender has shown that every $k$-outerplanar graph has treewidth at most
$3k-1$ \cite[Theorem 83]{Bod98}, using the following properties of maximal
spanning forests of a graph.
\begin{definition}[Vertex and Edge Remember Number]
	Let $G = (V, E)$ be a graph with maximal spanning forest $T = (V, F)$. The
	\emph{vertex remember number} of $G$ (with respect to $T$), denoted by
	$vr(G, T)$, is the maximum number over all vertices $v \in V$ of fundamental
	cycles (in $G$ given $T$) that use $v$. Analogously, we define the \emph{edge
	remember number}, denoted by $er(G, T)$.
\end{definition}
In particular, Bodlaender gave a constructive proof that the treewidth of a
graph is bounded by at most $\max\{vr(G, T), er(G, T) + 1\}$ \cite[Theorem
71]{Bod98}. The idea of the proof
is to create a bag for each vertex and edge in the spanning tree, containing the
vertex itself (or the two endpoints of the edge, respectively) and one endpoint
of each edge, whose fundamental cycle uses the corresponding vertex/edge. 
The tree structure of the decomposition is inherited by the structure of the 
spanning tree. 
He then showed, that in a $k$-outerplanar graph $G$ one can
split the vertices of degree $d > 3$ into a path of $d-2$ vertices of degree
three without increasing the outerplanarity index of $G$ (the so-called
\emph{vertex expansion step}, see Figure \ref{figExpStepEx}). In this expanded graph
$G'$ one can find a spanning tree of vertex remember number at most $3k-1$ and edge remember number at most
$2k$ \cite[Lemmas 81 and 82]{Bod98}. Using \cite[Theorem 71]{Bod98}, this yields
a tree decomposition of width at most $3k-1$ for $G'$ and by
simple replacements one finds a tree decomposition for $G$ of the same width.
A constructive version of this proof was given by Katsikarelis \cite{Kat13}.
The expansion step is the major challenge in defining a tree decomposition of a
$k$-outerplanar graph in monadic second order logic, since we cannot use these
newly created vertices as variables. We find an implicit representation of this
step in Section \ref{secKOPGVertExp}. We show how to construct an existentially MSOL-definable
tree decomposition of a 3-connected $k$-outerplanar graph in Section \ref{secKOP3C}
and for the general case of $k$-outerplanar graphs in Section \ref{secG3CC}.
\par

\subsection{An Implicit Representation of the Vertex Expansion
Step}\label{secKOPGVertExp}
\begin{figure}
	\centering
	\subfloat[before expansion]{
		\includegraphics[width=.31\textwidth]{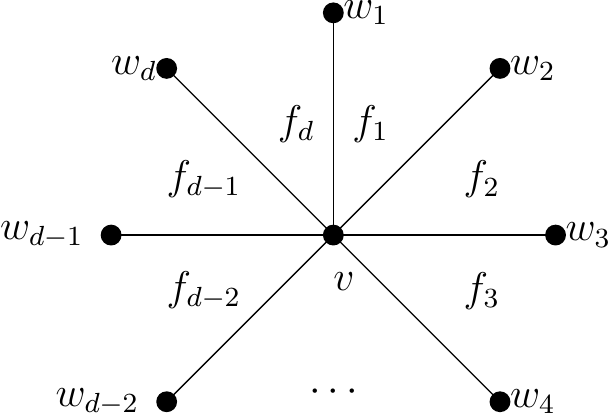}
		\label{figExpStepEx1}
	}
	\qquad
	\subfloat[after expansion]{
		\includegraphics[width=.565\textwidth]{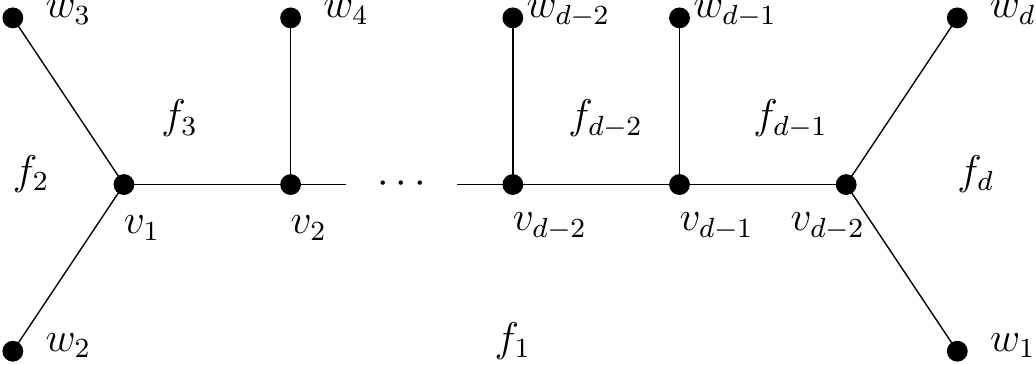}
		\label{figExpStepEx2}
	}
	\caption{Expanding a vertex $v$, where $f_1$ is a layer with lowest layer
 	number.}
 	\label{figExpStepEx}
\end{figure} 
As outlined before, the central step in constructing a width-$(3k-1)$ tree
decomposition of a $k$-outerplanar graph $G$ is splitting the vertices of degree
$d > 3$ into a path of $d-2$ vertices of degree $3$ without increasing the
outerplanarity index of the graph $G$ (see above).
Since we cannot mimic this expansion
step in MSOL directly, we have to find another characterization of this
method, the first step of which is to partition the vertices of a
$k$-outerplanar graph into its \emph{stripping layers}.
\begin{definition}[Stripping Layer of a $k$-Outerplanar Graph]
	Let $G$ be a $k$-outerplanar graph. Removing the vertices on the outer face of
	an embedding of $G$ is called a \emph{stripping step}. When applied
	repeatedly, the set of vertices being removed in the $i$-th stripping step is
	called the $i$-th \emph{stripping layer} of $G$, where $1 \le i \le k$.
\end{definition}
\begin{lemma}\label{lemKOPGLayerDef}
	Let $G = (V, E)$ be a $k$-outerplanar graph. The partition of $V$ into the
	stripping layers of $G$ is existentially MSOL-definable with $k$ parameters.
\end{lemma}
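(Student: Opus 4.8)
The plan is to take the $k$ parameters to be vertex sets $L_1,\ldots,L_k$, with the intended meaning that $L_i$ is the $i$-th stripping layer of a suitable $k$-outerplanar embedding of $G$; membership of a vertex $v$ in layer $i$ is then read off directly as $v \in L_i$. The whole task reduces to writing a parameter-validation predicate $\psi(L_1,\ldots,L_k)$ in MSOL that holds exactly when the guessed sets arise as the stripping layers of some genuine $k$-outerplanar embedding. I would build $\psi$ as a conjunction of three groups of conditions: a partition condition, a layering condition on edges, and an \emph{outer-face correctness} condition for each $L_i$.

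The first two groups are routine. Partitioning of $V$ (pairwise disjointness and covering), together with the requirement that the nonempty layers form a prefix $L_1,\ldots,L_j$ of the sequence, is immediately expressible in MSOL since $k$ is fixed and only finitely many set-membership and set-equality statements are involved. For the layering condition I would assert that no edge joins two non-adjacent layers, i.e.\ for every edge $\{v,w\}$ there is no pair $i,j$ with $|i-j|\ge 2$, $v\in L_i$ and $w\in L_j$; as $k$ is a constant this is a finite conjunction over the indices. This condition is necessary because in a planar layered embedding an edge cannot skip a layer without crossing the enclosing cycle of the intermediate layer.

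The crux is the third group, which must pin down that each $L_i$ is \emph{exactly} the outer face of the graph $G_{\ge i} := G[L_i\cup\cdots\cup L_k]$ obtained after stripping the first $i-1$ layers. Since MSOL cannot speak about embeddings or faces directly, I would replace the geometric statement ``$L_i$ lies on the outer face of $G_{\ge i}$'' by the equivalent combinatorial statement that the graph obtained from $G_{\ge i}$ by adjoining a single apex vertex adjacent to precisely the vertices of $L_i$ is planar. The point is that this apex-augmented planarity can be tested \emph{inside} $G$ with $L_i$ as a set parameter: planarity is equivalent to excluding the minors $K_5$ and $K_{3,3}$, fixed-minor containment is MSOL-expressible, and the apex vertex never has to be named---its role is simulated by the parameter set $L_i$, since ``a branch set is adjacent to the apex'' translates to ``that branch set meets $L_i$''. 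To force $L_i$ to be the \emph{whole} outer face rather than an arbitrary face-bounding set, I would add a maximality clause: adjoining an apex adjacent to $L_i\cup\{w\}$ is non-planar for every $w\in V_{\ge i}\setminus L_i$. This universally quantified clause is again MSOL, and together with the layering condition and the connectivity of $G$ it is meant to force the nested sequence of maximal common faces to coincide with the successive outer faces of a single consistent embedding.

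The main obstacle, and the part needing the most care, is proving the combinatorial characterization itself: that on a $k$-outerplanar graph $G$, $\psi(L_1,\ldots,L_k)$ holds for some choice of parameters if and only if the $L_i$ are the stripping layers of one of its embeddings. The easy direction is to start from a genuine $k$-outerplanar embedding and verify each clause. The harder direction must show that the per-layer apex-planarity witnesses glue into one global $k$-outerplanar embedding whose repeated outer-face stripping reproduces the $L_i$; here the maximality clause is what should rule out a vertex of a deeper layer sneaking onto an earlier outer face, and the layering clause is what should guarantee the embeddings of $G_{\ge i}$ and $G_{\ge i+1}$ are mutually compatible. Once this equivalence is established, $\psi$ is the desired validation predicate and the partition into layers is existentially MSOL-definable with the $k$ parameters $L_1,\ldots,L_k$.
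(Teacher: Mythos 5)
Your high-level strategy---guessing the $k$ layer sets as set parameters and validating them with an MSOL predicate built from fixed-minor exclusion---is the same as the paper's, and your partition and layering clauses match the paper's requirement that edges only join equal or consecutive layers. The divergence, and the genuine gap, is in your third group of conditions. Your maximality clause is \emph{not} satisfied by genuine stripping layers, so the direction you call easy is in fact false. Concretely, take $G_{\ge i}$ to be a triangle $v_1v_2v_3$ with a pendant vertex $w$ attached to $v_1$: the embedding drawing $w$ inside the triangle is a legitimate ($k$-outerplanar) embedding whose stripping layers are $L_i=\{v_1,v_2,v_3\}$ and $L_{i+1}=\{w\}$, yet the graph obtained by adjoining an apex adjacent to $L_i\cup\{w\}$ is planar (it is $K_4$ on the apex and the triangle, plus $w$ placed in a face whose boundary contains both the apex and $v_1$), so your maximality clause rejects this genuine partition. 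To restore completeness you would have to prove that every $k$-outerplanar graph admits an embedding \emph{all} of whose layers are simultaneously face-maximal; this is a nontrivial embedding-exchange argument (enlarging the outer face of $G_{\ge i}$ by re-embedding need not stay compatible with the layers outside $G_{\ge i}$, nor keep the induced embedding of $G_{\ge i+1}$ consistent), and you do not supply it. The soundness direction---gluing the per-layer apex-planarity witnesses into one global $k$-outerplanar embedding---you correctly identify as the crux, but it is also left entirely unproven. So neither direction of the characterization your predicate relies on is established.

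For comparison, the paper (Proposition \ref{propKOPGL}) uses deliberately weaker per-layer conditions: $G[V_i]$ is outerplanar, tested by excluding $K_4$ and $K_{2,3}$ as minors of the \emph{induced subgraph} (no apex simulation needed), together with the consecutive-layer edge condition. For these, the forward direction really is immediate, and the converse is handled by an inductive construction of an embedding from the layers, including a fix-up step that moves vertices between consecutive sets when the glued embedding leaves them on the wrong face. If you want to salvage your route, the natural repair is to drop the maximality clause altogether---note that apex-planarity of $L_i$ in $G_{\ge i}$ already implies the paper's condition that $G[L_i]$ is outerplanar---but you would then still owe exactly the gluing argument that constitutes the paper's converse proof.
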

\begin{proof}
We first introduce another characterization of stripping layers of
$k$-outerplanar graphs, which we can use later to define our predicates.
\begin{proposition}\label{propKOPGL}
	Let $G = (V, E)$ be a $k$-outerplanar graph. A partition $V_1,\ldots,V_k$ of
	$V$ represents its stripping layers, if and only if:
	\begin{enumerate}[label={(\roman*)}]
	  \item $G[V_i]$ is an outerplanar graph for all
	  $i = 1,\ldots,k$.\label{propKOPGL1}
	  \item For each vertex $v \in V_i$, all its adjacent vertices are
	  contained in either $V_{i-1}, V_i$ or $V_{i+1}$.\label{propKOPGL2}
	\end{enumerate}
\end{proposition}
\begin{proof}
	($\Rightarrow$) Since in each step we remove the vertices on the outer face of
	the graph, it is easy to see that \ref{propKOPGL1} holds.
	For \ref{propKOPGL2},
 	suppose not. Wlog.\ assume that $v \in V_i$ has a neighbor $w$ in $V_{i+2}$.
 	Before stripping step $i$, $v$ lies on the outer face. Now, for $w$ to not lie
 	on the outer face after stripping step $i$, there needs to be a cycle crossing
 	the edge $\{v, w\}$, hence the embedding of $G$ is not planar and we
 	have a contradiction. \par
 	($\Leftarrow$) We use induction on $k$. The case $k = 1$ is trivial. Now
 	assume that $G = (V, E)$ is an $\ell$-outerplanar graph with a
 	partition of $V$ into $V_1,\ldots,V_\ell$ such that our claim holds. Let
 	$V_{\ell+1}$ be a set of vertices with neighbors only in $V_{\ell+1}$ and
 	$V_\ell$. We denote the corresponding edge set by $E_{\ell+1}$. Clearly,
 	placing the vertices in $V_\ell$ on the outer face results in an $(\ell + 1)$-outerplanar
 	embedding of the graph $G' = (V \cup V_{\ell + 1}, E \cup E_{\ell+1})$.
 	However, some vertices in $V_\ell$ might still lie on the outer face. Denote
 	this vertex set by $V_\ell^O$. We let $V_{\ell + 1}' = V_{\ell + 1} \cup
 	V_\ell^O$ and $V_\ell' = V_\ell \setminus V_\ell^O$. Then, the partition
 	$V_1,\ldots,V_{\ell-1}, V_{\ell}', V_{\ell+1}'$ satisfies our claim and the
 	result follows (reversing the indices of the sets in the partition).
 	 \end{proof}
	It is well known that a graph is outerplanar if it does not contain
	$K_4$, the clique of four vertices, and $K_{2, 3}$, the complete bipartite
	graph on two and three vertices, as a minor (cf.\ \cite[p. 112]{Die12},
	\cite{Sys79}). 
	Borie et al. showed that the fixed minor relation is MSOL-definable \cite[Theorem
	4]{BPT92}, so in our definition we use the predicates $\Minor_{K_4}$ and
	$\Minor_{K_{2, 3}}$ for stating the respective minor containment. The rest can
	be done in a straightforward way according to Proposition \ref{propKOPGL}. The
	details of the predicates can be found in Appendix \ref{appSecMSOLKOPG}, which
	conclude the proof of Lemma \ref{lemKOPGLayerDef}.
 \end{proof}
\begin{definition}[Layer Number]
	Let $G = (V, E)$ be a planar graph. The \emph{layer number} of a face is
	defined in the following way. The outer face gets layer number 0. Then, for
	each other face, we let the layer number be one higher than the minimum layer
	number of all its adjacent faces.\footnote{Unless stated otherwise, we call to
	faces adjacent, if they share an incident vertex.}
\end{definition}
\begin{proposition}\label{propKOPSLLN}
	Let $G = (V, E)$ be a $k$-outerplanar graph, $V_1,\ldots,V_k$ its stripping
	layers and $v \in V_i$. Each face $f$ incident to $v$ has either layer number
	$i$ or $i-1$. Furthermore, $f$ has layer number $i-1$,
	if the boundary of $f$ contains a vertex $w$ with $w \in V_{i-1}$.
\end{proposition}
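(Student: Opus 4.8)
The plan is to prove \Cref{propKOPSLLN} by relating the stripping-layer structure of $G$ to the layer numbers of faces in the planar embedding, arguing inductively on the stripping index $i$. The key observation I would exploit is the characterization from \Cref{propKOPGL}: a vertex $v \in V_i$ has all of its neighbors in $V_{i-1} \cup V_i \cup V_{i+1}$, and moreover $V_i$ consists precisely of the vertices that lie on the outer face after the first $i-1$ stripping steps have been performed. I would set up an auxiliary notion: for a face $f$, say $f$ is \emph{exposed at step $j$} if $f$ becomes (part of) the outer face once the first $j$ stripping layers are removed. The goal is then to show that the layer number of $f$ as defined (recursively via adjacent faces) coincides with the step at which $f$ first gets exposed, and to pin down which of the two values $i$ or $i-1$ occurs for a face incident to a vertex of $V_i$.

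The first step is to establish the lower bound, that every face $f$ incident to $v \in V_i$ has layer number \emph{at least} $i-1$. I would do this by induction: any face with layer number $j$ must, by definition, be adjacent through a chain of faces of strictly decreasing layer number down to the outer face (layer $0$), and each such adjacency crosses a vertex. Tracking the stripping layers of the vertices along this chain, I would argue that a face of small layer number cannot be incident to a vertex that survives many stripping steps, which forces $v \in V_i$ to sit on faces of layer number $\ge i-1$. The second step is the matching upper bound: since $v \in V_i$ lies on the outer face exactly after $i-1$ stripping steps, at that moment at least one incident face is the current outer face, and unwinding the recursive definition of layer number (each stripping step raises the relevant minimal layer numbers by exactly one) shows that this face has layer number exactly $i-1$ and every incident face has layer number at most $i$. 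Combining the two bounds yields that every incident face has layer number in $\{i-1, i\}$.

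For the second sentence of the statement, I would argue the implication directly from \ref{propKOPGL2}. If the boundary of $f$ contains some $w \in V_{i-1}$, then since $w$ is stripped one step earlier than $v$, the face $f$ is already adjacent to the outer face before step $i$ is performed, i.e.\ $f$ is exposed at step $i-1$; by the correspondence between exposure step and layer number established above, $f$ then has layer number $i-1$ rather than $i$. The contrapositive intuition is that a face gets layer number $i$ only when all its boundary vertices survive through step $i-1$, i.e.\ all lie in $V_i$ (together with possibly deeper layers, which cannot occur on a face also touching $V_i$ by \ref{propKOPGL2}).

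The main obstacle I anticipate is the careful bookkeeping in the inductive step relating the recursive face-layer definition to the stripping process, since ``adjacent faces'' here means faces sharing an incident \emph{vertex} (per the footnote) rather than an edge, so a single stripping step can affect the outer face in a nonlocal way. In particular, making precise the claim that each stripping step increments the minimal layer numbers of the newly relevant faces by exactly one — and that no face can ``skip'' a layer — requires a clean argument that the embedding restricted to $V_i \cup V_{i+1} \cup \cdots \cup V_k$ is itself a $(k-i+1)$-outerplanar embedding whose outer face corresponds to layer number $i-1$ in the original graph. I would handle this by invoking the definition of $k$-outerplanar embedding together with \Cref{propKOPGL}, which guarantees the stripping layers are exactly the concentric ``rings'' of the embedding, so that the face-layer recursion is forced to track the ring index step for step.
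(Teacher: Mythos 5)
Your proposal is correct and takes essentially the same approach as the paper: the paper's entire proof consists of the single observation that each stripping step turns a face of layer number $i$ into a face of layer number $i-1$, which is exactly the correspondence between layer numbers and your ``exposure steps.'' Your write-up simply makes explicit the bookkeeping (face merging under stripping, vertex-based face adjacency, and the two bounds $i-1$ and $i$) that the paper's one-line argument leaves implicit.
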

\begin{proof}
	We observe that removing all vertices on the outer face makes a face of layer
	number $i$ become a face of layer number $i-1$ and our claim follows.
 \end{proof}
The expansion step does not preserve facial adjacency, so in order to not
increase the outerplanarity index of the graph, one makes sure that all faces
are adjacent to a face with lowest layer number. We illustrate the expansion
step of a vertex in Figure \ref{figExpStepEx}.
Following the ideas of the proofs given in \cite[Section 13]{Bod98}, we define
another type of \emph{remember number} to implicitly represent the expansion
step for creating a tree decomposition of a $k$-outerplanar graph.
\begin{definition}[Face Remember Number]
	Let $G = (V, E)$ be a planar graph with a given embedding $\cE$ and $T = (V,
	F)$ a maximal spanning forest of $G$. The \emph{face remember number} 
	of $G$ w.r.t.\ $T$, denoted by $fr(G,T)$ 
	is the maximum number of fundamental cycles $C$ of $G$ given $T$, 
	such that $bd_E(f) \cap E(C) \neq \emptyset$, where
	$bd_E(f)$ denotes the boundary edges of a face $f$, over all 
	faces $f$ in $\cE$, excluding the outer face.
\end{definition}
\begin{figure}
	\centering
	\includegraphics[width=.3\textwidth]{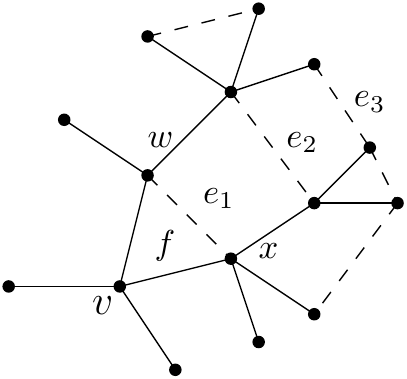}
	\caption{A spanning tree of a planar graph with some additional edges (dashed
	lines).
	The remember number of the face $f$, bounded by $bd(f) = \{v, w,
	x\}$, is 3 in this graph, since the fundamental cycles of the edges $e_1$,
	$e_2$ and $e_3$ intersect with $bd_E(f)$.}
	\label{figFREx}
\end{figure}
For an illustration of face remember numbers, see Figure \ref{figFREx}.
Now, consider the vertex $v_1$ in Figure \ref{figExpStepEx2} and let $e$ 
be an edge whose fundamental cycle $C_e$ uses $v_1$ in some spanning tree of
$G'$.
We observe that $C_e$ intersects with one of the face boundaries of $f_1$, $f_2$
or $f_3$.
Since $v_1$ is a vertex in the expanded graph, we know that in each tree
decomposition based on a spanning tree of $G'$ there will be a bag containing
one endpoint of each edge, whose fundamental cycle intersects with the face
boundary of $f_1$, $f_2$ or $f_3$.
Using this observation, we can also show that one can find a tree decomposition
of a planar graph, whose width is bounded by the face remember
number of a maximal spanning forest, without explicitly expanding vertices.
\begin{lemma}\label{lemPlTWERFR}
	Let $G = (V, E)$ be a planar graph with maximal spanning forest $T =
	(V, F)$. The treewidth of $G$ is at most $\max\{er(G, T) + 1, 3 \cdot fr(G,
	T)\}$.
\end{lemma}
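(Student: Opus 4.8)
The plan is to reduce the statement to Bodlaender's bound $\mathrm{tw}(H) \le \max\{vr(H, T_H), er(H, T_H)+1\}$ (\cite[Theorem 71]{Bod98}) applied to an \emph{expanded} version of $G$, and then to transfer the resulting decomposition back to $G$. Concretely, I first fix a planar embedding realizing $fr(G,T)$ and perform the vertex expansion: every vertex $v$ of degree $d > 3$ is replaced by a path $v_1, \ldots, v_{d-2}$ of degree-$3$ vertices, distributing the edges formerly incident to $v$ among the $v_i$ according to their cyclic order around $v$ in the embedding, keeping a lowest-layer face adjacent to each $v_i$, and inserting the path edges as new tree edges, exactly as sketched in Figure~\ref{figExpStepEx}. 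This yields a planar graph $G' = (V', E')$ of maximum degree $3$ together with a maximal spanning forest $T' = (V', F')$ extending $T$ (the new path edges join $F$; the non-tree edges of $G'$ are in bijection with those of $G$). By \cite[Theorem 71]{Bod98} it then suffices to bound $vr(G', T')$ and $er(G', T')$ in terms of $fr(G,T)$ and $er(G,T)$.

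For the vertex remember number I use the facial argument underlying the definition of $fr$. Let $u$ be a vertex of $G'$; since $\deg(u) \le 3$, it is incident to at most three faces of the embedding. If a fundamental cycle $C$ passes through $u$, it uses exactly two of the (at most three) edges at $u$, and these two edges lie on the boundary of one of the internal faces incident to $u$; hence $bd_E(f) \cap E(C) \ne \emptyset$ for some internal face $f$ incident to $u$. Thus every fundamental cycle through $u$ is charged to one of the at most three internal faces at $u$, and if each such face is crossed by at most $fr(G,T)$ fundamental cycles, we obtain $vr(G', T') \le 3 \cdot fr(G,T)$. For the edge remember number, the original tree edges of $T'$ retain precisely the fundamental cycles they carried in $T$, contributing at most $er(G,T)$, while the newly inserted path edges must be shown not to exceed this quantity (or to be absorbed into the $3\cdot fr(G,T)$ term), again using that a cycle traversing a path edge crosses the boundary of a face incident to that edge. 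Combining the two bounds with \cite[Theorem 71]{Bod98} produces a rooted tree decomposition of $G'$ of width at most $\max\{er(G,T)+1, 3\cdot fr(G,T)\}$.

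Finally I transfer the decomposition to $G$ by \emph{de-expansion}: in every bag, replace each path vertex $v_i$ by its original vertex $v$. Since the $v_i$ form a path in $T'$ and the decomposition is structured along $T'$, the bags containing the various $v_i$ form a connected subtree whose union is again connected, so the subtree axiom for $v$ is preserved; every edge of $G$ is an edge of $G'$ and hence still covered; and bag sizes can only shrink. Thus $G$ admits a tree decomposition of the same width, which proves the bound. The main obstacle is the second step: controlling the remember numbers of the expanded graph $G'$ in terms of the face remember number of the \emph{original} graph $G$. The delicate point is that expansion can route several straddling fundamental cycles through a single path edge, so one must verify that the facial edge-assignment (attaching edges in cyclic order and keeping a lowest-layer face adjacent) prevents any internal face of $G'$—and any newly created path edge—from being crossed by more fundamental cycles than the face-remember bound of $G$ permits, mirroring the counting in \cite[Lemmas~81 and~82]{Bod98}.
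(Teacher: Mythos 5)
Your route is genuinely different from the paper's: the paper never applies \cite[Theorem 71]{Bod98} to the expanded graph $G'$. Instead it constructs a tree decomposition of $G$ directly, creating for each vertex $v$ a path of bags of the form $\{v\} \cup C(f_1) \cup C(f_i) \cup C(f_{i+1})$, where $C(f)$ contains one endpoint of each non-tree edge whose fundamental cycle \emph{in $G$, with respect to $T$,} meets $bd_E(f)$, together with one bag per tree edge; the width bound is then immediate from $|C(f)| \le fr(G,T)$. This difference is not cosmetic, because the step you yourself flag as the main obstacle --- bounding $vr(G',T')$ and $er(G',T')$ by $\max\{er(G,T)+1,\, 3\cdot fr(G,T)\}$ --- is not just delicate: it is false, so the detour through \cite[Theorem 71]{Bod98} on $G'$ cannot be repaired. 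Concretely, let $G$ consist of $m$ triangles $w x_i y_i$ that pairwise share only the hub $w$, embedded \emph{nested} (each triangle drawn inside the previous one), and let $T$ contain all edges except the $m$ edges $\{y_i,w\}$. Then $er(G,T)=1$ and $fr(G,T)=2$ (each annulus between consecutive triangles is crossed by exactly two fundamental cycles, the inner disk by one, and the outer face does not count). The rotation of the $2m$ edges around $w$ consists of $m$ nested pairs, and the lowest-layer face at $w$ is the outer face, so your expansion routes the path along it; the rerouted fundamental cycle of $\{y_i,w\}$ then runs through the path vertices holding positions $i$ through $2m+1-i$. Hence \emph{all} $m$ rerouted cycles pass through the middle path vertex and use the middle path edge, giving $vr(G',T') = m$ and $er(G',T') = m$, while $3\cdot fr(G,T) = 6$; breaking the rotation at any other face still forces at least $m/2$ overlaps. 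Theorem 71 applied to $G'$ therefore only yields width about $m$, unbounded in terms of $er$ and $fr$. (The lemma itself is not contradicted: this $G$ has treewidth $2$.)

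The flaw in your charging argument is that it silently replaces $fr(G,T)$ by the face remember number of $G'$: a rerouted cycle through a middle path vertex uses two path edges, and these lie on the boundary of the routing face $f_1'$ of $G'$, so the cycle crosses $f_1'$ in $G'$ even though its original in $G$ need not cross $f_1$ at all (moreover $f_1'$ may be the outer face, which $fr$ does not even count). So ``each such face is crossed by at most $fr(G,T)$ fundamental cycles'' fails exactly for the face along which the expansion is routed, and the analogy with \cite[Lemmas 81 and 82]{Bod98} does not apply, since those lemmas concern $k$-outerplanar graphs with specially constructed spanning trees, not an arbitrary pair $(G,T)$ with bounded $fr$. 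The paper's direct construction is immune to this phenomenon precisely because it never requires a single bag to contain a representative of every fundamental cycle through $v$: the cycles are distributed along $v$'s bag path according to which faces of $G$ they cross. In the example above, the representative $y_i$ appears only in the bags of $w$ built from the two annuli incident to the $i$-th triangle, and the connectivity of its bags is supplied by going around that triangle, not through $w$'s bag path. Your final de-expansion step (replacing each $v_i$ by $v$ in all bags) is fine, but it has nothing left to transfer.
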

\begin{proof}
	Recall the vertex expansion step and see Figure \ref{figExpStepEx} for an
	illustration. In the following, we will construct a tree decomposition $(T,
	X)$ of the unexpanded graph $G$, imitating the ideas of the expansion step.
	That is, for each vertex $v \in V$ we create a path in $(T, X)$
	in the following way. First, we add $v$ to each of these bags. Let $f_1$ denote
	a face with lowest layer number of all faces incident to $v$ and let all face
	indices be as depicted in Figure \ref{figExpStepEx1}.\footnote{Note that by
	by Proposition \ref{propKOPSLLN}, this number will
	be either $i$ or $i-1$, if $v \in V_i$.}	
	Let $C(f_i)$ denote the set, containing one endpoint of each edge $e \in E
	\setminus F$, whose fundamental cycle $C_e$ intersects with the edge set of the
	boundary of the face $f_i$, i.e.\ $bd_E(f_i) \cap E(C_e) \neq \emptyset$. 
	Let $\deg(v) = d$. We create bags containing the vertices in
	$C(f_1) \cup C(f_i) \cup C(f_{i+1})$, where $i = 2,\ldots,d-1$. (For an edge
	$e_i$ incident to $v$, $f_i$ and $f_{i+1}$ are its incident faces.)
	We make two bags adjacent, if they share
 	two sets $C(f_i)$ and $C(f_j)$ and belong to the same vertex. Note that this
 	way we precisely imitate the construction of bags for the artificially created
 	vertices during the expansion step.
	\par
	Furthermore, for each edge $e_i \in F$, we create a bag containing both its
	endpoints and one endpoint of each edge $e_{fc} \in E \setminus F$, whose
	fundamental cycle uses $e$.
	We observe that the set $C(f_i) \cup C(f_j)$ contains precisely one
	vertex for each such edge $e_{fc}$, where $f_i$ and $f_j$ are the two faces
	incident to $e_i$.
	We then make this bag adjacent to each bag created in the step before, which 
	corresponds to both $C(f_i)$ and $C(f_j)$ and one more set
	$C(f')$. For each incident vertex there will always be precisely one such bag
	and hence, each edge bag will have two neighbors in the tree
	decomposition (one for each endpoint). For an illustration of the constructed
	part of the tree decomposition, see Figure \ref{fig2ConnPTDC}.
	\begin{figure}
		\centering
		\includegraphics[width=.95\textwidth]{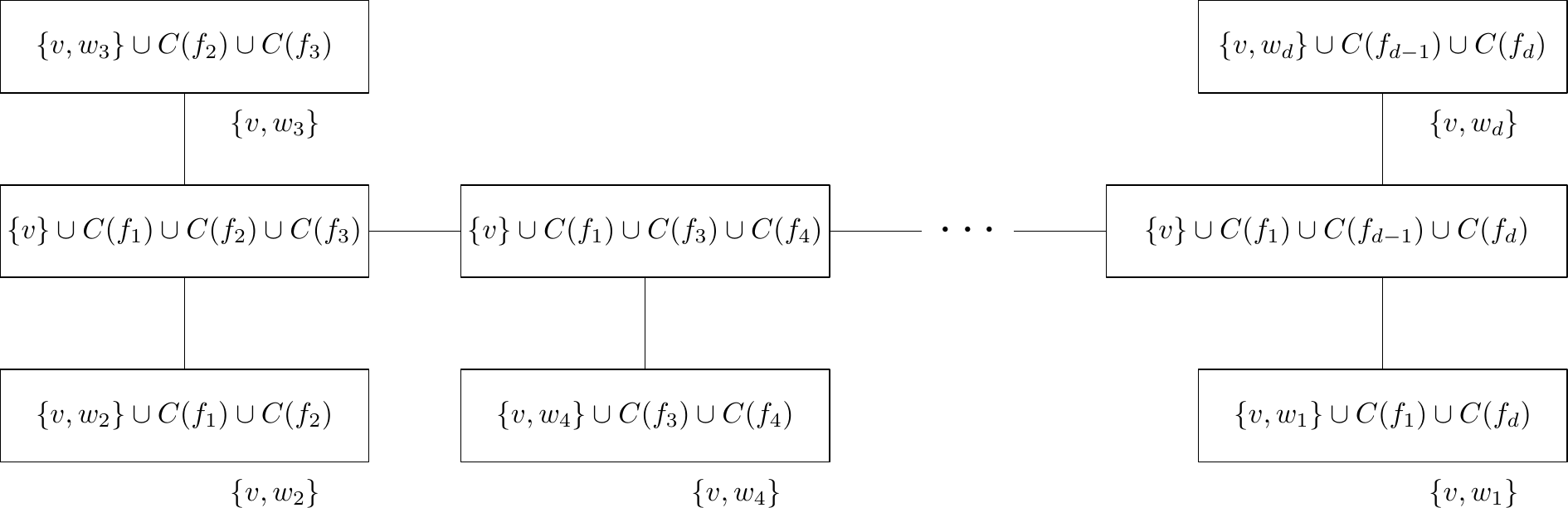}
		\caption{A part of a tree decomposition corresponding to a vertex, as
		used in the proof of Lemma \ref{lemPlTWERFR} (assuming, for explanatory
		purposes, that all incident edges of $v$ are contained in the maximal spanning
		forest of the graph).}
		\label{fig2ConnPTDC}
	\end{figure}
	\par
	One can verify that this construction yields a tree
	decomposition of $G$, and since we know that by definition $|C(f)| \le fr(G,
	T)$ for all faces $f$ (except the outer face) we know that its width is
	bounded by $\max\{er(G, T) + 1, 3 \cdot fr(G, T)\}$.
 \end{proof}
To apply this result to a $k$-outerplanar graph $G$, we show that we can
find a maximal spanning forest of $G$ of bounded edge and face remember number.
\begin{lemma}\label{lemKOPGERFR}
	Let $G = (V, E)$ be a $k$-outerplanar graph. There exists a maximal spanning
	forest $T = (V, F)$ of $G$ with $er(G, T) \le 2k$ and $fr(G, T) \le k$.
\end{lemma}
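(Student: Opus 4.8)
The plan is to build the maximal spanning forest $T$ from the outside in, guided by the stripping layers $V_1,\ldots,V_k$ of Proposition~\ref{propKOPGL}, and to bound $er$ and $fr$ by induction on the number of layers (mirroring the spanning-tree construction of \cite[Section 13]{Bod98}). The two structural facts I would lean on throughout are that, by Proposition~\ref{propKOPGL}\ref{propKOPGL2}, every edge of $G$ joins vertices of the same layer or of two consecutive layers, and that by Proposition~\ref{propKOPSLLN} every interior face incident to a vertex of $V_i$ has layer number $i$ or $i-1$, so each interior face is incident only to vertices of a pair $V_j \cup V_{j+1}$. Together these let me localize the fundamental cycles to narrow bands of layers, which is what makes a clean induction possible.

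First I would treat a single layer. Since each $G[V_i]$ is outerplanar by Proposition~\ref{propKOPGL}\ref{propKOPGL1}, I would choose a spanning forest of $G[V_i]$ from its face structure so that every \emph{intra-layer} non-tree edge closes a fundamental cycle bounding a single interior region: for each $2$-connected block I take a spanning tree whose non-tree edges form a spanning tree of the weak dual (a tree-cotree decomposition), so that each non-tree edge touches only the faces along one dual path. This keeps the intra-layer contribution to both remember numbers bounded by a small constant per layer. I would then connect the components arising from the inner part $G - V_1$ to the outer layer $V_1$ using exactly one edge per region in which an inner component is embedded, chosen among the $V_1$--$V_2$ edges; all remaining $V_1$--$V_2$ edges become non-tree edges.

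The core of the argument is the inductive step. Writing $G' = G - V_1$, which is $(k-1)$-outerplanar with layers $V_2,\ldots,V_k$, I take by induction a maximal spanning forest $T'$ of $G'$ with $er(G', T') \le 2(k-1)$ and $fr(G', T') \le k-1$, arrange that $T$ restricted to $G'$ equals $T'$, and classify the non-tree edges of $G$ with respect to $T$ into three families: edges lying entirely in $G'$, whose fundamental cycles are unchanged and contribute exactly the old counts; chords inside $V_1$, whose cycles are confined to layer $1$; and the surplus $V_1$--$V_2$ edges, whose cycles run from $V_1$ down through a single connecting edge into $V_2$ and back. The aim is to show that, for every interior face, the newly introduced cycles raise the face remember count by at most $1$, and for every tree edge raise the edge remember count by at most $2$, which yields $fr(G,T) \le k$ and $er(G,T) \le 2k$.

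I expect the main obstacle to be exactly this last bound for the surplus $V_1$--$V_2$ family. Faces incident to $V_2$ already carry crossings inherited from $T'$, and these cycles reach into $V_2$ and may cross them again, so the connecting edges and the outer-layer spanning tree must be coordinated so that each such face is crossed by at most one new cycle and each tree edge by at most two. Making this precise amounts to showing that, around each inner region, the surplus $V_1$--$V_2$ edges together with that region's unique connecting edge induce fundamental cycles that are pairwise nested, and hence hit each face boundary and each tree edge a bounded number of times. I would formalize this by orienting the connecting edges radially and tracking, for every face of layer $1$ or $2$, the band of the annulus between the two layers that its boundary occupies.
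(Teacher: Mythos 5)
Your route---induction on the stripping layers, keeping the inductive forest $T'$ of $G'=G-V_1$ and adding an outer-layer forest plus connecting edges---is the same route the paper intends, since its entire proof is a one-line deferral to the layer-stripping induction of Lemma~81 of \cite{Bod98}. Measured against what that induction actually has to deliver, however, your plan has a genuine gap at its core. The claim that the fundamental cycle of a surplus $V_1$--$V_2$ edge ``runs from $V_1$ down through a single connecting edge into $V_2$ and back'' does not follow from the induction hypothesis: the part of that cycle inside $G'$ is the $T'$-path from the foot $x$ of the connecting edge to the $V_2$-endpoint $w$ of the surplus edge, and a forest $T'$ about which you only know $er(G',T')\le 2(k-1)$ and $fr(G',T')\le k-1$ may route this path through arbitrarily deep layers. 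Moreover, for a fixed connecting edge all these paths emanate from the same vertex $x$, so a single edge of $T'$ (and any face bordered by it) can lie on the paths to many different endpoints $w$; nestedness does not help, because nested cycles share exactly such a common path, and then the ``$+2$ per tree edge, $+1$ per face'' budget is exceeded by an unbounded amount. To make your band-locality claim true you must strengthen the induction hypothesis, e.g.\ demand in addition that $T'$ contains a spanning tree of the outer layer of each component of $G'$ (which exists, since the outer face boundary of a connected plane graph is connected); only then do $T'$-paths between $V_2$-vertices stay inside $V_2$. A second accounting hole: the layer-$1$ (annulus) faces of $G$ have part of their boundary on the outer face of $G'$, and $fr(G',T')$ \emph{excludes} the outer face, so the number of ``inherited'' old cycles crossing an annulus face is not bounded by $k-1$ at all; some slack condition on the exterior face of $G'$ must be carried through the induction. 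In Bodlaender's setting these pile-up configurations are tamed by the maximum-degree-$3$ hypothesis of his Lemma~81, which you (necessarily) drop---that is precisely what makes the adaptation to $fr$ non-trivial rather than routine.

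Separately, the invariant you aim to maintain---at most one new cycle crossing each interior face per layer---is unattainable already in the base case, so the induction cannot close as formulated. In an outerplanar graph adjacent interior faces share boundary edges, so a fundamental cycle that equals a face boundary also meets every neighbouring face. Concretely, take $K_4$ minus an edge (a $4$-cycle $1234$ with chord $13$), which is $1$-outerplanar: it has exactly three cycles, every spanning tree leaves two of them as fundamental cycles, and any two of the three cycles share an edge lying on a common interior face boundary, so \emph{every} maximal spanning forest has $fr\ge 2>k=1$ (wheels behave similarly at $k=2$). Thus your intra-layer tree-cotree construction alone already consumes more than the $+1$ face budget, and in fact the bound $fr(G,T)\le k$ of Lemma~\ref{lemKOPGERFR} needs an additive slack; the paper's appeal to \cite{Bod98} conceals this, and the downstream width bounds (Lemma~\ref{lemPlTWERFR} and after) tolerate $fr\le k+O(1)$, but any completed proof along your lines must run the induction with such a weaker, attainable face budget rather than the one you state.
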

\begin{proof}
	The proof can be done analogously to the proof of Lemma 81 in \cite{Bod98}.
 \end{proof}

\subsection{3-Connected $k$-Outerplanar Graphs}\label{secKOP3C}
We now show that the construction of the tree decomposition given in the
proofs of Lemmas \ref{lemPlTWERFR} and \ref{lemKOPGERFR} is existentially
MSOL-definable for 3-connected $k$-outerplanar graphs. In Particular we will
make use of the fact that the face boundaries of a 3-connected planar 
graph can be defined by a predicate in monadic second order
logic. We will then define an ordering of all incident edges of a vertex to
create a path in the tree decomposition as described in the proof of Lemma
\ref{lemPlTWERFR}.
\par
A classic result by Whitney states that every 3-connected planar graph has a
unique embedding \cite{Whi32} (up to the choice of the outer face).
Reconstructing this proof, Diestel has shown that the face boundaries of this
embedding can be characterized in strictly combinatorial terms.
\begin{proposition}[Proposition 4.2.7 in \cite{Die12}]\label{prop3ConnFB}
	The face boundaries in a 3-connected planar graph are precisely its
	non-separating induced cycles.
\end{proposition}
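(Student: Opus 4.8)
The plan is to establish the equivalence as a purely combinatorial description of the face boundaries, exploiting that by Whitney's theorem \cite{Whi32} a $3$-connected planar graph has an essentially unique embedding, so that ``face boundary'' is a well-defined notion independent of the chosen drawing. I would fix such an embedding on the sphere $S^2$ (so the outer face plays no special role) and prove the two inclusions separately, using the Jordan curve theorem as the main topological tool: a cycle $C$, drawn as a simple closed curve, splits $S^2$ into two open disks $O_1, O_2$, and any connected subgraph of $G - V(C)$, being disjoint from this curve, must lie entirely within one of them.

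I would treat the ``easy'' direction first, namely that every non-separating induced cycle $C$ is a face boundary. Since $C$ is non-separating, $G - V(C)$ is connected, and it is nonempty because $G$ is $3$-connected while $C$, being induced, has no chords (so $G = C$ would force $G$ to be a plain cycle, which is not $3$-connected). Hence $G - V(C)$ lies in exactly one of the two disks, say $O_1$. Then $O_2$ contains no vertex of $G$, and it contains no edge either, since any such edge would have both endpoints on $C$ and would thus be a chord, contradicting that $C$ is induced. Therefore $O_2$ is a face with boundary $C$.

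For the converse, let $C$ be the boundary of a face $f$. As $G$ is $3$-connected it is in particular $2$-connected, so by the standard fact that in a $2$-connected plane graph every face is bounded by a cycle (cf.\ \cite{Die12}), $C$ is a cycle. To see that $C$ is \emph{induced}, suppose it had a chord $e = \{x, y\}$. Since $f$ is empty, $e$ lies in the disk on the other side of $C$, where it splits that disk into two regions bounded by $e$ together with one of the two $x$--$y$ arcs of $C$; because no edge of a planar graph may cross $e$, any path between the interior vertices of the two arcs must pass through $x$ or $y$, and as both arcs have interior vertices ($x, y$ being non-consecutive) this makes $\{x, y\}$ a $2$-cut, contradicting $3$-connectivity. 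To see that $C$ is \emph{non-separating}, suppose $G - V(C)$ were disconnected; all of it lies in the disk $O_1 \neq f$, so two of its components attach to $V(C)$ along two non-crossing arcs of $C$ (otherwise two disjoint connected subgraphs inside a disk would be forced to cross). Two vertices of $C$ then separate these components, again yielding a $2$-cut and contradicting $3$-connectivity.

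The main obstacle is not the logical skeleton but the rigorous justification of the two ``$2$-cut'' claims, each resting on a careful application of the Jordan curve theorem together with Whitney's uniqueness result. The delicacy is genuine, since both conclusions fail for merely $2$-connected graphs: two triangles sharing an edge have a chord across their outer $4$-cycle whose endpoints form a separating pair, and a triangle with two further degree-$2$ vertices attached to different pairs of its vertices bounds a face yet separates the graph. Making the non-crossing argument precise --- that is, extracting the separating pair $\{p, q\}$ from the planar arrangement of the two components on the boundary circle --- is the step I would expect to demand the most care.
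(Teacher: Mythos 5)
Be aware first that the paper contains no proof of this statement: it is imported verbatim as Proposition 4.2.7 of Diestel \cite{Die12}, so the natural reference point is Diestel's own proof. Measured against that, your first inclusion (a non-separating induced cycle bounds a face) is correct and is essentially the standard argument, and your chord argument in the converse direction is also sound. The genuine gap is the last step of the converse: from ``two of its components attach to $V(C)$ along two non-crossing arcs of $C$'' you conclude that ``two vertices of $C$ then separate these components.'' The non-interleaving of the attachment sets is indeed forced by planarity, but a separating pair cannot be read off from the two arcs, because \emph{further} components of $G - V(C)$ may attach both to the endpoints of those arcs and to the gaps between them, and then every pair of delimiting vertices fails. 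Concretely: let $C$ be the cycle $1,2,\ldots,10$ and place three independent vertices $u, v, w$ inside the disk, with $u$ adjacent to $1,2,3$, $v$ adjacent to $5,6,7$, and $w$ adjacent to $3,5,9$; this is plane, the attachment arcs $[1,3]$ of $u$ and $[5,7]$ of $v$ are non-crossing, yet none of the pairs $\{3,5\}$, $\{7,1\}$, $\{4,8\}$, $\{4,9\}$, $\{4,10\}$ separates $u$ from $v$ --- either the path $u,3,w,5,v$ or the path around the far side of $C$ survives. (Pairs that do separate them exist, e.g.\ $\{1,3\}$ or $\{5,7\}$, but these are \emph{consecutive attachments of a single bridge}, not delimiters between the two arcs, so your recipe does not produce them.)

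The missing idea is the following refinement. Fix one component $D_1$ with bridge $B_1$; the face of the plane graph $C \cup B_1$ inside the disk that contains a second component $D_2$ meets $C$ in an arc between two \emph{consecutive} attachments $a, a'$ of $B_1$; then $\{a, a'\}$ is a $2$-cut, because everything drawn inside that face, together with the open arc between $a$ and $a'$, is closed under adjacency in $G - \{a, a'\}$ --- and for this closure you need that $C$ has no chords, i.e.\ the inducedness you established beforehand, so the order of your two claims matters. Alternatively, Diestel sidesteps the bridge analysis entirely: for $x, y \notin V(C)$, Menger's theorem gives three independent $x$--$y$ paths whose union is a theta graph; the face $f$ lies inside a face of that theta graph, which is bounded by only two of the three paths, so the third path avoids the closure of that face, hence avoids $C$, and connects $x$ to $y$ in $G - V(C)$. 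A final small point: your appeal to Whitney's theorem \cite{Whi32} is unnecessary --- it suffices to prove the equivalence for one fixed embedding, and since the right-hand side is embedding-independent, the well-definedness of face boundaries is a corollary of the proposition rather than a prerequisite (indeed, in Diestel's development Whitney's theorem is deduced from this very proposition, so assuming it there would be circular).
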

We immediately have the following.
\begin{proposition}\label{propKOFB3Conn}
	The face boundaries of a 3-connected planar graph are MSOL-definable.
\end{proposition}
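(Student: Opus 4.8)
The goal is to show that the face boundaries of a 3-connected planar graph are MSOL-definable. By Proposition~\ref{prop3ConnFB}, these boundaries are precisely the non-separating induced cycles, so the plan is to write an MSOL-predicate $\phi_{FB}(C)$ over a vertex set $C$ (or, depending on the representation, an edge set) that is true exactly when $C$ induces such a cycle. I would decompose the task into three independently definable conditions and take their conjunction.

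First I would express that $C$ is a cycle. A vertex set $C$ induces a cycle precisely when $G[C]$ is connected and every vertex of $C$ has exactly two neighbors inside $C$; both are standard MSOL-expressible properties. Connectivity of an induced subgraph is definable by the usual formula asserting that no proper nonempty subset of $C$ is closed under the ``$C$-internal neighbor'' relation, and the degree-two condition is a direct first-order statement using the incidence predicate $\Inc$ together with membership tests. Second, I would assert that $C$ is \emph{induced}, i.e.\ that the cycle has no chords: for all distinct $u, w \in C$ that are non-consecutive on the cycle, $\{u,w\} \notin E$. Since ``consecutive on the cycle'' just means adjacent in $G[C]$, this is again a first-order condition over the elements of $C$. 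Third, I would express that $C$ is \emph{non-separating}, meaning $G[V \setminus C]$ is connected (equivalently, removing the cycle does not disconnect the rest of the graph); this is once more the MSOL connectivity predicate, now applied to the complement $V \setminus C$.

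Combining these, the predicate $\phi_{FB}(C) \equiv \text{``}C\text{ induces a cycle''} \wedge \text{``}C\text{ is induced (chordless)''} \wedge \text{``}V \setminus C\text{ is connected''}$ is MSOL-definable, and by Proposition~\ref{prop3ConnFB} it holds if and only if $C$ is the boundary of a face in the unique embedding guaranteed by Whitney's theorem. If the bag machinery of the later sections prefers boundaries represented as edge sets $bd_E(f)$ rather than vertex sets, I would add a routine wrapper predicate relating a vertex set $C$ to the set of edges with both endpoints in $C$, which is immediate from $\Inc$.

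I do not expect a serious obstacle here: every ingredient is a well-known MSOL-expressible graph property, and the essential mathematical content has already been supplied by Proposition~\ref{prop3ConnFB}, which converts the topological notion of a face boundary into the purely combinatorial notion of a non-separating induced cycle. The only point requiring a little care is that all three subformulas must be written so that connectivity is quantified over vertex \emph{subsets} (a genuinely monadic-second-order, rather than first-order, feature), but this is exactly the standard connectivity encoding and poses no difficulty. Thus the proposition follows essentially immediately from Proposition~\ref{prop3ConnFB} together with the MSOL-definability of connectivity and of the cycle and chordlessness conditions.
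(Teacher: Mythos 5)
Your proposal is correct and follows essentially the same route as the paper: invoke Proposition~\ref{prop3ConnFB} to reduce face boundaries to non-separating induced cycles, then express ``induced cycle'' and ``connected complement'' with the standard MSOL connectivity/cycle predicates, plus an edge-set wrapper. (Your separate chordlessness clause is redundant, since requiring the induced subgraph $G[C]$ itself to be a cycle already rules out chords, but this does not affect correctness.)
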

\begin{proof}
	We use Proposition \ref{prop3ConnFB} and define a predicate, which is true if
	and only if a vertex set $V'$ is the face boundary of a 3-connected planar
	graph in the following straightforward way.
	\begin{align*}
		\FaceB_3(V') \Leftrightarrow \Cycle(V', \IncE(V')) \wedge \Conn(V \setminus
		V', E \setminus \IncE(V')) 	
	\end{align*}
	We can use this predicate to define this notion in terms of
	edge sets as well.
	\begin{align*}
		\FaceB_3(E') \Leftrightarrow \FaceB_3(\IncV(E'))
	\end{align*} 
 \end{proof}
Using these observations, we can define predicates encoding the above mentioned
ordering on the incident edges of each vertex. We first need another definition.
\begin{definition}[Face-Adjacency of Edges]
	Let $G = (V, E)$ be a planar graph and $v \in V$. We call two incident edges
	$e, f \in E$ of $v$ \emph{face-adjacent}, if there is a face-boundary
	containing both $e$ and $f$.
\end{definition}
\begin{lemma}\label{lemKOPG3COrd}
	Let $G = (V, E)$ be a 3-connected $k$-outerplanar graph, $v \in V$ with
	$\deg(v) > 3$ and $e_\cA$ an incident edge of $v$, called its \emph{anchor}.
	There exists an ordering $\oriNB(e, f)$, which mimics a clockwise (or
	counter-clockwise) traversal (in the unique embedding of $G$) on all incident
	edges of $v$, starting at $e_\cA$, which is existentially MSOL-definable with
	two parameters $e_\cA$ and $e_\cA'$.
\end{lemma}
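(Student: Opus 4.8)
The plan is to build the ordering predicate $\oriNB(e,f)$ by exploiting the two structural facts established just above: in a $3$-connected planar graph the face boundaries are exactly the non-separating induced cycles (Proposition \ref{prop3ConnFB}), and these are MSOL-definable via $\FaceB_3$ (Proposition \ref{propKOFB3Conn}). Around any vertex $v$ the incident edges appear in a cyclic rotation in the unique Whitney embedding, and two consecutive edges in this rotation are precisely those that are \emph{face-adjacent}, i.e.\ that share a face boundary. So the ``successor'' relation on incident edges is almost captured by the face-adjacency predicate; the task is to orient and linearize it, starting at the anchor $e_\cA$ and breaking the cyclic symmetry.

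First I would define, for a fixed vertex $v$, the local successor relation: $e$ and $f$ are incident to $v$, are face-adjacent, and lie on a common face boundary $F$ (guessed as a vertex/edge set satisfying $\FaceB_3$). Since $\deg(v)>3$ and the graph is $3$-connected, each incident edge of $v$ is face-adjacent to exactly two others (the two faces it borders), so this relation is a single cycle through all incident edges of $v$. To turn this undirected cyclic adjacency into a clockwise traversal I would use the orientation machinery of Lemma \ref{lemtwkOrd}: since $G$ has treewidth at most $3k-1$, any edge orientation is existentially MSOL-definable, and in particular I can orient the edges so that, at the distinguished vertex $v$, the rotation is traversed consistently. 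Concretely, the anchor $e_\cA$ fixes the start and the second parameter $e_\cA'$ fixes the first step (hence the direction, clockwise versus counter-clockwise) out of $e_\cA$; from there the successor is forced, because each edge has a unique face-adjacent neighbor other than the one we just came from. The linear order $\oriNB$ is then obtained by declaring $\oriNB(e,f)$ true when $e$ precedes $f$ along this forced traversal, which I would express as reachability from $e_\cA$ before $f$ along the oriented successor relation; reachability/transitive-closure of a definable binary relation over a bounded-treewidth graph is itself MSOL-definable, so this stays within MSOL with the two named parameters plus the $3k-1+2$ colour parameters supplied by Lemma \ref{lemtwkOrd}.

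The main obstacle is pinning down ``clockwise'' purely combinatorially and making the direction globally coherent. Whitney's theorem gives a unique embedding only up to reflection and choice of outer face, so there is a genuine twofold ambiguity in the rotation at $v$; the role of the second parameter $e_\cA'$ is exactly to resolve it by specifying which face-adjacent neighbour of $e_\cA$ counts as the first successor. I would therefore be careful to show that once $e_\cA$ and $e_\cA'$ are fixed, the face-adjacency relation deterministically produces the full rotation with no branching, which relies on the $3$-connectivity: it guarantees that the two faces meeting at each incident edge are distinct and that the rotation is a simple cycle rather than a more degenerate structure. The subtle point to verify is that the guessed face-boundary sets used at successive steps are mutually consistent, so that the ``come from / go to'' bookkeeping genuinely walks once around $v$; this is where I expect the bulk of the technical care to go, and I would handle it by encoding the whole traversal as a single existentially-quantified linear order on the incident edges of $v$ and asserting, in one MSOL formula, that consecutive elements are face-adjacent, that $e_\cA$ is the minimum with $e_\cA'$ its immediate successor, and that every incident edge of $v$ appears exactly once. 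Verifying that such an order exists and is unique given the parameters then completes the construction of $\oriNB$.
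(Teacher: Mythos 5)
Your structural setup is exactly the paper's: use the MSOL-definable face boundaries of a 3-connected planar graph (Proposition \ref{propKOFB3Conn}) to obtain the face-adjacency rotation of $\Inc(v)$, note via Proposition \ref{prop2connPlPDF} that this rotation is a single cycle, and use the two parameters $e_\cA, e_\cA'$ to break the cyclic and reflective symmetry. The gap is in the one step that makes the lemma non-trivial: how to turn the forced traversal into an MSOL predicate. Your concrete proposal is to encode ``the whole traversal as a single existentially-quantified linear order on the incident edges of $v$'' --- but a linear order is a \emph{binary relation}, and monadic second order logic only allows quantification over sets of vertices and edges; quantifying over a relation on $\Inc(v)$ is precisely what ``monadic'' forbids. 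Your alternative route via transitive closure is sound in principle (the transitive closure of an MSOL-definable binary relation is MSOL-definable by the standard closed-set trick), but it presupposes a definable \emph{directed} successor predicate, and your justification of the direction --- ``the unique face-adjacent neighbour other than the one we just came from'' --- is history-dependent: ``the one we just came from'' is not available to a formula that only sees the pair $(e,f)$. So as written, neither mechanism actually yields $\oriNB$.

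The missing idea (which is the paper's proof) is to stay at the level of edge \emph{sets}: for each incident edge $e$, the set $E_e$ of edges forming the face-adjacency path from $e_\cA$ to $e$ inside $\Inc(v)\setminus\{e_\cA'\}$ is uniquely determined --- excluding the co-anchor is what forces the direction --- and it can be axiomatized locally (each edge of $E_e$ other than the endpoints has exactly two face-adjacent neighbours in $E_e$, the endpoints exactly one; this is the predicate $\Path_F$). Then one defines the order pairwise by strict inclusion of these initial segments, $\oriNB(e,f) \Leftrightarrow \exists E_e\,\exists E_f\,(\Path_F(E_e,e_\cA,e)\wedge \Path_F(E_f,e_\cA,f)\wedge E_e\subset E_f)$, so only edge sets are ever quantified. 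Separately, your appeal to Lemma \ref{lemtwkOrd} should be dropped: it orients edges of $G$, not pairs of incident edges, so it does not apply to the successor relation; it is unnecessary once $e_\cA'$ fixes the direction; and the $3k+1$ extra parameters it carries would contradict the lemma's claim that two parameters, $e_\cA$ and $e_\cA'$, suffice.
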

\begin{figure}[t]
	\centering
	\subfloat[A vertex $v$ with the anchor edge $e_\cA$ and edge $e_\cA'$.]{
		\includegraphics[width=.225\textwidth]{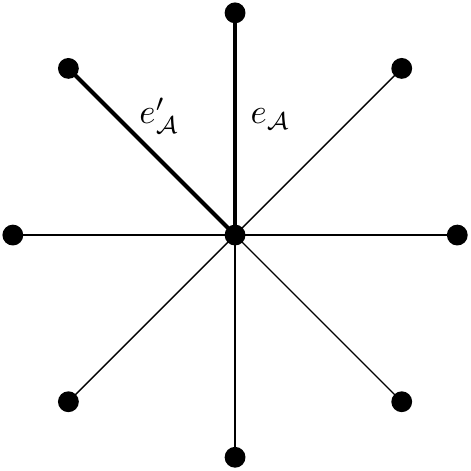}
		\label{figAnchorEx1}}
	\qquad
	\subfloat[The path from $e_\cA$ to $e_i$, according to face adjacency.]{
		\includegraphics[width=.225\textwidth]{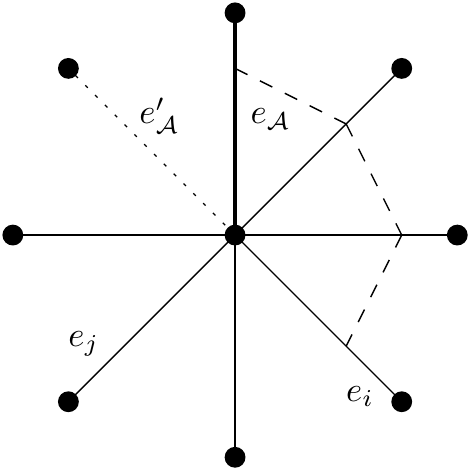}
		\label{figAnchorEx2}
	}
	\qquad
	\subfloat[The path from $e_\cA$ to $e_j$, according to face adjacency.]{
		\includegraphics[width=.225\textwidth]{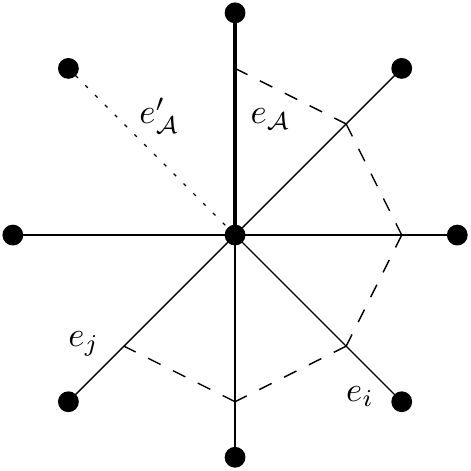}
		\label{figAnchorEx3}
	}
	\caption{A vertex $v$ with two edges $e_i$ and $e_j$, such that $\oriNB(e_i,
	e_j)$ as described in the proof of Lemma \ref{lemKOPG3COrd}, defining a
	clockwise ordering on the incident edges of $v$. Note that paths in the other
	direction starting at $e_\cA$ do not exists, since $e_\cA'$ cannot be included
	in such a path.}
	\label{figAnchorEx}
\end{figure}
\begin{proof}
	We first observe an important property of 2-connected planar graphs, which we
	will use to define the ordering later in the proof.
	\begin{proposition}\label{prop2connPlPDF}
		Let $G = (V, E)$ be a 2-connected planar graph and $v \in V$. Then, all faces
		incident to $v$ are pairwise different.
	\end{proposition}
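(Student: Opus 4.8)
The plan is to read the statement as saying that, in the fixed planar embedding of $G$, the faces encountered in the angular sectors (corners) around $v$ are pairwise distinct; equivalently, that no face of $G$ has $v$ on its boundary more than once. I would reduce this to the standard structural fact that in a $2$-connected plane graph every face is bounded by a cycle (the $2$-connected analogue of Proposition \ref{prop3ConnFB}; see \cite{Die12}).

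Concretely, the key steps are the following. First, I invoke that every face $f$ of the $2$-connected plane graph $G$ is bounded by a cycle, i.e.\ by a closed walk that visits each vertex of $G$ at most once. Second, I observe that this forces the boundary of any face $f$ incident to $v$ to pass through $v$ exactly once. Third, I conclude that $f$ therefore occupies exactly one of the $\deg(v)$ angular sectors into which the rotation at $v$ divides a small neighborhood of $v$. Since each such sector belongs to exactly one face and no face can claim two of them, the faces sitting in distinct sectors must be distinct, which is precisely the assertion that all faces incident to $v$ are pairwise different.

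Equivalently, and perhaps cleaner to write up, I would argue by contradiction: if a single face $f$ appeared in two different sectors at $v$, then its boundary cycle would have to traverse $v$ twice, so $v$ would be a cut vertex separating the two lobes of that cycle, contradicting the $2$-connectivity of $G$. I expect the only real care to be in making the correspondence between \emph{faces at corners} and the \emph{boundary walk} of a face precise, together with dispatching the degenerate low-degree cases; all of these are controlled by $2$-connectivity, which rules out both cut vertices and bridges, so the argument is short once the cycle-boundary theorem is in hand.
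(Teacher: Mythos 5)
Your proposal is correct and, in its second (contradiction) formulation, is essentially identical to the paper's own proof, which reads in full: ``Suppose not. Then $\{v\}$ is a separator of $G$.'' Your first route via the fact that every face of a $2$-connected plane graph is bounded by a cycle (Proposition 4.2.6 in \cite{Die12}) is a slightly more machinery-heavy packaging of the same idea, but both versions are sound.
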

	\begin{proof}
		Suppose not. Then $\{v\}$ is a separator of $G$.
	 \end{proof}
	Let $e_\cA'$ be another incident edge of $v$, which is also face-adjacent to
	$e_\cA$. (Note that there are exactly two such edges in $G$, the choice of
	which decides whether the ordering is clockwise or counter-clockwise.)
	For any pair of incident edges of $v$, $e_i$ and $e_j$, we let $\oriNB(e_i,
	e_j)$, if and only if we can find sets of edges $E_i$ and $E_j$ with the
	following properties. Let $\Inc(v)$ denote the set of incident edges of $v$.
	\begin{enumerate}[label={(\roman*)}]
	  \item For $\ell = i, j$, the set $E_\ell$ consists of the edge $e_\ell$,
	  $e_\cA$ and a subset of $\Inc(v) \setminus \{e_\cA'\}$ and contains precisely
	  all pairs of face-adjacent edges that, according to face-adjacency, form a
	  path from $e_\cA$ to $e_\ell$.
	  \item $E_i \subset E_j$.
	\end{enumerate}
	For an illustration of the meaning of these edge sets see Figure
	\ref{figAnchorEx}. 
	We now turn to defining this ordering in MSOL. By Proposition
	\ref{prop2connPlPDF}, we know that all faces adjacent to $v$ are pairwise
	different and hence, we can use Proposition \ref{propKOFB3Conn} to define paths
	in terms of face-adjacency in the unique embedding of $G$ between two incident
	edges of $v$. The predicates given in Appendix \ref{appSecMSOL3CKOP1} complete
	the proof.
 \end{proof}
Note that one can lead an alternative proof of Lemma \ref{lemKOPG3COrd}, using
the notion of \emph{rotation systems}, introduced in \cite{Cou00}. Furthermore
one can see that the relation $\oriNB(e, f)$ is existentially MSOL-definable for
a graph $G$ (as opposed to a single vertex, as stated in the Lemma) by replacing
the parameters in the formulation of Lemma \ref{lemKOPG3COrd} with the
corresponding edge set equivalents.

\subsubsection*{Defining the Tree Decomposition}
\begin{lemma}\label{lemKOP3CTDDef}
	Let $G = (V, E)$ be a 3-connected $k$-outerplanar graph. $G$ admits an
	existentially MSOL-definable tree decomposition of width at most $3k$ and
	maximum degree 3 with $4k + 4$ parameters.
\end{lemma}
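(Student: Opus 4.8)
The plan is to realize, inside monadic second order logic, the explicit tree decomposition constructed in the proof of Lemma~\ref{lemPlTWERFR} when it is applied to a maximal spanning forest $T = (V, F)$ whose existence is guaranteed by Lemma~\ref{lemKOPGERFR}. Since that construction yields width $\max\{er(G,T)+1,\, 3\cdot fr(G,T)\} \le \max\{2k+1, 3k\} = 3k$ and, by imitating the vertex expansion step, produces a tree of maximum degree $3$, it suffices to exhibit the parameters together with the $\Bag$- and $\Parent$-predicates that describe it; correctness of the resulting object as a width-$3k$ tree decomposition then follows from Lemma~\ref{lemPlTWERFR} together with the verification predicate of Lemma~\ref{lemExDefTD}. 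I would use $4k+4$ parameters, namely: the $k$ stripping layers $V_1,\dots,V_k$ (Lemma~\ref{lemKOPGLayerDef}); one edge-set parameter $F$ guessing the maximal spanning forest; the $3k+1$ parameters of Lemma~\ref{lemtwkOrd} (here $G$ has treewidth at most $3k-1$) that fix one global edge orientation; and the two anchor edge-sets $E_\cA, E_\cA'$ needed for the graph-wide version of the ordering $\oriNB$ from Lemma~\ref{lemKOPG3COrd}. Existential definability lets me simply choose all of these to be the correct values simultaneously: $F$ a good spanning forest, the orientation one that directs every tree edge towards a fixed root and selects, for every non-tree edge, the endpoint used in Bodlaender's bag construction, and the anchors chosen so that $\oriNB$ starts at a face of lowest layer number.

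With these parameters in hand, every ingredient of the construction becomes MSOL-expressible. The fundamental cycle of a non-tree edge $e \in E \setminus F$ is the unique path between its endpoints in $(V,F)$ closed by $e$, which is definable from $F$. Face boundaries are definable by Proposition~\ref{propKOFB3Conn} via $\FaceB_3$, so for every face $f$ I can define the set $C(f)$ as the set obtained by taking, for each non-tree edge whose fundamental cycle meets $bd_E(f)$, the endpoint designated by the fixed orientation; this is exactly the set from the proof of Lemma~\ref{lemPlTWERFR}, and $|C(f)| \le fr(G,T) \le k$. For each vertex $v$ with $\deg(v) > 3$, the ordering of its incident edges (and hence of its incident faces) around $v$ is given by the graph-wide form of $\oriNB$ anchored at $E_\cA$; by Proposition~\ref{propKOPSLLN} the lowest-layer face $f_1$ of $v$ is recognizable from the stripping layers (a face incident to $v \in V_i$ has layer $i-1$ exactly when its boundary meets $V_{i-1}$), so I choose the anchors on $f_1$ and obtain precisely the cyclic order $f_1, f_2, \dots$ used in Figure~\ref{figExpStepEx1}.

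Next I would encode the bags. A vertex-bag of $v$ is identified by using one of its incident edges $e_i$ as witness: given $v$ and $e_i$, the two faces of $v$ bordering $e_i$ are the two face boundaries through $e_i$ incident to $v$, say $f_i$ and $f_{i+1}$, and the bag is $\{v\} \cup C(f_1) \cup C(f_i) \cup C(f_{i+1})$; a constant number of types is enough to record which endpoint of $e_i$ plays the role of $v$ and to exclude the two incident edges that do not witness a bag. A tree-edge-bag is witnessed by an edge $e \in F$ with both faces $f, f'$ of the embedding bordering $e$, and equals $\IncV(\{e\}) \cup C(f) \cup C(f')$. The $\Parent$-predicate then links consecutive vertex-bags of $v$ (those sharing two of the sets $C(f_1), C(f_i), C(f_{i+1})$) into a path, attaches each tree-edge-bag to the unique vertex-bag of each endpoint that contains both of its $C$-sets, and orients everything towards the root using the chosen direction of $F$; this reproduces the adjacencies of Lemma~\ref{lemPlTWERFR} and hence the maximum degree $3$. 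Correctness and the width bound follow from Lemmas~\ref{lemPlTWERFR} and~\ref{lemKOPGERFR}, and the whole object is a genuine tree decomposition by Lemma~\ref{lemExDefTD}.

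I expect the main obstacle to be the witnessing of the vertex-bags. Because the degree of a vertex is unbounded while each bag must be named by a single vertex or edge together with a type from a fixed finite set, one cannot index the $\deg(v)-2$ bags of $v$ by a bounded position; the remedy is to let the incident edge $e_i$ itself serve as the witness and to recover its two bordering faces $f_i, f_{i+1}$ purely combinatorially, using that in a $3$-connected planar graph these faces are non-separating induced cycles (Proposition~\ref{prop3ConnFB}). The second delicate point, again stemming from unbounded degree, is to guarantee the connectivity (subtree) condition globally: the endpoint chosen in every $C(f)$ must be consistent across all bags in which it appears, so that each vertex occupies a connected subtree of $(T,X)$. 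This is what forces the careful, simultaneous choice of the orientation and the anchors, and it is the step where the structural guarantees behind Lemma~\ref{lemPlTWERFR} do the real work; the explicit predicates can then be relegated to an appendix.
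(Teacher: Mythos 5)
Your proposal is correct and follows essentially the same route as the paper's own proof: it realizes the construction of Lemma~\ref{lemPlTWERFR} on the spanning tree of Lemma~\ref{lemKOPGERFR}, uses edges as bag witnesses with head/tail-style types to sidestep the unbounded degree, fixes the ordering via $\oriNB$ anchored at a lowest-layer face (identified through Proposition~\ref{propKOPSLLN}), and arrives at the same parameter accounting of $k + (3k+1) + 1 + 2 = 4k+4$. The two delicate points you flag (edge-witnessing of vertex-bags and consistency of the chosen endpoints in the sets $C(f)$ via a single global orientation) are exactly the ones the paper resolves in the same way.
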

\begin{proof}
	We mimic the construction given in the proof of Lemma \ref{lemPlTWERFR} and
	use the same notation. 
	We first prove the definability of the spanning tree, upon which the
	construction of our tree decomposition is based.
	\begin{proposition}\label{propKOP3CSpTr}
		Let $G = (V, E)$ be a 3-connected $k$-outerplanar graph. There exists a
		spanning tree $T = (V, F)$ of $G$ with $er \le 2k$ and $fr(G, T) \le k$, which
		is existentially MSOL-definable with one parameter, the edge set $F$ of $T$.
	\end{proposition}
	\begin{proof}
		By Lemma \ref{lemKOPGERFR} we know that such a spanning tree $T$ exists. We
		can use Proposition \ref{propKOFB3Conn} to define $T$ in MSOL, see Appendix
		\ref{appSecMSOL3CKOPGTD}.
	 \end{proof}
	We direct the spanning tree $T$ of Proposition \ref{propKOP3CSpTr} as shown in
	Lemma \ref{lemtwkOrd} to be a rooted tree, using a $3k$-coloring $\Gamma_G$ of
	$G$.
	Note that two colors would already suffice, but we will later use these color
	sets to impose an (arbitrary) orientation on the edges in $E \setminus F$ as well.
	\par
	We now choose the set of anchor and co-anchor edges $E_\cA$ and $E_\cA'$,
	respectively, to fix an ordering on the incident edges of a vertex as shown in
	Lemma \ref{lemKOPG3COrd}.
	For a vertex $v$, let $e_{\ell_1}$ and $e_{\ell_2}$ denote the edges bounding a
	face $f_\ell$ with lowest layer number. 
	(If there is more than one face with lowest
	layer number, we choose the one whose boundary has a shortest face-adjacency
	path from the unique incoming edge in the spanning tree $T$.)
	We then add $e_{\ell_1}$ to $E_\cA$ and
	$e_{\ell_2}$ to $E_\cA'$. Hence, we have that
	$\oriNB(e_{\ell_1}, e)$, for all incident edges $e$ of $v$.
%
%
	\par
	We define three types of bag predicates, all associated with edges. The first
	type, $\sigma$, contains the endpoints of an edge $e \in F$ in the spanning tree of
	$G$ and one endpoint of each edge, whose fundamental cycle uses $e$.
	Note for the following that we can identify an incident face of lowest layer
	number of each vertex by using Proposition \ref{propKOPSLLN} (for details see
	Appendix \ref{appSecMSOL3CKOPGTD}).
	\par 
	We fix an arbitrary orientation on all edges in $E \setminus F$ using the
	coloring $\Gamma_G$ together with the empty edge set (see Lemma
	\ref{lemtwkOrd}).
	Then we define two more types of bags,
	$\sigma_H$ and $\sigma_T$ for each edge $e_i \in \Inc(v) \setminus
	\{e_{\ell_1}, e_{\ell_2}\}$ for all $v \in V$. Let $e_i = \{v, w\}$ with
	orientation from $v$ to $w$, where $f_i$ and $f_{i-1}$ denote the incident
	faces of $e_i$.
	Then, we create a bag of type $\sigma_H$, containing $v$ and one endpoint
	of each edge in $C(v, f_\ell) \cup C(v, f_{i-1}) \cup C(v,
	f_i)$,\footnote{As opposed to the notation in the proof of Lemma
	\ref{lemPlTWERFR}, we use the vertex $v$ as an argument for sets $C$ as well to
	clarify that the faces we are considering in this step are incident faces of
	$v$.} 
	meaning that $\sigma_H$ is a type associated with the head vertex of an edge. 
	We similarly define a type associated with the tail vertex of an edge,
	$\sigma_T$, which is created in the same way as $\sigma_H$, except that it
	contains the tail vertex instead of the head vertex of $e_i$
	(in this case: $w$).
	\begin{figure}[t]
		\centering
		\subfloat[A vertex with incident (directed) edges. Fat edges are in
		the spanning tree.]{
			\includegraphics[width=.325\textwidth]{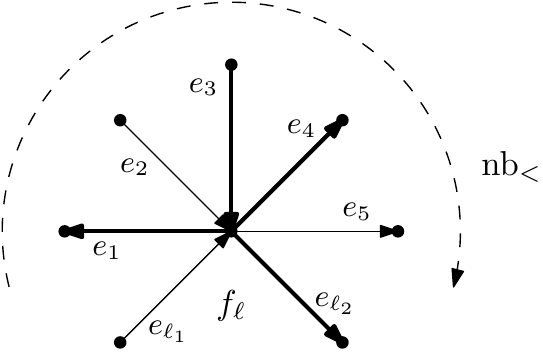}
			\label{figKOP3CTDDef1}
		}
		\qquad
		\subfloat[The corresponding part of the tree decomposition, where the
		edge-orientation describes the $\Parent$-relation.]{
			\includegraphics[width=.57\textwidth]{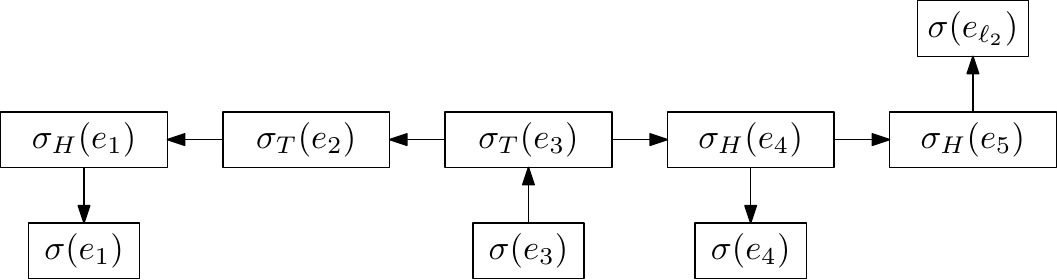}
			\label{figKOP3CTDDef2}
		}
		\caption{A component of a definable tree decomposition as described
		in the proof of Lemma \ref{lemKOP3CTDDef}, corresponding to a vertex $v$ with
		a clockwise ordering on its edges, anchored at $e_{\ell_1}$, where $f_\ell$ is
		a face with lowest layer number of all incident faces of $v$.}
		\label{figKOP3CTDDef}
\end{figure}
	\par
	We now turn to defining the $\Parent$-predicate. For an illustration of any of
	the below mentioned cases, we refer the reader to Figure \ref{figKOP3CTDDef},
	which gives an example of a part of a tree decomposition constructed for a
	vertex.
	\par
	First we consider bags of type
	$\sigma$. Let $e = \{v, w\} \in F$ such that $v$ is its tail vertex and denote
	the corresponding $\sigma$-bag by $X$. Then, we make $X$ the 
	parent of the bag $Y$ of type $\sigma_T$ for the edge $e$. If $v$ is the head
	vertex of $e$, then we make the bag $Y$ of type $\sigma_H$ for the edge
	$e$ the parent of the bag $X$. As mentioned above, we do not create bags of
	type $\sigma_H$ and $\sigma_T$ for the two edges bounding the fixed face with
	lowest layer number $f_\ell$ (for details see the proof of Lemma \ref{lemPlTWERFR}). 
	Let $e_\ell \in \{e_{\ell_1}, e_{\ell_2}\}$. Then, we make
	the bag $X$ of type $\sigma$ corresponding to $e_\ell$ the parent of a bag $Y$
	of type $\sigma_T$ corresponding to an edge $e$, if $e$ and $e_\ell$ bound a
	face together, which is adjacent (in this case, sharing an edge) to the face
	$f_\ell$. Analogously, we make $Y$ the parent of $X$, if $X$ is of type
	$\sigma_H$ for such an edge $e_\ell$.
	%
%
	\par
	Furthermore, we need to add edges between bags of types $\sigma_T$ and
	$\sigma_H$ as well. Note that by now, the only bag, which already has a parent
	is the bag of type $\sigma_T$ for the unique incoming edge $e^* \in F$ in the
	spanning tree of $G$.
	We use the ordering $\oriNB(e, f)$ of the incident edges of a vertex $v$ to
	make sure that the resulting tree decomposition is rooted.
	Let $\oriNBA(e, f)$
	express that two incident edges $e, f$ of $v$ are direct neighbors in the
	ordering $\oriNB(e, f)$. Suppose that $X^*$ is the $\sigma_T$-bag for the edge
	$e^*$ and $Y$ is either a $\sigma_H$- or $\sigma_T$-bag for an edge $f$ with either
	$\oriNBA(e^*, f)$ or $\oriNBA(f, e^*)$. In all of these cases, we make $X^*$
	the parent of $Y$, since $X^*$ already has a parent bag. We observe that we
	have to direct the remaining edges in such a way that they point away from the bag
	$X^*$. Let $e, f \in \Inc(v) \setminus \{e^*, e_{\ell_1}, e_{\ell_2}\}$
	with $\oriNBA(e, f)$, $X$ the $\sigma_H/\sigma_T$-bag of $e$ and $Y$ the
	$\sigma_H/\sigma_T$-bag of $f$. We have to analyze two cases. Note that always
	precisely one of the two holds. 
	\begin{enumerate}[label={(\roman*)}]
	  \item If $\oriNB(e^*, e)$, then make $X$ the parent of $Y$.
	  \item If $\oriNB(f, e^*)$, then make $Y$ the parent of $X$.
	\end{enumerate}
%
	This completes existentially defining the tree decomposition as constructed in
	the proof of Lemma \ref{lemPlTWERFR} in monadic second order logic for a
	3-connected $k$-outerplanar graph.
	\par 
	We now count the parameters used in this proof. To find a face with lowest
	layer number for each vertex, we need the partition into its stripping layers
	as shown in Lemma \ref{lemKOPGLayerDef}. For this step we need $k$ parameters.
	As explained above, for directing the edges of $G$ we use $3k$
	color sets ($G$ has treewidth at most $3k-1$ \cite{Bod98}) and one edge set
	(see Lemma \ref{lemtwkOrd}). We fix edge sets for the spanning tree and the
	anchors $E_\cA$ and co-anchors $E_\cA'$ of the edge ordering $\oriNB(e, f)$.
	Hence, total number of parameters is $4k+4$.
	\par
	The predicates given in Appendix \ref{appSecMSOL3CKOPGTD}
	complete the proof.
\end{proof}

\subsection{Implications of Hierarchical Graph Decompositions to
Courcelle's Conjecture}\label{secG3CC} 
A \emph{block decomposition} of a connected graph $G$ is a tree decompositions,
whose bags contain either the endpoints of a single edge or maximal 2-connected
subgraphs\footnote{Let $G = (V, E)$ be a graph and $W \subseteq V$.
$H = G[W]$ is called a \emph{maximal 2-connected subgraph} of $G$, if $G[W]$ is
2-connected and for all $W' \supset W$, $G[W']$ is not 2-connected.} of $G$
(called the \emph{blocks} of $G$) or a cut-vertex of $G$ (called the
\emph{cuts}) by making a block-bag adjacent to a cut-bag $\{v\}$ if the
block bag contains $v$ (see e.g.\ Section 2.1 in \cite{Die12}).
\par 
Analogously, Tutte showed that given a 2-connected graph (or a
block of a connected graph) one can find a \emph{3-block decomposition} into its
\emph{2-cuts} and \emph{3-blocks}, the latter of which are either 3-connected graphs or
cycles (but not necessarily subgraphs of $G$, see below), which can be
joined in a tree structure in the same way \cite[Chapter 11]{Tut66}
\cite[Section IV.3]{Tut84}.
Courcelle showed that both of these decompositions of a graph
are MSOL-definable \cite{Cou99} and also proved that one can find an
MSOL-definable tree decomposition of width 2, if all 3-blocks of a graph are
cycles \cite[Corollary 4.11]{Cou99}.
In this section, we will use these methods to prove Courcelle's Conjecture for
$k$-outerplanar graphs by showing that the results of the previous section can
be applied to define tree decompositions of 3-connected 3-blocks of a
$k$-outerplanar graph.
\par
As many of our proofs make explicit use of the structure of Tutte's
decomposition of a 2-connected graph into its 3-connected components, we will
now review this concept more closely. 
\begin{definition}[3-Block]
	Let $G = (V, E)$ be a 2-connected graph, $\cS$ a set of 2-cuts of $G$ and
	$W \subseteq V$. A graph $H = (W, F)$ is called a \emph{3-block}, if it can be
	obtained by taking the induced subgraph of $W$ in $G$ and for each incident
	2-cut $S = \{x,y\} \in \cS$, adding the edge $\{x,y\}$ to $F$ (if not already
	present), plus one of the following holds. 
	\begin{enumerate}[label={(\roman*)}]
	  \item $H$ is a cycle of at least three vertices (referred to as a
	  \emph{cycle 3-block}).
	  \item $H$ is a 3-connected graph (referred to as a
	  \emph{3-connected 3-block}).
	\end{enumerate}
\end{definition}
\begin{definition}[Tutte Decomposition]\label{defTutDec}
	Let $G = (V, E)$ be a 2-connected graph. A tree decomposition $(T = (N, F), X)$
	is called a \emph{Tutte decomposition} of $G$, if the following hold. Let $\cS$
	denote a set of 2-cuts of $G$.
	\begin{enumerate}[label={(\roman*)}]
	  \item For each $t \in N$, $X_t$ is either a 2-cut $S \in \cS$ (called the
	  \emph{cut bags}) or the vertex set of a 3-block (called the \emph{block
	  bags}).\label{defTutDec1}
	  \item Each edge $f \in F$ is incident to precisely one cut
	  bag.\label{defTutDec2}
	  \item Each cut bag is adjacent to precisely two block bags.\label{defTutDec4}
	  \item Let $t \in T$ denote a cut node with vertex set $X_t$. Then, $t$ is
	  adjacent to each block node $t'$ with $X_{t} \subset
	  X_{t'}$.\label{defTutDec3}
	\end{enumerate}
\end{definition}
Tutte has shown that additional restrictions can be formulated on the choice of
the set of 2-cuts, such that the resulting decomposition is unique for each
graph (for details see the above mentioned literature). In the following, when we
refer to \emph{the} Tutte decomposition of a graph, we always mean the one that
is unique in this sense, which is also the one that Courcelle defined in his work
\cite{Cou99}. Similarly, by a 3-connected 3-block (cycle 3-block, 2-cut etc.)
of a graph $G$ we mean a 3-connected 3-block in the Tutte decomposition of a block of $G$.
\par
We will now state a property of Tutte decompositions, which will be useful in
later proofs.
\begin{definition}[Adhesion]
	Let $(T = (N, F), X)$ be a tree decomposition. The \emph{adhesion} of $(T, X)$
	is the maximum over all pairs of adjacent nodes $t, t' \in N$ of $|X_t \cap
	X_{t'}|$.
\end{definition}
\begin{proposition}\label{propTutteDecAdh}
	Each Tutte decomposition has adhesion 2.
\end{proposition}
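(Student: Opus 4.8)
The plan is to show that every pair of adjacent nodes in a Tutte decomposition consists of exactly one cut bag and one block bag, and that their shared vertices are precisely the two vertices of the corresponding $2$-cut. First I would observe that by property \ref{defTutDec2} of Definition \ref{defTutDec}, each edge $f \in F$ of the decomposition tree is incident to precisely one cut bag; consequently the other endpoint of $f$ cannot also be a cut bag, and since every bag is either a cut bag or a block bag (property \ref{defTutDec1}), this endpoint must be a block bag. Hence every adjacent pair $(t, t')$ in $(T, X)$ consists of exactly one cut node and one block node, ruling out any block--block or cut--cut adjacency.

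Next I would fix such a pair, say $X_t = S = \{x, y\}$ a $2$-cut and $X_{t'}$ the vertex set of a $3$-block. By property \ref{defTutDec3}, a cut node is adjacent only to block nodes $t'$ satisfying $X_t \subset X_{t'}$, so in particular $S \subseteq X_{t'}$. Therefore $X_t \cap X_{t'} = X_t = S$, which has size exactly $2$ because $S$ is a $2$-cut. Taking the maximum of $|X_t \cap X_{t'}|$ over all adjacent pairs then yields exactly $2$, since every such intersection equals a $2$-cut; this establishes that the adhesion of any Tutte decomposition is $2$.

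As for obstacles, the argument is essentially a direct unwinding of the defining properties, so I do not anticipate a genuine difficulty. The one point that requires care is confirming that no edge of the decomposition tree joins two bags of the same kind, as the whole argument relies on every adjacency being of the cut--block form; but this is forced immediately by property \ref{defTutDec2} together with \ref{defTutDec1}, which guarantee the clean alternation between cut and block bags. With that alternation in hand, property \ref{defTutDec3} supplies the subset relation that collapses each intersection to a $2$-cut, completing the proof.
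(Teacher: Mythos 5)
Your proof is correct and follows essentially the same route as the paper, which simply states that the claim follows directly from properties \ref{defTutDec2} and \ref{defTutDec3} of Definition \ref{defTutDec}; your write-up is just a careful unwinding of that one-line argument (every tree edge joins a cut bag to a block bag, and the subset relation forces each such intersection to be the $2$-cut itself).
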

\begin{proof}
	The claim follows directly from Definition \ref{defTutDec} \ref{defTutDec2} and
	\ref{defTutDec3}.
 \end{proof}
 For the proof of the next lemma, we need the notion of \emph{$W$-paths}.
\begin{definition}[$W$-Path]
	Let $G = (V, E)$ be a graph, $W \subseteq V$ and $x, y \in V$. Then, a path
	$P_{xy} = (V_P, E_P)$ between $x$ and $y$ is called a \emph{$W$-path}, if $x, y
	\in W$ and $V_P \cap W = \{x, y\}$, i.e.\ $P_{xy}$ avoids all vertices in $W$
	except its endpoints.
\end{definition}
\begin{lemma}\label{lem3CCPlanar}
	Let $G = (V, E)$ be a 2-connected graph with Tutte decomposition $(T = (N,
	F), X)$. If $G$ is $k$-outerplanar, then all 3-connected 3-blocks
	$C = (W, F)$ of $(T, X)$ are at most $k$-outerplanar.
\end{lemma}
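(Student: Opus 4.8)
The plan is to show that a $k$-outerplanar embedding of the full graph $G$ restricts to a $k$-outerplanar (in fact, at most $k$-outerplanar) embedding of each $3$-connected $3$-block $C$. The key subtlety is that a $3$-block $C = (W, F)$ need not be a subgraph of $G$: for each incident $2$-cut $S = \{x, y\} \in \cS$ we may have added a \emph{virtual edge} $\{x, y\}$ that is absent from $G$. So I cannot simply take the induced embedding of $G[W]$; I must argue that each such virtual edge can be drawn without violating planarity or increasing the outerplanarity index. The natural tool is a $W$-path: since $G$ is $2$-connected and $S$ is a $2$-cut, for each virtual edge $\{x, y\}$ there is a connected piece of $G$ hanging off $\{x, y\}$ (one of the components separated by $S$, together with $x$ and $y$), which contains an $x$--$y$ path $P_{xy}$ internally disjoint from $W$. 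The idea is to route the virtual edge $\{x, y\}$ along (a thin neighborhood of) this path $P_{xy}$ in the given planar embedding of $G$.

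First I would fix a $k$-outerplanar embedding $\cE$ of $G$ and consider the subdrawing induced on $W$ together with the routed virtual edges. Because the various $W$-paths used for distinct $2$-cuts lie in vertex-disjoint ``branches'' of the Tutte decomposition (each edge of $T$ is incident to exactly one cut bag by Definition \ref{defTutDec} \ref{defTutDec2}, so the pieces attached at distinct $2$-cuts of $C$ occupy disjoint regions of the plane), the routed virtual edges can be drawn without crossing each other or the edges of $G[W]$. This yields a planar embedding $\cE_C$ of $C$. The main content is then the outerplanarity bound: I would argue that the layer number of any vertex in $\cE_C$ does not exceed its layer number in $\cE$. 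Intuitively, the stripping process on $C$ can only be faster than on $G$, since deleting the outer face of $C$ removes at least as much as the corresponding step on the restriction of $G$ — rerouting a virtual edge along a $W$-path never pushes a vertex of $W$ to a strictly higher layer, because the path $P_{xy}$ lives in layers that are no deeper than those of its endpoints (this is where Proposition \ref{propKOPSLLN} and the adhesion-$2$ property of Proposition \ref{propTutteDecAdh} come in).

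Concretely, I would use induction on the stripping layers, or a direct argument comparing outer faces. For the induction, I would show that a vertex $w \in W$ lying on the outer face of $G$ restricted appropriately also lies on the outer face of $\cE_C$, so that the first stripping step on $C$ removes at least the layer-$1$ vertices of $W$; then I would peel one layer and recurse on the subgraph induced by the remaining layers, noting that rerouted virtual edges stay within the concentric structure. The formal version establishes that the $i$-th stripping layer of $C$ is contained in $\bigcup_{j \le i} V_j \cap W$, where $V_1, \ldots, V_k$ are the stripping layers of $G$; since $G$ has only $k$ layers, $C$ has at most $k$, which is exactly the claim.

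The hard part will be controlling the embedding of the virtual edges and ensuring the rerouting argument is airtight, in particular verifying that routing $\{x, y\}$ along $P_{xy}$ does not increase the layer number of any retained vertex and that simultaneous rerouting for all incident $2$-cuts of $C$ stays planar. I would lean on the fact that in a $2$-connected planar graph all faces incident to a vertex are distinct (Proposition \ref{prop2connPlPDF}) and on the uniqueness of the embedding of $3$-connected planar graphs (via Whitney, Proposition \ref{prop3ConnFB}), so that once planarity and the layer bound are established, the resulting $\cE_C$ is essentially forced. The combinatorial bookkeeping of which layer each endpoint of a virtual edge and each vertex of its $W$-path occupies is the step most likely to require care.
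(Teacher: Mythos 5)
Your proposal follows essentially the same route as the paper's proof: realize each virtual edge of the 3-block by a $W$-path of $G$ (the paper contracts these paths to single edges rather than routing drawn edges alongside them, which is topologically the same move), use the adhesion-2 property of the Tutte decomposition to see that $W$-paths for distinct incident 2-cuts are internally disjoint, and conclude that the resulting planar embedding of $C$ has outerplanarity index at most $k$. One small correction: your ``formal version'' of the layer bound is stated in the wrong direction --- what you need (and what your informal claim correctly asserts) is that the union of the first $i$ stripping layers of $C$ \emph{contains} $(V_1 \cup \cdots \cup V_i) \cap W$, i.e.\ no vertex's layer number increases when passing to $C$; the containment as you wrote it (layer $i$ of $C$ inside $\bigcup_{j \le i} V_j \cap W$) is trivially satisfied, for instance, when $W \subseteq V_1$, and hence does not by itself bound the number of layers of $C$.
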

\begin{proof}
	We know that $W = X_t$ for some $t \in N$.
	Let $S = \{x, y\}$ denote a 2-cut of $G$, which is incident to $W$. If $\{x,
	y\} \in E$, we do not have to consider $S$ any further, so in the following, if
	we refer to a 2-cut $S$, we always assume that $\{x, y\} \notin
	E$. Since each such pair $\{x, y\}$ appears in precisely two 3-blocks
	(Definition \ref{defTutDec} \ref{defTutDec4}), we know that there is always at
	least one $W$-path between $x$ and $y$ in $G$.
	\begin{proposition}\label{propTDAdh2SP}
		Let $(T = (N, F), X)$ be a tree decomposition of adhesion 2 and $t \in T$. Let
		$P_1$ and $P_2$ denote two $X_t$-paths. If $P_1$ and $P_2$ share an internal
		vertex, then $P_1$ and $P_2$ have the same endpoints.
	\end{proposition}
	\begin{proof}
		Let $t \in N$. Then, all internal vertices of an
		$X_t$-path $P$ are contained in a set of bags of a unique component $T_t$ of
		$T[N \setminus \{t\}]$. Let $t' \in T_t$ be a neighbor of $t$.
		Then, the endpoints of $P_1$ and $P_2$ are contained in $X_t \cap X_{t'}$.
		Since $(T, X)$ has adhesion 2, both paths have to have the same endpoints.
	 \end{proof} 
	Let $G' = G[W]$ denote the induced
	subgraph of $G$ over the vertex set $W$. For each 2-cut $S$ incident to $W$ we
	add one $W$-path from $G$ to $G$', connecting the two vertices in $S$. Since
	$G$ is planar and $G'$ is a subgraph of $G$, we know that $G'$ is planar. 
	Since $(T, X)$ has adhesion 2 (Proposition \ref{propTutteDecAdh}), we know by
	Proposition \ref{propTDAdh2SP} that there is no pair of $W$-paths
	corresponding to two different incident 2-cuts, sharing an internal
	vertex.
	Hence, we can contract each of these paths
	to a single edge such that the embedding of $G'$ stays planar. Clearly, $G'$
	is isomorphic to $C$ after contraction and the outerplanarity index of $G'$ is
	less than or equal to $k$.
\end{proof}
\begin{figure}[t]
	\centering
	\includegraphics[width=.6\textwidth]{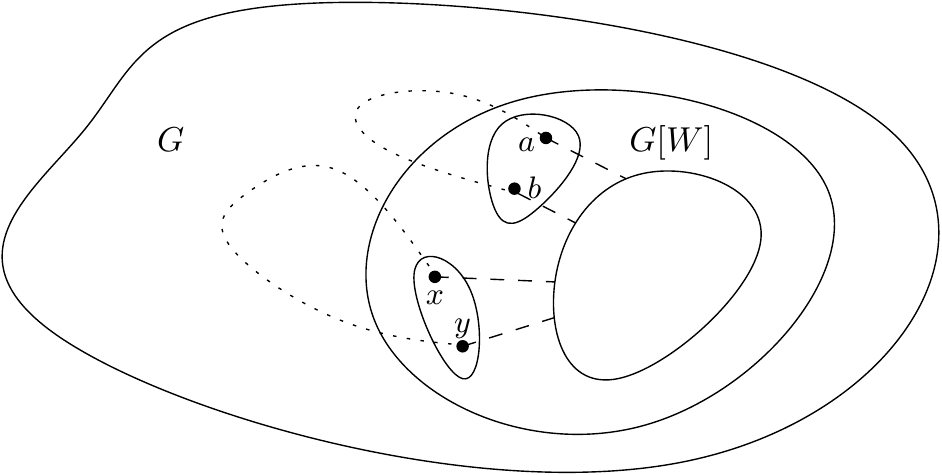}
	\caption{A 2-connected graph $G$ with induced subgraph $G[W]$ over the vertex
	set of a 3-connected 3-block of $G$ with incident 2-cuts $\{a, b\}$ and $\{x,
	y\}$.
	The dashed lines indicate that there might be several edges between a vertex and
	the depicted set and dotted lines represent ($W$-)paths in $G$.}
	\label{fig3ConnGenProof}
\end{figure}
For an illustration of the proof of Lemma \ref{lem3CCPlanar}, see Figure
\ref{fig3ConnGenProof}.
The ideas in this proof can be applied to more
general graph classes as well and we have the following consequence. For the
proof of statement \ref{cor2ConnClasses2}, we need the following definition.
\begin{definition}[Safe Separator \cite{BK06}]
	Let $G = (V, E)$ be a connected graph with separator $S \subset V$. $S$ is
	called a \emph{safe separator}, if the treewidth of $G$ is at most the maximum
	of the treewidth of all connected components $W$ of $G[V \setminus S]$, by
	making $S$ a clique in $G[W]$.
\end{definition}
\begin{corollary}\label{cor2ConnClasses}
	Let $G$ be a 2-connected graph with Tutte decomposition $(T, X)$.
	\begin{enumerate}[label={(\roman*)}]
	  \item If $G$ is planar, then the 3-connected 3-blocks of $(T, X)$
	  are planar.
	  \label{cor2ConnClasses1}
	  \item If $G$ is a partial $k$-tree, then the 3-connected 3-blocks
	  of $(T, X)$ are partial $k$-trees (for $k \ge 2$).\label{cor2ConnClasses2}
	  \item If $G$ is $\cH$-minor free, then the 3-connected 3-blocks of
	  $(T, X)$ are $\cH$-minor free, where $\cH$ is a set of fixed
	  graphs.\label{cor2ConnClasses3}
	\end{enumerate}
\end{corollary}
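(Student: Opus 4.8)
The plan is to prove each of the three statements in Corollary~\ref{cor2ConnClasses} by adapting the argument of Lemma~\ref{lem3CCPlanar}, whose core mechanism is insensitive to the particular graph property at hand. Recall the essential construction from that proof: for a 3-connected 3-block $C = (W, F)$ of the Tutte decomposition, we form the induced subgraph $G' = G[W]$ and, for each incident 2-cut $S = \{x, y\} \in \cS$ with $\{x, y\} \notin E$, we reinsert one $W$-path of $G$ connecting $x$ and $y$. By Propositions~\ref{propTutteDecAdh} and~\ref{propTDAdh2SP}, these $W$-paths are internally vertex-disjoint, so contracting each of them to a single edge transforms $G'$ into a graph isomorphic to $C$. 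The unifying observation is that $C$ is obtained from the subgraph $G' \sqsubseteq G$ by a sequence of edge contractions, hence $C$ is a \emph{minor} of $G$. Each of the three properties in question is minor-closed, so the conclusions follow almost immediately once this is established.

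First I would treat statement~\ref{cor2ConnClasses3} directly: $\cH$-minor-freeness is by definition closed under taking minors, so since $C$ is a minor of the $\cH$-minor-free graph $G$, the block $C$ is $\cH$-minor-free as well. Statement~\ref{cor2ConnClasses1} is then an instance of this, since planarity is equivalent to being $\{K_5, K_{3,3}\}$-minor-free by Wagner's theorem; alternatively, one repeats verbatim the planarity argument already spelled out in Lemma~\ref{lem3CCPlanar}, where the $W$-paths are contracted while preserving a planar embedding. For statement~\ref{cor2ConnClasses2}, the relevant fact is that having treewidth at most $k$ is also a minor-closed property, so $C$ being a minor of the partial $k$-tree $G$ immediately gives $\tw(C) \le k$.

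The one step that requires genuine care, and which I expect to be the main obstacle, is reconciling the abstract minor argument with the \emph{reinserted} edges $\{x, y\}$ that the definition of a 3-block adds for each incident 2-cut. These edges are present in $C$ but need not be present in $G' = G[W]$; they are precisely what the contracted $W$-paths supply. For statements~\ref{cor2ConnClasses1} and~\ref{cor2ConnClasses3} this is unproblematic, since the contraction of a genuine $W$-path yields exactly the edge $\{x, y\}$ and the minor relation absorbs it. For statement~\ref{cor2ConnClasses2}, however, one must ensure that adding the edge $\{x, y\}$ does not push the treewidth above $k$, and here the safe-separator machinery enters: the hypothesis $k \ge 2$ guarantees that each 2-cut $S = \{x, y\}$ is a \emph{safe separator} of $G$, so that making $S$ a clique (i.e.\ adding the edge $\{x, y\}$) does not increase the treewidth beyond that of the components. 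This is why the definition of a safe separator is invoked immediately before the corollary. Thus the cleanest route for~\ref{cor2ConnClasses2} is to argue that each incident 2-cut is safe, conclude that the treewidth of $G$ is the maximum of the treewidths of the augmented components, and identify $C$ with (a subgraph of) one such augmented component, yielding $\tw(C) \le \tw(G) \le k$.

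Finally, I would package the three parts uniformly by first isolating the lemma ``$C$ is a minor of $G$'' once, referencing the contraction construction of Lemma~\ref{lem3CCPlanar}, and then dispatching~\ref{cor2ConnClasses1} and~\ref{cor2ConnClasses3} in one line each via minor-closedness, reserving the slightly longer safe-separator argument for~\ref{cor2ConnClasses2}. The bulk of the work having already been done in Lemma~\ref{lem3CCPlanar}, the corollary is essentially a matter of recording which closure property each graph class enjoys and, in the treewidth case, verifying that the reinserted 2-cut edges are harmless.
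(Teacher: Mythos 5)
Your proposal is correct, and in its final packaged form it lands on essentially the paper's own proof: parts \ref{cor2ConnClasses1} and \ref{cor2ConnClasses3} via the contraction argument of Lemma \ref{lem3CCPlanar} (with \ref{cor2ConnClasses1} deduced from \ref{cor2ConnClasses3} by Wagner's theorem), and part \ref{cor2ConnClasses2} via the safe-separator result of Courcelle \cite{Cou99}: each cut bag $\{x,y\}$ is a safe separator, so some width-$k$ tree decomposition of $G$ has a bag containing both $x$ and $y$, and adding the virtual edge is then harmless. Where you genuinely differ from the paper is your opening observation that the 3-block $C$ is a \emph{minor} of $G$: since having treewidth at most $k$ is a minor-closed property, this observation alone already settles \ref{cor2ConnClasses2}, with no case analysis and no safe separators. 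Your later hedge --- that the reinserted virtual edges demand ``genuine care'' in the treewidth case --- is therefore both unnecessary and inconsistent with your own minor claim: the contracted $W$-paths supply exactly those edges for treewidth just as they do for planarity and $\cH$-minor-freeness, so the treewidth of $C$ is at most that of $G$, hence at most $k$ (this would even remove the need for the hypothesis $k \ge 2$; the case $k = 1$ is vacuous anyway, as a 2-connected partial 1-tree has no 3-connected 3-blocks). In short, your proposal contains two valid proofs of part \ref{cor2ConnClasses2}: the paper's safe-separator argument, which you reproduce as your ``cleanest route'', and your uniform minor argument, which is in fact the cleaner and more general of the two.
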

\begin{proof}
	\ref{cor2ConnClasses1} and \ref{cor2ConnClasses3} follow from the same 
	argumentation (and, clearly, \ref{cor2ConnClasses1} is a consequence of
	\ref{cor2ConnClasses3} by Wagner's Theorem \cite{Wag37}).
	For \ref{cor2ConnClasses2}, we observe the following. By
	\cite[Corollary 4.12]{Cou99} we know that each cut bag $S = \{x, y\}$ is a safe
	separator of $G$ and hence, there is a width-$k$ tree decomposition of $G$
	which has a bag $X_{xy}$ containing both $x$ and $y$. Subsequently, adding the
	edge between $x$ and $y$ does not increase the treewidth of a 3-connected 3-block
	$B_3$.
	(One simply performs a short case analysis of whether $X_{xy}$ is contained in
	the tree decomposition of $B_3$ or not.)
 \end{proof}

\subsubsection*{Replacing Edge Quantification by Vertex Quantification}
As discussed above, a 3-block is in general not a subgraph of a
graph $G$, as we add edges between the 2-cuts of the Tutte
decomposition to turn the 3-blocks into cycles or 3-connected graphs. Since
these absent edges cannot be used as variables in MSOL-predicates (which would
make our logic non-monadic), we need to find another way to quantify over them. \par
In \cite{Cou94}, Courcelle discusses several structures over which one can
define monadic second order logic of graphs, which we will now review.
\begin{definition}[cf.\ Definition 1.7 in \cite{Cou94}]
	Let $G = (V, E)$ be a graph. We associate with $G$ two relational structures,
	denoted by $|G|_1 = \langle V, \edg \rangle$ and $|G|_2 = \langle V \cup E,
	\edg' \rangle$. 
	\begin{enumerate}[label={(\roman*)}]
	  \item All MSOL-sentences and -predicates over $|G|_1$ only use vertices or
	  vertex sets as variables and we have that $\edg(x, y)$ is true for $x, y \in V$, 
	  if and only if there is some edge $\{x, y\} \in E$.
		MSOL-sentences and -predicates over $|G|_2$ use both vertices and edges and
		vertex and edge sets as variables. Furthermore, $\edg'(e, x, y)$ is true if
		and only if $e = \{x, y\}$ and $e \in E$.
		\item If we can express a graph property in the structure $|G|_1$, we call
		it \emph{1-definable} and if we can express a graph property in the
		structure $|G|_2$, we call it \emph{2-definable}.
	\end{enumerate}
\end{definition}
Clearly, the monadic second order logic we are using throughout this paper is
the one represented by the structure $|G|_2$. We use both vertex and edge
quantification and one simply rewrites $\Inc(v, e)$ to $\exists w~\edg'(e, v,
w)$. Since every 1-definable property is trivially also 2-definable, we can
conclude that both 1-definability and 2-definability imply MSOL-definability in
our sense. 
Some of the main results of \cite{Cou94} can be summarized as follows.
\begin{theorem}[\cite{Cou94}]\label{thm1DefEq2Def}
	1-Definability equals 2-definability for
	\begin{enumerate}[label={(\roman*)}]
	  \item planar graphs.
	  \item partial $k$-trees.
	  \item $\cH$-minor free graphs, where $\cH$ is a set of fixed graphs.
	\end{enumerate}
\end{theorem}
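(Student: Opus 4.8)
The inclusion \emph{1-definable $\Rightarrow$ 2-definable} is immediate: every relation available in $|G|_1$ is available in $|G|_2$, since $\edg(x,y)$ is equivalent to $\exists e\, \edg'(e,x,y)$, and a sentence over $|G|_1$ uses only vertex variables and vertex-set variables, both of which are legal in $|G|_2$. Hence the whole content of the theorem is the converse, and the plan is to exhibit, for each of the three classes, a translation turning any sentence $\psi$ over $|G|_2$ into a sentence $\psi'$ over $|G|_1$ such that $G \models_2 \psi$ if and only if $G \models_1 \psi'$ for every $G$ in the class.

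The only obstruction to such a translation is that $\psi$ may quantify over edges and over edge sets, whereas $\psi'$ may not. The plan is therefore to encode edges by vertices and edge sets by bounded tuples of vertex sets, exploiting the fact that all three classes have bounded arboricity: their edge sets decompose into a bounded number $a$ of forests (with $a \le 3$ for planar graphs by Nash--Williams, $a \le k$ for partial $k$-trees, and $a$ bounded by a function of $\cH$ in the $\cH$-minor-free case). First I would fix such a decomposition $F_1,\ldots,F_a$ and orient each forest towards a set of roots, so that every vertex has at most one outgoing edge in each $F_j$. This yields, for each index $j$, an injection from the edges of $F_j$ into $V$ sending an edge to its tail; an arbitrary edge $e$ is then encoded by a pair $(u,j)$ with $j \in \{1,\ldots,a\}$ a bounded tag and $u$ a vertex, and an arbitrary edge set $B$ is encoded by the $a$-tuple of vertex sets $(S_1,\ldots,S_a)$, where $u \in S_j$ exactly when the outgoing $F_j$-edge of $u$ lies in $B$.

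With this encoding in place, the translation proceeds by structural induction on $\psi$. Each edge-set variable becomes an $a$-tuple of vertex-set variables; each edge variable becomes a vertex variable together with a bounded case distinction over the tag $j$; quantification over edges (resp.\ edge sets) becomes quantification over vertices (resp.\ vertex sets), again with a bounded disjunction over tags; and the atom $\edg'(e,x,y)$ is replaced by the $|G|_1$-formula asserting that $x$ is the tail and $y$ the head of the edge encoded by the current (vertex, tag) pair, which is expressible via $\edg$ once the orientation is available. Because the original formula is a sentence, the encoding structure is guessed by an outermost block of existential vertex-set quantifiers, and $\psi'$ additionally asserts that these sets encode a legal rooted forest cover of all of $E$; since bounded arboricity guarantees that such a cover exists for every graph in the class, this guess succeeds precisely on the class, so $\psi'$ is equivalent to $\psi$ there.

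The hard part will be the definability of the forest cover and its orientation within $|G|_1$ itself: one must both guess the cover by vertex sets and verify, \emph{without} any edge-set quantification, that every edge is covered exactly once and that each $F_j$ is acyclic and correctly oriented. For partial $k$-trees this verification can be carried out along the lines of Lemma~\ref{lemtwkOrd}, whose colouring-plus-orientation machinery yields an MSOL-definable bounded-out-degree orientation; for planar and $\cH$-minor-free graphs one needs the analogous definable sparse orientations, which is precisely where the structural hypotheses on the class are consumed. Once the cover is definable, the inductive translation above is routine.
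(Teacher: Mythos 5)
The paper itself does not prove this theorem; it imports it wholesale from \cite{Cou94}, so your proposal has to be measured against Courcelle's proof there. Your overall strategy is in fact the right one and matches his: the only nontrivial direction is 2-definability $\Rightarrow$ 1-definability, and it is handled by exploiting uniform sparsity of the three classes --- partition $E$ into a bounded number of forests, orient each forest towards roots so that every vertex has at most one outgoing edge per forest, encode an edge by a (tail, tag) pair and an edge set by a bounded tuple of vertex sets, and translate formulas by structural induction. The trivial direction, the arboricity bounds (Nash--Williams for planar graphs, $|E(H)|\le k|V(H)|$ for partial $k$-trees, Mader/Kostochka--Thomason for $\cH$-minor-free graphs), and the inductive translation are all fine as sketched.

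The genuine gap is the step you yourself flag as ``the hard part'' and then dispatch with a reference that cannot do the job. Your encoding of edge sets presupposes that the oriented forest cover is available to the $|G|_1$-formula, i.e., that the relations ``$v$ is the head of the $F_j$-edge leaving $u$'' are definable in $|G|_1$ from vertex-set parameters only. But the cover itself consists of edge sets together with orientations, so ``guess the cover by an outermost block of existential vertex-set quantifiers and verify it'' is not even well-posed until one specifies a vertex-set representation of the cover --- and constructing such a representation is precisely the hard content of Courcelle's theorem, not a routine verification. Worse, the one concrete tool you invoke, Lemma~\ref{lemtwkOrd}, is circular in this role: its $k+2$ parameters include an \emph{edge set} $F$ (an edge $\{v,w\}$ is directed by testing membership in $F$ against the colouring), so it is a statement about existential definability in $|G|_2$, and it cannot be used inside $|G|_1$, where edge-set parameters are exactly what is forbidden; without the free choice of $F$, a colouring alone only yields orientations of unbounded out-degree (consider a star). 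Courcelle's proof in \cite{Cou94} has to work considerably harder at exactly this point, showing that uniformly sparse graphs admit orientations and edge-labelings definable in $|G|_1$ with vertex-set parameters; your proposal reduces the theorem to that lemma and leaves it unproven.
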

Hence, by Theorem \ref{thm1DefEq2Def} we know that we can rewrite each formula
using vertex and edge quantification to one only using vertex quantification,
if a graph is a member of one of these classes.
We will now show that this result can be used to implicitly quantify over
virtual edges of a graph, if these virtual edges can be expressed by an
(existentially) MSOL-definable relation. (For a similar application
of this result, see \cite[Problem 4.10]{Cou99}.)
\begin{lemma}\label{lem2DefVirt}
	Let $G = (V, E)$ be a graph which is a member of a graph class $\cC$ as stated
	in Theorem \ref{thm1DefEq2Def} and let $P$ denote a graph property, which is
	2-definable by a predicate $\phi_P$. Let $E' \subseteq V \times V$ denote a
	set of virtual edges, such that there is a predicate $\edg_{Virt}(v, w)$, which
	is true if and only if $\{v, w\} \in E'$. Then, $P$ is
	1-definable for the graph $G' = (V, E \cup E')$, if $G'$ is a member of $\cC$.
\end{lemma}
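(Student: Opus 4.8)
The plan is to reduce $1$-definability of $P$ over $G'$ to $2$-definability of a related property over $G$, and then invoke Theorem~\ref{thm1DefEq2Def} to collapse the two notions. The key observation is that the predicate $\phi_P$, which is $2$-definable and hence may quantify over the edge set $E \cup E'$ of $G'$, must be rewritten so that all references to edges become references to vertices. First I would construct, from $\phi_P$, a predicate $\phi_P'$ interpreted over the structure $|G|_2$ (that is, using the genuine edges $E$ of $G$ together with both vertex and edge quantification) as follows: every occurrence of $\edg'(e, x, y)$ in $\phi_P$ that is meant to range over $E \cup E'$ is replaced by the disjunction ``$\edg'(e, x, y)$ (a real edge of $G$) or $\edg_{Virt}(x, y)$ (a virtual edge)''. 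Since $\edg_{Virt}$ is itself a vertex-only predicate by hypothesis, this substitution is well-defined and yields a predicate $\phi_P'$ that is $2$-definable over $G$ and satisfies $G \models \phi_P' \Leftrightarrow G' \models \phi_P$, i.e.\ $\phi_P'$ correctly simulates the evaluation of $P$ on the augmented graph.

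Next I would apply the definitional collapse. Because $G$ belongs to one of the classes in Theorem~\ref{thm1DefEq2Def} (planar, partial $k$-tree, or $\cH$-minor free), and because $\phi_P'$ is $2$-definable over $G$, Theorem~\ref{thm1DefEq2Def} guarantees the existence of an equivalent $1$-definable predicate $\psi$ over $|G|_1$, using only vertex variables and the binary adjacency relation $\edg$ of $G$. The final step is to observe that this $\psi$, read over $G'$, is exactly a $1$-definition of $P$: the adjacency relation $\edg$ of $G$ together with the explicitly vertex-definable relation $\edg_{Virt}$ determines the adjacency relation of $G'$, so a sentence in the vocabulary of $|G|_1$ built over $G$ can equivalently be regarded as expressing $P$ on $G'$, giving $G' \models P \Leftrightarrow G \models \psi$ and hence the desired $1$-definability.

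The main subtlety, and the step I expect to require the most care, is the direction in which the two graphs $G$ and $G'$ enter the argument: Theorem~\ref{thm1DefEq2Def} is applied to $G$ (which we are told is a member of $\cC$), but the property $P$ lives on $G'$ (which we must \emph{separately} assume is in $\cC$ — hence the hypothesis ``if $G'$ is a member of $\cC$''). I would be careful to keep the two structures distinct throughout: the quantifier collapse is invoked purely as a statement about the single graph $G$ equipped with the simulated predicate $\phi_P'$, and only afterward do I reinterpret the resulting vertex-only formula as a statement about $G'$. The correctness of this reinterpretation hinges on the fact that $G'$ and $G$ share the same vertex set, and that adjacency in $G'$ is captured by $\edg(x,y) \vee \edg_{Virt}(x,y)$ over $G$; verifying that the translation preserves truth under this identification is the crux, and it is precisely what Lemma~\ref{lem2DefVirt} asserts can be done.
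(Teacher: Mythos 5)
Your overall plan (simulate the evaluation of $\phi_P$ on $G'$ inside a structure built from $G$, then invoke Theorem~\ref{thm1DefEq2Def}) contains the right ingredients, but you perform the two steps in the wrong order, and the first step as stated is false. Your substitution acts only on atomic subformulas $\edg'(e,x,y)$; it cannot repair the \emph{quantifiers}. Since $\phi_P$ is 2-definable, it may quantify over individual edges and over sets of edges, and when $\phi_P$ is evaluated in $|G'|_2$ these quantifiers range over $E \cup E'$ and its subsets. In $|G|_2$ the virtual edges of $E'$ are simply not elements of the domain, so no rewriting of atomic predicates can supply the missing witnesses; moreover, your replacement ``$\edg'(e,x,y) \vee \edg_{Virt}(x,y)$'' does not mention $e$ in the second disjunct, so every actual edge of $G$ would simultaneously ``act as'' every virtual edge, destroying any incidence or degree bookkeeping inside $\phi_P$. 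Concretely, take $P$ to be Hamiltonicity (2-definable via an existentially quantified edge set), $G$ the path $v_1 v_2 v_3$ and $E' = \{\{v_1,v_3\}\}$: then $P(G')$ holds, but no subset of $E$ can serve as the witnessing cycle in $G$, so the claimed equivalence $G \models \phi_P' \Leftrightarrow G' \models \phi_P$ fails. This is exactly the monadicity obstacle the paper raises immediately before the lemma (virtual edges ``cannot be used as variables in MSOL-predicates''); if your step 1 were available, the lemma would be nearly vacuous.

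The repair is to collapse \emph{first} and substitute \emph{afterwards}, which is what the paper's proof does. Since $P$ is 2-definable and $G' \in \cC$, Theorem~\ref{thm1DefEq2Def} yields a vertex-only formula $\phi_{P|1}$ that correctly defines $P$ on every member of $\cC$, in particular on $G'$ --- note that it is $G' \in \cC$, not $G \in \cC$, that licenses this application; in your write-up the hypothesis $G' \in \cC$ is declared crucial but never actually used, which is a symptom of the inverted order. A vertex-only formula refers to edges only through the binary relation $\edg$, never as domain elements, so at that point the substitution is sound: replacing each occurrence of $\edg(x,y)$ in $\phi_{P|1}$ by $\edg(x,y) \vee \edg_{Virt}(x,y)$ produces a formula over $|G|_1$ whose evaluation coincides with that of $\phi_{P|1}$ over $|G'|_1$, i.e.\ with $P(G')$. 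With that reordering, the remainder of your argument (keeping the structures $G$ and $G'$ distinct, and reading the resulting vertex-only formula as a 1-definition of $P$ for $G'$) goes through essentially as you describe.
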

\begin{proof}
	By Theorem \ref{thm1DefEq2Def}, $P$ is 1-definable for the graph $G$. Let
	$\phi_{P|1}$ denote the predicate expressing $P$ in $|G|_1$. We replace each
	occurrence of '$\edg(x, y)$' in $\phi_{P|1}$ by '$\edg(x, y) \vee
	\edg_{Virt}(x, y)$' and denote the resulting predicate by $\phi_{P|1}'$, which
	expresses the property $P$ for the graph $G'$ in $|G'|_1$. Since $G' \in \cC$,
	one can replace quantification over sets of virtual edges (or mixed sets of
	edges and virtual edges) by vertex set quantification in the same way as for
	$G$.
\end{proof}
For the specific case of $k$-outerplanar graphs, we can now derive the
following.
\begin{corollary}\label{corKOP2DefVirt}
	Let $G = (V, E)$ be a $k$-outerplanar graph and $P$ a graph property, which is
	(C)MSOL-definable for 3-connected $k$-outerplanar graphs. Let $B_3$ denote a
	3-block of $G$, including the virtual edges between all incident 2-cuts of $B_3$. Then,
	$P$ is (C)MSOL-definable for $B_3$.
\end{corollary}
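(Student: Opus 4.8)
The plan is to combine Corollary~\ref{cor2ConnClasses} with Lemma~\ref{lem2DefVirt}. By Corollary~\ref{cor2ConnClasses}\ref{cor2ConnClasses1}, the 3-connected 3-block $B_3$ is planar, and (being a 3-block in the Tutte decomposition of a block of $G$) it is in fact $k$-outerplanar by Lemma~\ref{lem3CCPlanar}; so $B_3$ lies in the class of $k$-outerplanar graphs, which is one of the classes covered by Theorem~\ref{thm1DefEq2Def} (planar graphs, and also partial $k$-trees since $B_3$ has treewidth at most $3k-1$). This membership is exactly the hypothesis $G' \in \cC$ needed to invoke Lemma~\ref{lem2DefVirt}.

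First I would describe $B_3$ explicitly as a graph $G' = (V', E \cap (V' \times V') \cup E')$, where $V' = X_t$ is the vertex set of the 3-block and $E'$ is the set of virtual edges added between the incident 2-cuts of $B_3$. The crucial observation is that $E'$ is an MSOL-definable relation: by Proposition~\ref{propTutteDecAdh} every incident 2-cut is a pair $\{x,y\}$ with $x,y \in V'$, and Courcelle's result that the Tutte decomposition is MSOL-definable \cite{Cou99} lets us write a predicate $\edg_{Virt}(x,y)$ that is true precisely when $\{x,y\}$ is an incident 2-cut of $B_3$ (equivalently, when $x$ and $y$ are shared with a neighboring block bag and $\{x,y\} \notin E$). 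This supplies the predicate $\edg_{Virt}$ required by Lemma~\ref{lem2DefVirt}.

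Next I would apply Lemma~\ref{lem2DefVirt} directly. Given that $P$ is (C)MSOL-definable for 3-connected $k$-outerplanar graphs, we have a predicate $\phi_P$ that is 2-definable over $|B_3|_2$. Since $B_3 \in \cC$ (a $k$-outerplanar, hence planar and partial $3k{-}1$-tree graph), Lemma~\ref{lem2DefVirt} converts $\phi_P$ into a 1-definable predicate $\phi_{P|1}'$ over $|G'|_1 = |B_3|_1$ that expresses $P$ for $B_3$ with its virtual edges included, replacing each occurrence of $\edg(x,y)$ by $\edg(x,y) \vee \edg_{Virt}(x,y)$ and rewriting all edge-set quantification as vertex-set quantification. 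The counting predicates $\bmod_{p,q}$ are unaffected by this rewriting, so the argument carries through verbatim from MSOL to CMSOL, giving (C)MSOL-definability of $P$ for $B_3$.

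The main obstacle I anticipate is verifying that $\edg_{Virt}$ is genuinely definable over $B_3$ rather than over the ambient graph $G$: the virtual edges live between 2-cuts of $G$, and one must ensure that the MSOL description of the Tutte decomposition restricts correctly to a single block bag $X_t$ and picks out exactly its incident 2-cuts. This is where the adhesion-2 property (Proposition~\ref{propTutteDecAdh}) and the explicit characterization of incident 2-cuts as shared non-edge pairs with neighboring block bags do the work, and it is the step that must be spelled out carefully to make the reduction to Lemma~\ref{lem2DefVirt} legitimate.
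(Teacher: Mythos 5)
Your proposal is correct and takes essentially the same route as the paper: both establish that $B_3$ with its virtual edges is still $k$-outerplanar via Lemma~\ref{lem3CCPlanar}, take the MSOL-definable 2-cut predicate from Courcelle's work \cite{Cou99} as $\edg_{Virt}$, and then apply Lemma~\ref{lem2DefVirt}. The restriction issue you flag at the end is in fact harmless for the same reason you give: by Definition~\ref{defTutDec}\ref{defTutDec3}, any 2-cut whose two vertices both lie in the block bag $B_3$ is an incident 2-cut of $B_3$, so the paper's global 2-cut predicate already picks out exactly the right pairs when evaluated over $B_3$.
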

\begin{proof}
	By \cite[Section 3]{Cou99} we know that there is a predicate $\phi_{\cC_2}(x,
	y)$, which is true, if and only if $\{x, y\}$ is a 2-cut in the Tutte
	decomposition of (a block of) $G$. We know that $B_3$ (including the virtual
	edges) is still $k$-outerplanar (Lemma \ref{lem3CCPlanar}). Hence let
	$\edg_{Virt}(x, y) = \phi_{\cC_2}(x, y)$ and apply Lemma \ref{lem2DefVirt}.
\end{proof}
Note that the statements of Lemma \ref{lem2DefVirt} and Corollary
\ref{corKOP2DefVirt} also hold for existential definability.

\subsubsection*{Defining the Tree Decomposition of a $k$-Outerplanar Graph} 
By Corollary \ref{corKOP2DefVirt} we now know that every graph property, which
can be defined for a 3-connected $k$-outerplanar graph, can also be defined
for a 3-block of any $k$-outerplanar graph $G$ (including its virtual edges).
\par
\begin{figure}[t]
	\centering
	\includegraphics[width=.75\textwidth]{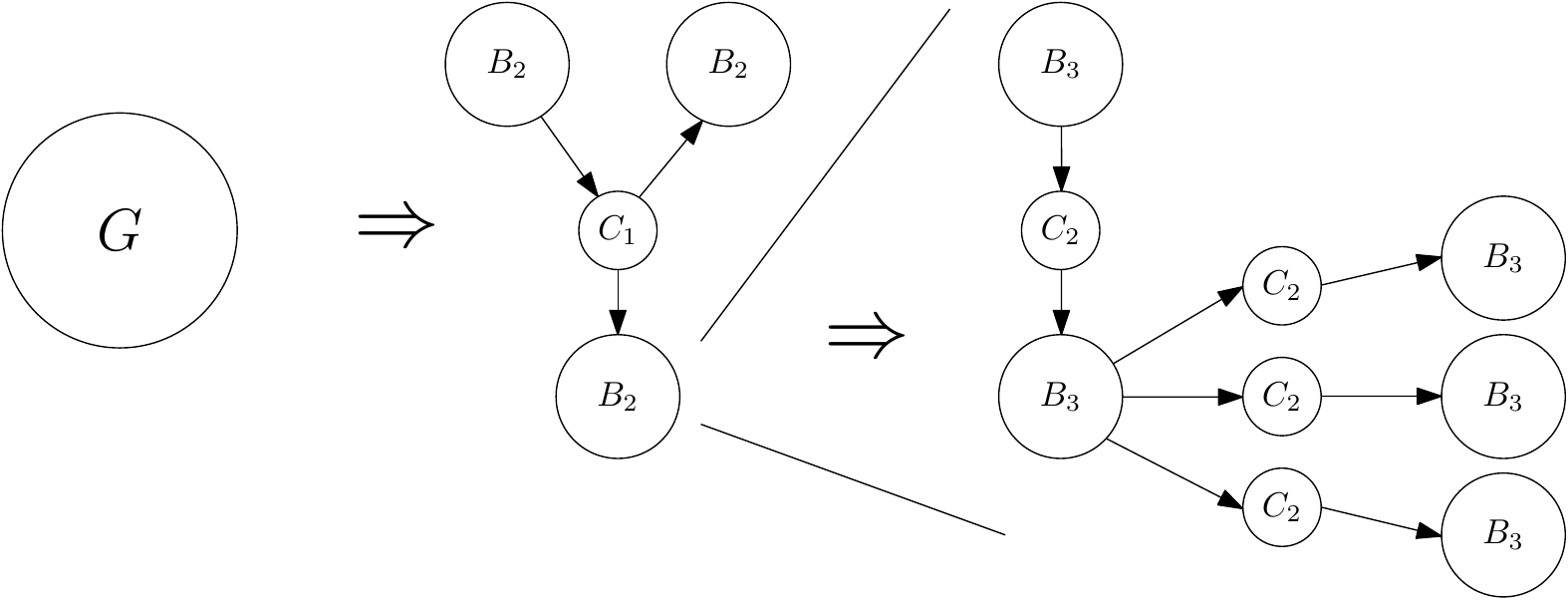}
	\caption{An example hierarchical decomposition of a graph $G$. A bag labeled
	$C_1$ contains a cut-vertex of $G$, $C_2$ a 2-cut of $G$. Bags labeled $B_2$
	contain a 2-block (a single edge or a maximal 2-connected component). If a
	2-block contains a maximal 2-connected component of $G$, it is decomposed
	further into its 2-cuts and 3-blocks, labeled by $B_3$, which contain either a
	cycle or a 3-connected 3-block.}
	\label{fig123ConnDecomp}
\end{figure}
To apply these results to any $k$-outerplanar graph $G$, we first show how to
construct an existentially definable tree decomposition of $G$, assuming that
there exist predicates existentially defining bounded width tree decompositions
for the 3-connected 3-blocks of (the Tutte decomposition of the 2-blocks of) $G$. For an
illustration of the proof idea of the following Lemma, see Figure
\ref{fig123ConnDecomp}, which shows that we can fix a parent-child ordering of
the hierarchical graph decomposition of $G$. After replacing the 3-blocks of $G$
by their corresponding tree decompositions (taking into account the direction
of the edges in the hierarchical decomposition), one can see that we have a
bounded width tree decomposition of the entire graph $G$.
\begin{remark}\label{remSpT}
	Note that in the proofs of the following results, one fixes a root vertex $r
	\in V$ of a $k$-outerplanar graph $G = (V, E)$,
	which will be used to induce a parent-relation on the bags of the hierarchical decomposition of
	$G$ (see Figure \ref{fig123ConnDecomp}). 
	In a later proof, one guesses a rooted spanning tree of $G$, from which one
	derives a set of edges that contains a spanning tree of each 3-connected
	3-block of $G$ (see Lemma \ref{lemSpTrSet}). The root of this spanning tree
	will be precisely this vertex $r$, hence ensuring that we have a conflict-free
	parent-child relation in the resulting tree decomposition of $G$.
\end{remark}
\begin{lemma}\label{lemConnTD3ConnDef}
	Let $G = (V, E)$ be a $k$-outerplanar graph with Tutte decompositions $(T, X)$
	of its 2-connected blocks. Then, $G$ admits an existentially MSOL-definable
	tree decomposition of width at most $3k+3$ with a constant number of
	parameters, if there exist predicates existentially defining width-$3k$ tree
	decompositions for the 3-connected 3-blocks of $G$ with a constant number of
	parameters.
\end{lemma}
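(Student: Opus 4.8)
The plan is to assemble a global tree decomposition of $G$ by gluing together the tree decompositions of the individual pieces of the hierarchical decomposition depicted in Figure~\ref{fig123ConnDecomp}, namely the blocks (maximal $2$-connected subgraphs), the cut-vertices, the $2$-cuts, and the $3$-blocks. By Courcelle's result \cite{Cou99} the block decomposition and the Tutte decompositions of the blocks are MSOL-definable, so I may treat the bags corresponding to cut-vertices ($C_1$), $2$-cuts ($C_2$), cycle $3$-blocks, and $3$-connected $3$-blocks as identifiable by existing predicates. The idea is to \emph{root} the whole hierarchy at a fixed vertex $r \in V$ (see Remark~\ref{remSpT}), thereby orienting every edge of the hierarchical decomposition tree away from the component containing $r$, which induces a conflict-free parent-child relation on the bags we are about to build.

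First I would invoke the hypothesis to obtain, for each $3$-connected $3$-block $B_3$, an existentially MSOL-definable width-$3k$ tree decomposition; by Corollary~\ref{corKOP2DefVirt} this is legitimate even though $B_3$ contains virtual edges between its incident $2$-cuts, and by Lemma~\ref{lem3CCPlanar} each such $B_3$ is genuinely $k$-outerplanar so Lemma~\ref{lemKOP3CTDDef} applies. For cycle $3$-blocks I would use Courcelle's width-$2$ definable tree decomposition \cite[Corollary 4.11]{Cou99}. The second step is to handle the \emph{interfaces}: each $2$-cut $S = \{x,y\}$ appears in exactly two $3$-blocks (Definition~\ref{defTutDec}\ref{defTutDec4}), and since the Tutte decomposition has adhesion $2$ (Proposition~\ref{propTutteDecAdh}), I only ever need to carry the two vertices $x,y$ across an interface. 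The plan is therefore to add $x$ and $y$ to a chosen bag of the child $3$-block's decomposition and identify that bag with the cut bag of the parent, so that property (ii) of a tree decomposition is preserved for the virtual edge $\{x,y\}$ and the connectivity condition (iii) holds for $x$ and $y$ across the seam. The same gluing, now carrying a single cut-vertex, stitches blocks to cut bags $C_1$ at the coarsest level.

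The width bookkeeping is where the $+3$ in $3k+3$ comes from: each $3$-connected $3$-block contributes bags of size at most $3k+1$ (width $3k$), and to propagate a $2$-cut across an interface I must augment a bag with at most the two interface vertices that are not already present, together with at most one extra vertex needed to keep the decomposition rooted consistently, yielding bags of size at most $3k+4$ and hence width at most $3k+3$. I would verify the three tree-decomposition axioms globally: every vertex and edge of $G$ lives inside some block and hence inside some $3$-block's decomposition, virtual edges are covered by the interface augmentation just described, and the subtree condition for each vertex follows because a vertex occurs in contiguous pieces of the hierarchy (a consequence of adhesion $2$ and the fact that we only ever duplicate interface vertices into adjacent bags). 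Finally I would define the $\Bag$- and $\Parent$-predicates: bag types are inherited from the $3$-block decompositions together with new types for the interface/seam bags, and $\Parent$ is obtained by composing the $\Parent$-predicates of the individual pieces with the orientation of the hierarchical decomposition tree induced by $r$; Lemma~\ref{lemExDefTD} then certifies that these predicates describe a genuine width-$(3k+3)$ rooted tree decomposition.

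The main obstacle, I expect, is ensuring a \emph{globally consistent} rooting so that the parent-child relations coming from the separate $3$-block decompositions do not conflict at the seams. Each definable $3$-block decomposition is rooted on its own terms (Lemma~\ref{lemKOP3CTDDef} produces a rooted decomposition anchored at a face of lowest layer number), but when two such decompositions meet along a $2$-cut the two local roots may disagree on which side is the parent. The fix, foreshadowed in Remark~\ref{remSpT}, is to fix the single vertex $r$ and derive from a guessed spanning tree of $G$ a set of edges containing a spanning tree of each $3$-block whose root is the interface vertex closest to $r$; this forces every local decomposition to be rooted \emph{towards} $r$, so the orientations agree across every seam. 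Making this simultaneous re-rooting existentially MSOL-definable with only a constant number of parameters—and checking that the resulting glued structure remains acyclic and connected—is the technically delicate part, and I would isolate the spanning-tree-set construction as a separate lemma (Lemma~\ref{lemSpTrSet}) to keep the present argument modular.
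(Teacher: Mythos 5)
Your overall strategy matches the paper's: glue the (Courcelle-definable) hierarchical decomposition together, fix a root $r$, orient cut--block edges by path containment (the paper's Proposition \ref{propC2B3dir}), re-root the 3-block decompositions via a definable spanning-tree edge set (Remark \ref{remSpT}, Lemma \ref{lemSpTrSet}), and treat cycle 3-blocks separately. The gap is in your seam handling. You propose to add the 2-cut vertices $x,y$ to a \emph{single} chosen bag of the child 3-block's decomposition and identify it with the cut bag. But $x$ and $y$ are vertices of the child 3-block $B_3$, so they already occur in other bags of its decomposition $\Theta_i$; the bags of $\Theta_i$ containing $x$ form a subtree that in general neither contains nor is adjacent to your chosen bag. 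Concretely: if the root bag of the (re-rooted) $\Theta_i$ contains $x$ but not $y$ --- which your re-rooting prescription (``rooted at the interface vertex closest to $r$'') does not rule out, since it pins down one vertex, not two --- then after making the cut bag the parent of that root, the bags containing $y$ are the cut bag plus a subtree lying strictly below the root, and these are not connected in the glued tree. Condition (iii) of a tree decomposition fails.

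The paper closes exactly this hole by brute force: in the case where the 2-cut $\{x,y\}$ is the parent of $B_3$, it adds \emph{both} $x$ and $y$ to \emph{every} bag of $\Theta_i$ (and, one level up, adds the parent cut-vertex to every bag of a 2-block's decomposition). This makes the connectivity condition trivial and is the actual source of the $+3$ in the width bound $3k+3$: bags of size $3k+1$ from the 3-block decompositions, plus $2$ for the parent 2-cut, plus $1$ for the parent cut-vertex. Your arithmetic lands on the same number, but by augmenting only one bag, so the count does not correspond to a construction that satisfies the axioms. Your approach could be repaired by proving that each child 3-block's decomposition can be rooted at a bag containing \emph{both} vertices of its parent 2-cut (plausible, since the virtual edge $\{x,y\}$ lies in the spanning-tree edge set of Lemma \ref{lemSpTrSet} and a bag associated with that edge contains both endpoints), but that claim is neither stated nor proved in your proposal. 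Separately, on the parent side you still need to pick a \emph{unique} bag of the parent block's decomposition containing $\{x,y\}$ to serve as the cut bag's parent; the paper does this by taking the topmost such bag (the one whose own parent does not contain both $x$ and $y$, or else the root), which is what makes the $\Parent$-predicate well defined in MSOL.
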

\begin{proof}
	Recall the decomposition of a graph into its 3-connected components described
	in the beginning of Section \ref{secG3CC} and see Figure
	\ref{fig123ConnDecomp} for an illustration. We will first show how to
	construct a rooted tree decomposition $(\cT = (\cN, \cF), \cX)$ of $G$ width at
	most $3k + 3$ and then prove that $(\cT, \cX)$ is indeed MSOL-definable.
	Naturally, the description of the tree decomposition is already aimed
	at providing straightforward methods to define its predicates in MSOL.
	\par
	\textbf{I. Constructing the tree decomposition.}
	We use the following notation. $\cC_1$ denotes the set of singletons containing
	a cut-vertex of $G$ and $\cC_2$ denotes the set of 2-cuts in all Tutte
	decompositions of the 2-connected blocks of $G$.
	Furthermore, $\cB_2$ denotes the set of blocks of $G$, $\cB_2^E$ the set of
	blocks that are single edges and $\cB_3$ denotes the set of 3-blocks of $(T,
	X)$. Let $\Theta_{\cB_3} = \{\Theta_1,\ldots,\Theta_r\}$ denote the set of
	tree decompositions of all elements in $\cB_3$. 
	Then, we create a bag in $(\cT, \cX)$ for all elements in $\cC_1$, $\cC_2$,
	$\cB_2^E$ and all bags of each $\Theta_i$ in $\Theta_{\cB_3}$, where $1 \le i
	\le r$. Note that if a 3-block $B_3 \in \cB_3$ is a cycle, one can find a tree
	decomposition of $B_3$ of width 2 directly. We will later study how to
	find an MSOL-definable tree decomposition of such a cycle in a more detailed
	way.
	\par 
	(In the following, keep Remark \ref{remSpT} in mind.)
	We add an edge to $\cF$ between all pairs of adjacent bags originating from a
	tree decomposition $\Theta_i$ with the same orientation.
	To make
	$\cT$ a directed tree, we add edges to $\cF$ between the above mentioned components in
	the following way. First, we fix an arbitrary root $r \in V$ of the graph,
	which is not a member of a cut or a 2-cut of $G$.
	For each vertex $x \in V$, we let $P_x$ denote the paths from $r$ to $x$ in
	$S$ (and sometimes, slightly abusing notation, we might denote it as if it was
	one path, if the meaning of the corresponding statement is clear from the context).
	\par 
	Let $B_2 \in \cB_2 \setminus \cB_2^E$ with Tutte decomposition $(T = (N, F),
	X)$.
	We know that the bags of $(T, X)$ either contain a 2-cut $C_2 \in \cC_2$ or a
	3-block $B_3 \in \cB_3$ with tree decomposition $\Theta_i \in \Theta_{\cB_3}$
	for some $i$ with $1 \le i \le r$.
	We now show which edges we need to add to $\cF$ and how to direct them to
	obtain a rooted tree decomposition of $B_2$ of width at most $3k+2$. We know
	that each edge in $F$ is incident to one cut bag and one block bag (Definition
	\ref{defTutDec}\ref{defTutDec2}, cf.\ Figure \ref{fig123ConnDecomp}). 
	Let $C_2$, $B_3$ and $\Theta_i$ be as above and additionally $C_2 \subset B_3$.
	By Definition \ref{defTutDec}\ref{defTutDec3} we know that there has to be an
	edge in $\cF$ between $C_2$ and one bag in $\Theta_i$, as there is an edge in
	$F$ between $C_2$ and $B_3$ in the Tutte decomposition $(T, X)$.
	We use the following (MSOL-definable) properties to create a rooted tree
	decomposition of a 2-block of $G$.
	\begin{proposition}\label{propC2B3dir}
		Let $C_2 = \{x, y\} \in \cC_2$ and denote by $\cB_3(C_2)$ its (two) neighbors
		in the corresponding Tutte decomposition and $r \in V$ an arbitrarily chosen but
		fixed root vertex, which is not a member of a 2-cut. 
		Then, for each of the following two statements, there is \emph{precisely one}
		3-block $B_3$ which satisfies it.
		\begin{enumerate}[label={(\roman*)}]
		  \item For all $v \in B_3$: $\propSubgraph{P_x}{P_v}$ or
		  $\propSubgraph{P_y}{P_v}$. \label{propC2B3dir1}
		  \item There exists at least one $v \in B_3$, such that
		  $\propSubgraph{P_v}{P_x}$ or $\propSubgraph{P_v}{P_y}$. \label{propC2B3dir2}
		\end{enumerate}
	\end{proposition}
	\begin{proof}
		Observe that $C_2$ separates $G$ into two components, say $G^+$ and $G^-$,
		where $r \in V(G^+)$. Then it immediately follows that \ref{propC2B3dir1}
		holds for the component $B_3^- \in \cB_3(C_2)$ with $B_3^- \subseteq V(G^-)$.
		Now, let $B_3^+ \in \cB_3(C_2) \setminus \{B_3^-\}$. Clearly, $B_3^+
		\subseteq V(G^+)$. Denote by $\cC_2(B_3^+)$ the neighbors of $B_3^+$. Then,
		there is a 2-cut $C_2' \in \cC_2(B_3^+)$, such that \ref{propC2B3dir1} holds
		for $C_2'$ w.r.t.\ $B_3^+$. By definition, we know that there is a vertex $z
		\in C_2 \setminus C_2'$ (where '$\triangledown$' denotes the symmetric
 		difference) 
		and $z$ is also contained in $B_3^+$ (again, by definition).
		Hence, $B_3^+$ satisfies \ref{propC2B3dir2} (with $v = z$).
	\end{proof}
	In case \ref{propC2B3dir1}, we let $C_2 =\{x, y\}$ be the parent bag of $B_3$.
	Recall that $\Theta_i$ denotes a tree decomposition of $B_3$. We add both $x$
	and $y$ to all bags in $\Theta_i$ and make $C_2$ the parent bag of the root of
	$\Theta_i$. \par
	In case \ref{propC2B3dir2}, we let $B_3$ be the parent of $C_2$. Note that
	while a cut bag is always the parent of precisely one block bag, a block bag
	can be the parent of any number of cut bags (cf.\ Figure
	\ref{fig123ConnDecomp}). Hence, adding all vertices of these 2-cuts to the tree
	decomposition $\Theta_i$ could increase the width of $\Theta_i$ to a
	non-constant number. Instead, we observe the following. Since there is a
	(virtual or non-virtual) edge between $x$ and $y$ in $B_3$, we know that there
	is at least one bag containing both $x$ and $y$. Denote the set of such bags by
	$\cX_{xy}$. Since we have to choose precisely one bag in this set to make it a
	parent of $C_2$, we observe the following. Either, there is a bag $\cX^* \in
	\cX_{xy}$, whose parent does \emph{not} contain both $x$ and $y$ or both $x$
	and $y$ are contained in the root bag of $\Theta_i$. In the latter case, we let
	$\cX^*$ be the root of $\Theta_i$. We then make $\cX^*$ the parent of $C_2$.
	\par
	One can verify that this
	yields a rooted tree decomposition of width at most $3k+2$ for any $B_2 \in
	\cB_2 \setminus \cB_2^E$. \par
	To finish the construction of the rooted tree decomposition $(\cT, \cX)$, we
	need to show, which edges to add to $\cF$ between bags in $\cC_1$ and (tree
	decompositions of elements in) $\cB_2$. We use the same idea as before, based
	on a fixed root vertex $r$ in $G$. In the following let $C_1 = \{x\} \in \cC_1$
	and $B_2 \in \cB_2$ with $C_1 \subset B_2$. Since $C_1$ is a separator of $G$,
	one of the following holds for all $v \in B_2$, $v \neq x$.
	\begin{enumerate}[label={(\roman*)}]
	  \item $\propSubgraph{P_x}{P_v}$. \label{lemCTD2CCase1}
	  \item $\propSubgraph{P_v}{P_x}$. \label{lemCTD2CCase2}
	\end{enumerate}
	Again, in case \ref{lemCTD2CCase1}, we make $C_1$ the parent bag of $B_2$. We
	add $x$ to all bags in the tree decomposition of $B_2$ and make $\cX_t$ the
	parent of a bag $\cX_{t'}$, where $\cX_{t'}$ is a bag with
	$\cX_{t'} = B_2$ in case $B_2 \in \cB_2^E$ and if $B_2 \in \cB_2 \setminus
	\cB_2^E$, $X_{t'}$ is the root bag of the tree decomposition of $B_2$,
	constructed as described above. 
	In case \ref{lemCTD2CCase2} we make $B_2$ the parent bag of $C_1$. If $B_2
	\in \cB_2^E$, we simply let the bag $\cX_t$ with $\cX_t = B_2$ be the parent of
	the bag $\cX_{t'}$ with $\cX_{t'} = C_1$. If $B_2 \in \cB_2 \setminus \cB_2^E$, 
	we observe the following. Since $x$ is a cut
	vertex of $G$, no 2-cut of a block of $G$ can contain $x$. Hence we know that
	there exists one unique 3-block $B_3^* \in \cB_3$ with $x \in B_3^*$. We denote
	its tree decomposition by $\Theta_i^*$. Again, we find a bag $\cX_t$ in
	$\Theta_i^*$, such that its parent does not contain $x$. If no such bag exists,
	we let $\cX_t$ be the root of $\Theta_i^*$. We again let $\cX_{t'}$ be the bag
	with $\cX_{t'} = C_1$ and make $\cX_t$ the parent of $\cX_{t'}$. \par
	One can verify that now $(\cT, \cX)$ is a rooted tree decomposition and since
	in the last stage we introduced at most one vertex to each bag of a tree
	decomposition of an element in $\cB_2$, its width is at most $3k+3$. \par
	\textbf{II. Definability.} For defining all necessary predicates for the tree
	decomposition $(\cT, \cX)$, we will refer to $G$ as the graph after adding all
	virtual edges of its Tutte decomposition. We might write down predicates
	quantifying over virtual edges or having virtual edges as free variables, and
	by Corollary \ref{corKOP2DefVirt} we know that all these predicates can be
	defined only using vertex quantification as well.
	\par 
	By some trivial definitions, the statement of
	the lemma, and the results of \cite{Cou99} we know that the predicates listed below
	exist.
	\begin{proposition}[cf.\ \cite{Cou99}]\label{propConnTD3DefUtil}
		Let $G = (V, E)$ be a $k$-outerplanar graph, for whose 2-blocks all
		Tutte decompositions are known. Let $G' = (V, E \cup E')$ denote the graph
		obtained by adding all corresponding virtual edges $E'$ to $G$ and $\gamma : V
		\rightarrow \bN_{|3k+1}$ a coloring of $V$ in $G'$. The following predicates
		are MSOL-definable.
		\begin{enumerate}[label={(\Roman*)}]
		  \item $\Bag_{\cC_1}(v, X)$: $X \in \cC_1$ and $X =
		  \{v\}$.\label{propConnTD3DefUtil1}
		  \item $\Bag_{\cB_2^E}(e, X)$: $X \in \cB_2^E$ and $X = \{v, w\}$, where $e
		  = \{v, w\}$.\label{propConnTD3DefUtil2}
		  \item $\kConn{2}_{\cB_2 \setminus \cB_2^E}(X)$: $X$ is the vertex set of a
		  2-connected 2-block of $G$.\label{propConnTD3DefUtil2a}
		  \item $\Bag_{\cC_2}(v, X)$: $X \in \cC_2$, $v \in X$ and for $w \in X$, $v
		  \neq w$, we have $\gamma(v) < \gamma(w)$.\label{propConnTD3DefUtil3}
		  \item $\kConn{3}_{\cB_3}(X)$: $X$ is the vertex set of a 3-connected 3-block
		  of $G$.\label{propConnTD3DefUtil5}
		  \item $\Cycle_{\cB_3}(X)$: $X$ is a set of vertices forming a cycle block in
		  a 2-block of $G$.\label{propConnTD3DefUtil4}
		  \item $\Bag_{\tau_1}^{\cB_3}(v, X),\ldots,\Bag_{\tau_t}^{\cB_3}(v, X),
		  \Bag_{\sigma_1}^{\cB_3}(e, X),\ldots,\Bag_{\sigma_s}^{\cB_3}(e, X)$: The
		  $\Bag$-predicates of the tree decompositions of the 3-connected 3-blocks
		  of $G$.\label{propConnTD3DefUtil6}
		  \item $\Parent_{\cB_3}(X, Y)$: The $\Parent$-predicate of the tree
		  decompositions of the 3-connected 3-blocks of
		  $G$.\label{propConnTD3DefUtil7}
		\end{enumerate}
	\end{proposition}
	\begin{proof}
		\ref{propConnTD3DefUtil1} and \ref{propConnTD3DefUtil2} follow from
		\cite[Lemma 2.1]{Cou99}, \ref{propConnTD3DefUtil2a} from \cite[Section
		2]{Cou99} and \ref{propConnTD3DefUtil3} from \cite[Section 3]{Cou99} and
		Corollary \ref{corKOP2DefVirt}.
		\ref{propConnTD3DefUtil5} is shown in \cite[Corollary 4.8]{Cou99} and a proof
		of \ref{propConnTD3DefUtil4} can done with the same argument. Finally,
		\ref{propConnTD3DefUtil6} and \ref{propConnTD3DefUtil7} are part of the
		statement of the lemma.
	 \end{proof}
	We now turn to defining tree decompositions for the cycle 3-blocks of a graph,
	after which we only need to show that gluing together all components of our
	construction explained above is MSOL-definable.
	\begin{proposition}\label{propConnTD3DefCycle}
		Let $G = (V, E)$ be a graph and $C = (W, F)$ a cycle 3-block of $G$
		(including virtual edges). There is an existentially definable predicate
		$\Bag_{Cyc}(e, X)$, which is true if and only if $X$ is a bag of a tree
		decomposition of $C$ associated with a (possibly virtual) edge $e$ and an
		existentially definable predicate $\Parent_{Cyc}(X, Y)$ encoding a
		parent-relation of a tree decomposition of $C$.
	\end{proposition}
	\begin{proof}
		Recall that for orienting the edges of our tree decomposition, we first fix a
		root vertex $r$ in the graph $G$ and note that by Proposition
		\ref{propConnTD3DefUtil}\ref{propConnTD3DefUtil5}, $W$ is MSOL-definable. 
		To create a definable tree decomposition of $C$, we now find a root $r_C \in
		W$ of $C$. If $r \in W$, we let $r_C = r$, otherwise we know that there is one
		incident parent cut $C_P \in \cC_1 \cup \cC_2$ of $C$ in $G$. $C_P$ can
		be identified by checking for all 1- and 2-cuts $C_C$, which are incident to $W$,
		if all paths in $S$ from $r$ to the vertices $w \in W$ pass through (at
		least one of the vertices in) $C_C$. This can be defined in a straightforward
		way and one can see that there is always precisely one such cut. If $C_P =
		\{x\} \in \cC_1$, then we let $r_C = x$ and if $C_P = \{x, y\} \in \cC_2$,
		then we let $r_C = x$, if $\gamma(x) < \gamma(y)$ in a fixed coloring $\gamma$
		of $C$. We create a bag $X$ for each edge $f = \{v, w\} \in F$, which is not
		incident to $r_C$ and let $X = \{r_C, v, w\}$. Hence, the predicate $\Bag_{Cyc}(e, X)$
		is also definable in a straightforward way. \par
		We then orient the edges in $F$ in such a way that $C$ is a directed cycle.
		Note that one can find a conflict-free ordering for all cycle blocks in the
		graph $G$. (Otherwise, we might violate the cardinality constraint of MSOL.)
		The predicate $\Parent_{Cyc}(X, Y)$ is true, if and only if the
		following hold.
		\begin{enumerate}[label={(\roman*)}]
		  \item There are two edges $e, f \in F$, such that $\Bag_{Cyc}(e, X)$ and
		  $\Bag_{Cyc}(f, Y)$ (and $e$ and $f$ are contained in the same cycle).
		  \item The directed path from $r_C$ to $\tail(e)$ in $C$ is a strict subpath
		  of the path from $r_C$ to $\tail(f)$.
		  \item $|X \cap Y| = 2$.
		\end{enumerate}
		Note that we only need one additional parameter, the edge set defining the
		edge orientation of $F$, since we already have a coloring for the entire graph
		$G$ (see Proposition \ref{propConnTD3DefUtil}). The details of the predicates
		in Appendix \ref{appSecMSOLHGDCyc} complete the proof.
	 \end{proof}
	To unify the parent-relations for all tree decompositions of 3-blocks, we can
	write
	\begin{equation*}
		\Parent_{\cB_3}'(X, Y) \Leftrightarrow \Parent_{\cB_3}(X, Y) \vee
		\Parent_{Cyc}(X, Y).
	\end{equation*}
	As described above, to create the according parent-relation between blocks
	of the hierarchical decomposition of $G$, we need to add a number of vertices
	to some of the bags of the final tree decomposition $(\cT, \cX)$. The details
	for the changes in those definitions are presented in Appendix
	\ref{appSecMSOLHGDPar}.
	We can define a $\Parent$-predicate for $(\cT, \cX)$ by using the ideas
	explained above to add edges between blocks and cut-bags. Let
	$\Parent_{\cB\cC}(X, Y)$ denote such a predicate. Then, we have that
	\begin{align*}
		\Parent(X, Y) \Leftrightarrow \Parent_{\cB_3}'(X, Y) \vee \Parent_{\cB\cC}(X,
		Y).
	\end{align*}
	To show that the number of parameters that we need to define the above
	mentioned predicates is constant, we note that we only use constructions of
	previous results with constant numbers of parameters. (For the exact number see
	the corresponding result.)
	Note that for the cycle components one additional parameter is as well enough
	(see the proof of Proposition \ref{propConnTD3DefCycle})  to turn
	all cycles into directed cycles, since they are connected in a tree structure 
	in the Tutte decomposition of $G$. Hence, fixing the direction of one
	cycle will always yield the possibility to direct adjacent (i.e.\ sharing a
	2-cut) cycles in a conflict-free manner.
	\par
	The details for the predicate $\Parent_{\cB\cC}(X, Y)$ are
	given in Appendix \ref{appSecMSOLHGDPar} and complete the proof of Lemma
	\ref{lemConnTD3ConnDef}.
\end{proof}
As mentioned in the previous proof, another obstacle in applying Lemma
\ref{lemKOP3CTDDef} to define a tree decomposition for $G$ using its (definable)
hierarchical graph decomposition is the cardinality constraint of MSOL. We
illustrate this problem with an example.
\begin{example}
	Let $G = (V, E)$ be a $k$-outerplanar graph with $\cO(n/\log n)$
	3-connected 3-blocks of size $\cO(\log n)$. Let $P$ denote a graph property,
	which is definable for 3-connected $k$-outerplanar graphs by a predicate
	$\phi_P$. Suppose that $\phi_P$ uses a constant number of parameters. When
	applying $\phi_P$ to all 3-connected 3-blocks of $G$, this might result in a
	predicate using $\cO(n/ \log n)$ parameters and hence, $P$ not definable in this
	straightforward way for $G$.
\end{example}
However, for the case of defining a tree decomposition of a $k$-outerplanar
graph, we can avoid this problem.
When defining a tree decomposition for a
3-connected $k$-outerplanar graph in MSOL, one first guesses a rooted
spanning tree of $G$. 
To avoid guessing a non-constant number of spanning trees,
we will find a set of edges $\cS_E$, which contains a spanning tree with
bounded edge and face remember number for each 3-connected 3-block of $G$.
Furthermore we guess one set $\cR_V$, containing one unique vertex for each
3-connected 3-block of $G$, which we will use as the root of its spanning tree.
We need to make some observations about such candidate sets $\cS_E$ and
$\cR_V$. We first prove the existence of these sets and then their
MSOL-definability.
\begin{lemma}\label{lemSpTrSet}
	Let $G = (V, E)$ be a planar graph and $G = (V, E \cup E')$ the
	graph obtained by adding the virtual edges $E'$ of the Tutte decompositions of
	the 2-connected blocks of $G$ to $G$. Let $T = (V, F)$ be a spanning tree of
	$G$ with $er(G, T) \le \lambda$ and $fr(G, T) \le \mu$. Let $B_3 = (V_{B_3},
	E_{B_3}) \in \cB_3$ be a 3-connected 3-block of $G'$ (including virtual edges)
	and $T_{B_3} = T[V_{B_3}]$.
	One can construct from $T_{B_3}$ a spanning tree $T_{B_3}^*$ of $B_3$ with
	$er(B_3, T_{B_3}^*) \le \lambda$ and $fr(B_3, T_{B_3}^*) \le \mu$ by adding
	edges from $E \cup E'$ to $T_{B_3}$.
\end{lemma}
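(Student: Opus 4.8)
The plan is to build $T_{B_3}^*$ directly from the restriction of $T$ to $V_{B_3}$ and then bound its remember numbers by exhibiting an injection from the fundamental cycles of $B_3$ into those of $G$ that does not increase, for any tree edge or any face, the number of cycles passing through it. First I would analyse the forest $T_{B_3} = T[V_{B_3}]$. Since each incident $2$-cut $S = \{x,y\} \in \cC_2$ of $B_3$ separates $V_{B_3}$ from the vertices lying behind $S$ in $G$, any maximal subpath of $T$ that leaves $V_{B_3}$ must re-enter through the same $2$-cut; using that $T$ is a tree (so the path between $x$ and $y$ is unique) together with Proposition \ref{propTDAdh2SP}, I would show that the two endpoints of such an external detour are precisely the two vertices of one incident $2$-cut, and that each incident $2$-cut carries at most one such detour. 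Consequently the minimal subtree $T'$ of $T$ spanning $V_{B_3}$ is obtained from $T_{B_3}$ by re-attaching, for each incident $2$-cut $S_j = \{x_j, y_j\}$ whose $T$-path is external, a path between $x_j$ and $y_j$ through vertices outside $V_{B_3}$. Replacing each such external path by the (real or virtual) edge $\{x_j, y_j\}$ of $B_3$ then turns $T'$ into a tree on vertex set $V_{B_3}$; this tree is $T_{B_3}^*$, and by construction all of its edges lie in $B_3$, so it is a spanning tree of $B_3$.

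It remains to control $er(B_3, T_{B_3}^*)$ and $fr(B_3, T_{B_3}^*)$. The non-tree edges of $B_3$ are of two kinds: real edges $e \in E \setminus F$ with both endpoints in $V_{B_3}$, and virtual edges $\{x_j, y_j\}$ of \emph{internal} $2$-cuts, i.e.\ those whose $T$-path stays inside $V_{B_3}$ and which are therefore not added to $T_{B_3}^*$. For a real non-tree edge, its fundamental cycle in $B_3$ is exactly the image of its fundamental cycle in $G$ under the map that contracts each external detour to its virtual edge, so the \emph{internal} tree edges it traverses are the same in both graphs. For a virtual non-tree edge $\{x_j, y_j\}$ of an internal $2$-cut, I would locate a witnessing edge $e_j \in E \setminus F$ inside the region behind $S_j$: since the external vertices of that region attach in $T$ only to $x_j$ or only to $y_j$ while the region is connected in $G$, some real non-tree edge $e_j$ bridges the $x_j$-part and the $y_j$-part, and its fundamental cycle in $G$ contains the whole internal $T$-path between $x_j$ and $y_j$. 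Mapping each virtual non-tree edge to its $e_j$ and each real non-tree edge to itself gives an injection of the fundamental cycles of $B_3$ into those of $G$. One then checks that for every internal tree edge $f$ of $T_{B_3}^*$ a $B_3$-cycle traverses $f$ only if its image $G$-cycle does, and that for every virtual tree edge $\{x_{j'}, y_{j'}\}$ a $B_3$-cycle traverses it only if its image traverses a fixed external tree edge on the $S_{j'}$-detour; in both cases the count is bounded by the number of $G$-fundamental cycles through a single edge, hence by $er(G,T) \le \lambda$. The face remember number is handled in the same spirit, using the planar embedding of $B_3$ inherited from $G$ via the path-contractions of Lemma \ref{lem3CCPlanar}: each bounded face of $B_3$ is the image of a bounded face of $G$, and a $B_3$-cycle meeting its boundary forces the image $G$-cycle to meet the corresponding boundary, giving $fr(B_3, T_{B_3}^*) \le fr(G,T) \le \mu$.

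The main obstacle I expect is the bookkeeping around the virtual edges, and in particular the face-remember bound: I must make the correspondence between the faces of $B_3$ and the faces of $G$ precise, since several faces of $G$ may be merged when an external detour is contracted and the outer face must be excluded consistently, and then verify that the injection on fundamental cycles respects this face correspondence. The edge-remember bound, by contrast, reduces cleanly to the separation property of the incident $2$-cuts and the uniqueness of paths in $T$, and is where I would start to fix the notation before extending the argument to faces.
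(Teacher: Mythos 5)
Your construction of $T_{B_3}^*$ coincides with the paper's: restrict $T$ to $V_{B_3}$, show (this is the paper's Proposition \ref{propB3C2AddConn}, proved exactly as you suggest via the $V_{B_3}$-path/adhesion argument) that every external detour of $T$ leaves and re-enters through the two vertices of a single incident 2-cut, and then replace each detour by the corresponding (real or virtual) edge of $B_3$. Where you genuinely diverge is in bounding the remember numbers. The paper reads the hypothesis as saying that $T$ is a spanning tree of the graph that \emph{already contains} the virtual edges $E'$; then the intermediate graph $G_{B_3}'$ (edges of $B_3$ plus the detour paths) is a subgraph of $G$ and $T_{B_3}'$ a subtree of $T$, so every fundamental cycle of $G_{B_3}'$ is literally a fundamental cycle of $G$, the bounds $\lambda,\mu$ transfer for free, and contracting the detours finishes the proof. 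You instead treat virtual edges as absent from $G$, so the fundamental cycle of an internal 2-cut's virtual edge has no counterpart in $G$ to inherit a bound from; you repair this by charging it to a witnessing real non-tree edge behind that cut, whose $G$-fundamental cycle necessarily contains the whole internal $T$-path between the cut vertices. This is more work, but it proves a stronger statement: it does not require $\lambda$ and $\mu$ to already account for cycles of virtual edges, which matters because the lemma's statement is ambiguous on this point (both graphs are named $G$), and because the paper nowhere verifies that the graph with \emph{all} virtual edges added still satisfies the hypotheses of Lemma \ref{lemKOPGERFR} (only individual 3-blocks are shown to stay $k$-outerplanar, in Lemma \ref{lem3CCPlanar}). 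Two details in your write-up need attention: first, your classification of the non-tree edges of $B_3$ should exclude real edges that are used as detour replacements, since these become tree edges of $T_{B_3}^*$; second, as you yourself flag, ``each bounded face of $B_3$ is the image of a bounded face of $G$'' is not exact, because deleting the regions behind the 2-cuts merges several $G$-faces into one face of $B_3$ --- the fix is to observe that any $B_3$-cycle meeting the boundary of a merged face must traverse an entire detour path, hence its image meets the boundary of any fixed inner $G$-face incident to that path. The paper's own proof asserts the face-preservation step under contraction with even less detail, so on this point your proposal is no less rigorous than the reference.
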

\begin{figure}
	\centering
	\subfloat[$T_{B_3}$ without edge direction.]{
		\includegraphics[width=.35\textwidth]{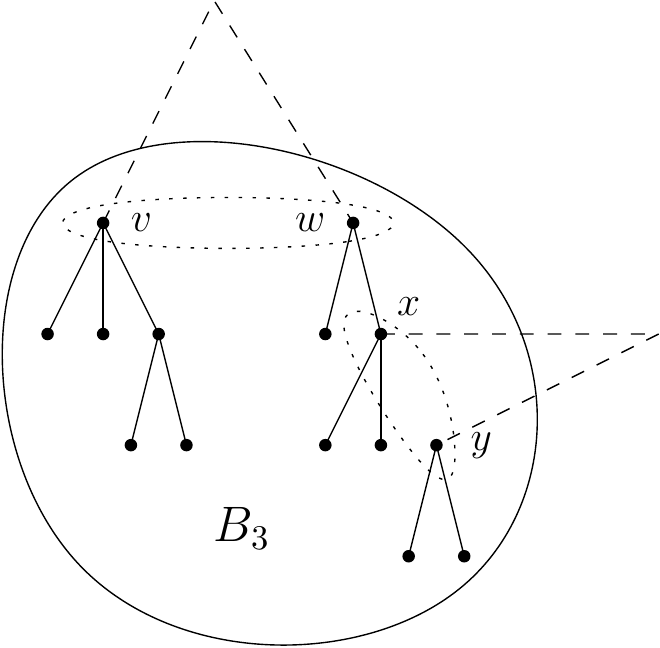}
		\label{figSpTrEx1}
	}
	\qquad
	\subfloat[$T_{B_3}$ with edge direction.]{
		\includegraphics[width=.35\textwidth]{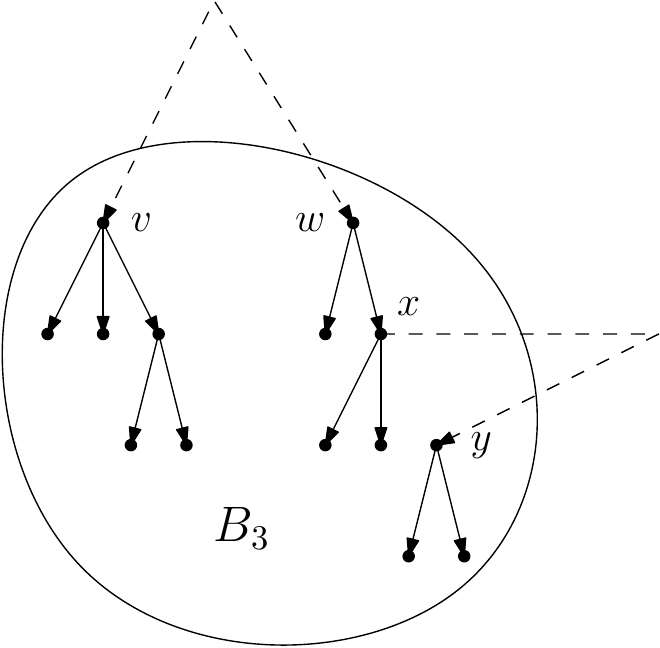}
		\label{figSpTrEx2}
	}
	\caption{A forest $T_{B_3}$ of a 3-connected 3-block of an example graph. The
	dashed lines indicate the paths in $T$ between two endpoints of a incident cut
	of $B_3$. Here, $\{v, w\}$ is the root cut of $B_3$ and $\{x, y\}$ a child
	cut. Note that by Propositions \ref{propSpTrSetCh} and \ref{propSpTrSetPar},
	this small example is already somewhat general.}
	\label{figSpTrEx}
\end{figure}
\begin{proof}
	Clearly, $T_{B_3} = (V_{B_3}, F_{B_3})$ is a forest in $B_3$ and in the
	following we denote its tree components by $F_1 = (V_{F_1}, E_{F_1}),\ldots,F_c
	= (V_{F_c}, E_{F_c})$.
	We will now show how to connect these components to a tree.
	Let $\cC_2' \subseteq \cC_2$ denote the set of incident 2-cuts of $B_3$. 
	\begin{proposition}\label{propB3C2AddConn}
		Let $T_{B_3}'$ denote the graph obtained by adding an edge between all 2-cuts
		$\{x, y\} \in \cC_2'$ in $T_{B_3}$ (if not already present). Then, $T_{B_3}'$
		is connected.
	\end{proposition}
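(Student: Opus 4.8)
The plan is to lean on the fact that $T$ is a spanning tree of the \emph{whole} graph, together with the observation that every incident $2$-cut of $B_3$ is a separator isolating $V_{B_3}$ from the part of the graph lying beyond it. Concretely, I would fix two arbitrary vertices $u, v \in V_{B_3}$, examine the unique $u$--$v$ path $P$ in $T$, and show that it can be rerouted into a walk that stays inside $T_{B_3}'$. Connectivity of $T_{B_3}'$ (equivalently, that the forest components $F_1,\ldots,F_c$ get linked together) then follows since $u,v$ are arbitrary.

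First I would record $P$ through the lens of $V_{B_3}$: let $w_0 = u, w_1, \ldots, w_m = v$ be the vertices of $P$ that lie in $V_{B_3}$, listed in the order $P$ visits them (these are pairwise distinct since $P$ is a simple tree path). Between two consecutive such vertices $w_i, w_{i+1}$ there are exactly two possibilities. Either $w_i$ and $w_{i+1}$ are adjacent along $P$, in which case $\{w_i, w_{i+1}\}$ is an edge of $T$ with both endpoints in $V_{B_3}$ and hence already an edge of $T_{B_3} = T[V_{B_3}]$; or $P$ makes an \emph{excursion} between them, a maximal subpath whose internal vertices all lie outside $V_{B_3}$.

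The heart of the argument, and the step I expect to be the main obstacle, is to show that the two endpoints $w_i, w_{i+1}$ of every excursion are precisely the two vertices of one incident $2$-cut in $\cC_2'$. I would argue this from the separator property of $2$-cuts in the Tutte decomposition, whose adhesion is $2$ by Proposition~\ref{propTutteDecAdh}: deleting $V_{B_3}$ leaves exactly the pieces lying beyond the incident $2$-cuts, each piece being attached to $V_{B_3}$ only through the two vertices of its defining $2$-cut (a vertex of such a piece adjacent to some $V_{B_3}$-vertex outside its $2$-cut would contradict that the $2$-cut separates $B_3$ from the piece). Deleting $V_{B_3}$ therefore also disconnects the pieces from one another, so each excursion, being a connected path outside $V_{B_3}$, lies entirely within a single piece. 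Since the only vertices of $V_{B_3}$ that meet that piece are the two endpoints of its $2$-cut, and since the excursion's endpoints $w_i, w_{i+1}$ are distinct, they must be exactly those two vertices; hence $\{w_i, w_{i+1}\} \in \cC_2'$.

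Finally I would assemble the walk. In both cases $T_{B_3}'$ contains the edge $\{w_i, w_{i+1}\}$: when $w_i, w_{i+1}$ are $P$-adjacent it is already an edge of $T_{B_3}$, and when they bound an excursion it is precisely the $2$-cut edge added to form $T_{B_3}'$. Thus $w_0, w_1, \ldots, w_m$ is a walk from $u$ to $v$ in $T_{B_3}'$, so $u$ and $v$ lie in the same component of $T_{B_3}'$. As $u$ and $v$ were arbitrary vertices of $V_{B_3}$, the graph $T_{B_3}'$ is connected, which is the claim.
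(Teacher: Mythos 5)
Your proof is correct and takes essentially the same route as the paper's: both follow the unique path in $T$ between two vertices of $V_{B_3}$ and show that every excursion outside $V_{B_3}$ is a $V_{B_3}$-path whose two (distinct) endpoints must coincide with an incident $2$-cut, using the adhesion-$2$ structure of the Tutte decomposition, so that the added $2$-cut edges let the path be rerouted inside $T_{B_3}'$. The only difference is cosmetic: you explicitly enumerate all $V_{B_3}$-vertices along the path and handle every excursion, whereas the paper treats a single excursion and leaves the iteration over multiple excursions implicit.
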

	\begin{proof}
		Let $(T_T = (N_T, F_T), X)$ denote the Tutte decomposition containing $B_3$
		and let $B_3 = X_t$ with $t \in N_T$.
		Let $v, w \in V_{B_3}$ and consider the unique path $P_{vw}$ between $v$ and
		$w$ in $T$. There are two cases: (I) The path $P_{vw}$ is completely contained
		in $B_3$ and $v$ and $w$ belong to the same connected component. (II) Suppose
		that they do not and let $F_i$ denote the component with $v \in V_{F_i}$ and $F_j$ the
		component with $w \in V_{F_j}$. 
		Let $x$ and $y$ be the vertices on the path $P_{vw}$ with $x, y \in V_{B_3}$
		(and $x \neq y$), such that $x$ has a neighbor $x' \notin V_{B_3}$ and $y$ has
		a neighbor $y' \notin V_{B_3}$ (both in $P_{vw}$). Denote this subpath by $P_{xy}$. Then,
		$P_{xy}$ is a $V_{B_3}$-path in $G$. Hence, there is a unique component in
		$T_T' = T_T[N_T \setminus \{t\}]$ containing all internal vertices of
		$P_{xy}$.
		Since the neighbor of $t$ in $T_T'$ is a cut-bag, we know that it has to
		contain both $x$ and $y$ and hence $\{x, y\} \in \cC_2'$.
	\end{proof}
	By Proposition \ref{propB3C2AddConn} we know that we can find a subset of
	incident 2-cuts of each 3-connected 3-block to turn $T_{B_3}$ into a tree.
	We now prove that adding these edges does not increase the edge and face
	remember number. Consider a 2-cut $C_2 = \{x, y\} \in \cC_2'$, such
	that $\{x, y\} \notin F$. Since $T$ is a spanning tree of $G$, we know that
	there is one unique path $P_{xy}$ between $x$ and $y$ in $T$. Let $T_{B_3}' =
	(V_{B_3}', F_{B_3}')$ denote the tree obtained by adding the above described
	paths between the components of $T_{B_3}$. Then, $T_{B_3}'$ is a spanning tree
	of the graph $G_{B_3}' = (V_{B_3}', E_{B_3} \cup F_{B_3}')$ with $er(G_{B_3}',
	T_{B_3}') \le \lambda$ and $fr(G_{B_3}', T_{B_3}') \le \mu$, since $G_{B_3}'
	\sqsubseteq G$ and no edges, which are not members of $T_{B_3}'$, are
	introduced in $G_{B_3}'$.
	Subsequently, replacing each path $P_{xy}$ by a single edge in $T_{B_3}'$ does
	not increase the edge and face remember number as well and after these
	replacements, we have that $T_{B_3}' = T_{B_3}^*$ and our claim follows. For an
	illustration of this proof see Figure \ref{figSpTrEx1}. 
 \end{proof}
\begin{lemma}\label{lemSpTrDirSet}
	The statement of Lemma \ref{lemSpTrSet} also holds, if one replaces the term
	\emph{spanning tree} by \emph{rooted spanning tree}. Furthermore there
	is a set $\cR_V \subseteq V$, which contains precisely one vertex acting
	as a root for a spanning tree for each 3-connected 3-block of $G$.
\end{lemma}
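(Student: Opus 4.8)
The plan is to reuse the construction of Lemma~\ref{lemSpTrSet} verbatim and then endow the resulting tree with an orientation inherited from the (now rooted) spanning tree $T$ of $G'$, observing that this does not affect the two remember numbers at all. Concretely, I would first recall that the edge and face remember numbers are defined purely through the fundamental cycles of $T_{B_3}^*$, which depend only on the underlying edge set and not on any choice of root or orientation. Hence the tree $T_{B_3}^*$ produced in Lemma~\ref{lemSpTrSet} still satisfies $er(B_3, T_{B_3}^*) \le \lambda$ and $fr(B_3, T_{B_3}^*) \le \mu$ once we fix a root, no matter which root we choose. The whole content of the first claim therefore lies in choosing the root of each $T_{B_3}^*$ so that its orientation is compatible with that of $T$ on every tree component of the forest $T_{B_3} = T[V_{B_3}]$.

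To choose these roots I would distinguish, for a fixed $B_3$, whether the global root $r$ of $T$ lies in $V_{B_3}$. If $r \in V_{B_3}$ I set $r_{B_3} = r$. Otherwise the unique parent cut of $B_3$ in the hierarchical decomposition (a $1$-cut or a $2$-cut) separates $B_3$ from $r$, so every $T$-path from $r$ into $B_3$ enters through a cut vertex; I let $r_{B_3}$ be that cut vertex, breaking the tie in the $2$-cut case by the fixed colouring $\gamma$ exactly as is done for cycle blocks in Proposition~\ref{propConnTD3DefCycle}. This makes $r_{B_3}$ a well-defined single vertex of $V_{B_3}$, and by the separator property it is the vertex of $V_{B_3}$ closest to $r$ in $T$. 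I would then orient $T_{B_3}^*$ away from $r_{B_3}$.

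The step I expect to be the main obstacle is to prove that this orientation agrees with the orientation of $T$ inside each component $F_i$ of $T_{B_3}$, i.e.\ that rooting $T_{B_3}^*$ at $r_{B_3}$ makes the top vertex $\rho_i$ of $F_i$ (its vertex closest to $r$ in $T$) the vertex of $F_i$ closest to $r_{B_3}$ in $T_{B_3}^*$. The key observation I would use is that every connecting edge added in Lemma~\ref{lemSpTrSet} is a contracted $W$-path whose internal vertices lie outside $V_{B_3}$; a case analysis on such a path $P_{ab}$ in the rooted tree $T$ (according to whether $a$ is an ancestor of $b$, or their least common ancestor is a third vertex) should show that the $T$-neighbour of its endpoint $b \in F_i$ along $P_{ab}$ always lies outside $V_{B_3}$. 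Consequently $b$ has no $B_3$-ancestor inside $F_i$, so the parent edge of $F_i$ in $T_{B_3}^*$ necessarily attaches at $\rho_i$, which is precisely what is needed for consistency. Together with Proposition~\ref{propB3C2AddConn}, which guarantees that the added edges form a tree, this would yield a rooted spanning tree $T_{B_3}^*$ with the required remember-number bounds.

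For the second claim I would simply set $\cR_V = \{\, r_{B_3} : B_3 \in \cB_3 \,\}$. That each $3$-block receives exactly one root is immediate from the case distinction above, since $r_{B_3}$ is uniquely determined; here I would use that $r$ is chosen not to be a cut vertex (as in Lemma~\ref{lemConnTD3ConnDef}), so that it roots exactly the one block containing it, while every other block is rooted at the single cut vertex through which it is reached from $r$. The only point requiring care is that a cut vertex may serve as the root of several child blocks; this is harmless, as each individual block still has a unique designated root in $\cR_V$, matching the statement. I would note that the MSOL-definability of $\cR_V$ and of the assignment $B_3 \mapsto r_{B_3}$ is not needed for this existence lemma and would be treated separately, using the colouring $\gamma$ and the path predicates already available.
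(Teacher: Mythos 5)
Your reduction of the lemma to a root-choice problem is sound (rooting indeed does not affect $er$ or $fr$), and your handling of the case $r \in V_{B_3}$ and of $1$-cut parents matches the paper. The genuine gap is your root selection when the parent cut of $B_3$ is a $2$-cut $C_2 = \{x,y\}$: you break the tie by the colouring $\gamma$, but this root is not free to choose --- it is forced by orientations already fixed elsewhere. The virtual edge $\{x,y\}$ is shared by the \emph{two} 3-blocks incident to $C_2$, and all blocks must draw their spanning-tree edges from one global set $\cS_E$ carrying one global orientation (this is the whole point of the construction; cf.\ property \ref{lemSpTrSetsDef6} in the proof of Lemma \ref{lemSpTrSetsDef}, ``without altering the edge direction of any other edge in $\cS_E$''). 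In the parent block of $C_2$, the cut $C_2$ is a \emph{child} cut, and there the orientation of $\{x,y\}$ is forced: by Proposition \ref{propSpTrSetCh} one endpoint, say $y$, is the root of its component of the inherited forest, and the edge must be directed from $x$ to $y$, since directing it from $y$ to $x$ would give $x$ two parents whenever $x$ is not the root of its own component. Consequently, in $B_3$ itself the edge $\{x,y\}$ arrives with the fixed direction $x \to y$, and since an edge now points \emph{into} $y$, the tree $T_{B_3}^*$ can only be rooted at $x$. If $\gamma(y) < \gamma(x)$, your rule picks $y$, and then no orientation of the remaining added edges makes $T_{B_3}^*$ a tree rooted at $y$ without flipping either $\{x,y\}$ (breaking the parent block's tree) or edges inherited from $T$. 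Your side remark that the chosen vertex is ``the vertex of $V_{B_3}$ closest to $r$ in $T$'' is also unjustified for $2$-cuts: $T$-paths from $r$ may enter $B_3$ through $x$ or through $y$, and by Proposition \ref{propSpTrSetPar} both are roots of their components, so neither is distinguished.

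The paper resolves exactly this point: the root of a non-root block is defined as the \emph{tail} $x$ of the directed parent-cut edge, where that direction is inherited from $T$ where possible and otherwise copied from the parent block (``if $\{x,y\}$ has been added to $\cS_E$ by the parent block of $C_2$, we use the same orientation''); only an edge used by no other block may be oriented arbitrarily. Your proposal can be repaired by replacing the colouring tie-break with this rule; the remainder of your argument (the case analysis on contracted $W$-paths showing that added edges enter foreign components at their top vertices) is essentially the content of Propositions \ref{propSpTrSetCh} and \ref{propSpTrSetPar} in the paper's proof.
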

\begin{proof}
	We use the same notation as in the proof of Lemma \ref{lemSpTrSet}. Since $T =
	(V, F)$ is a rooted spanning tree, we know that its components $F_1,\ldots,F_c$
	in $B_3$ are rooted trees as well, see Figure \ref{figSpTrEx2} for an illustration.
	Since the direction between block and cut bags of a Tutte decomposition of a
	block of $G$ are based on the root of the spanning tree $T$ (see Remark
	\ref{remSpT} and the proof of Lemma \ref{lemConnTD3ConnDef}), we observe the following.
	Let $C_2 = \{x, y\} \in \cC_2$ denote an incident 2-cut of $B_3$ with $\{x,
	y\} \notin F$. There are two cases we have to consider. Either, $C_2$ is the
	parent cut of $B_3$ or it is a child cut.
	\begin{proposition}\label{propSpTrSetCh}
		Let $C_2$ be a child cut of $B_3$. Wlog.\ $x$ is a vertex in a tree $F_i$ and
		$y$ is the root of a tree $F_j$.
	\end{proposition}
	\begin{proof}
		Suppose not. We know that there is a path $P_{xy}$ between $x$ and $y$ in $T$.
		If $y$ is a non-root vertex in $F_j$, then we cannot direct the edges of
		$P_{xy}$ in $T$ such that every vertex has precisely one parent. Hence, $T$ is
		not a directed tree and we have a contradiction.
	 \end{proof}
	\begin{proposition}\label{propSpTrSetPar}
		Let $C_2$ be the parent cut of $B_3$. Then, $x$ and $y$ are roots of two trees
		$F_i$ and $F_j$.
	\end{proposition}
	\begin{proof}
		For any vertex $v \in V_{B_3}$ we know by definition (see the proof of Lemma
		\ref{lemConnTD3ConnDef}) that for every vertex $v \in V_{B_3}$, the directed
		path from the root $r$ of $T$ to $v$ in $T$ is either a subpath of the
		directed path from $r$ to $x$ or from $r$ to $y$.
		Hence, neither $x$ nor $y$ can have a parent in $T_{B_3}$.
	 \end{proof}
	We can direct the additional edges using Propositions \ref{propSpTrSetCh}
	and \ref{propSpTrSetPar}. In the case that $C_2$ is a child cut, we can always
	direct the edge $\{x, y\}$ from $x$ to $y$ (using the notation of Proposition
	\ref{propSpTrSetCh}).
	If $C_2$ is the parent cut, we know by Proposition \ref{propSpTrSetPar} that we
	can orient $\{x, y\}$ arbitrarily. There are two cases we need to analyze to
	make sure we do not create a conflicting orientation of $\cS_E$.
	In the first case, the edge $\{x, y\}$ has been added to $\cS_E$ by the parent
	block of $C_2$. We then use the same orientation. In the second case, if $\{x,
	y\} \notin \cS_E$, we can choose the direction arbitrarily.
	\par
	We now turn to finding the set of roots $\cR_V$. If $B_3$ is the root block
	according to the spanning tree of $G$ with root $r_G$, then we add $r_G$ to
	$\cR_V$ as the root of $B_3$. Otherwise, we find its parent cut $C_2 = \{x,
	y\}$. Assume wlog.\ that the edge $\{x, y\}$ is directed from $x$ to $y$
	according to the construction explained above. Then we add $x$ to $\cR_V$.
	Since each cut-bag has precisely one child block bag (Definition
	\ref{defTutDec}\ref{defTutDec2}), we know that this vertex is unique for each
	3-block $B_3$.
 \end{proof}
\begin{lemma}\label{lemSpTrSetsDef}
	The sets $\cS_E$ and $\cR_V$ of Lemmas \ref{lemSpTrSet} and \ref{lemSpTrDirSet}
	are existentially MSOL-definable with $3k+2$ parameters.
\end{lemma}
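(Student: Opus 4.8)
The plan is to encode, in the $3k+2$ parameters, a single rooted spanning tree of the augmented graph $G' = (V, E \cup E')$ and to read both $\cS_E$ and $\cR_V$ off from it, so that no information is guessed per block. Concretely, I would use one edge-set parameter for a spanning tree $F \subseteq E$ of the original graph $G$ (the set whose restriction $F[V_{B_3}]$ furnishes the forests of Lemma \ref{lemSpTrSet}), together with the $3k+1$ parameters that Lemma \ref{lemtwkOrd} requires to orient a subset of the edges of a graph of treewidth $3k-1$, namely a $3k$-colouring of $G'$ and one flip edge set. Here $G'$ itself still has treewidth at most $3k-1$: the proof of Corollary \ref{cor2ConnClasses}\ref{cor2ConnClasses2} shows that each virtual edge sits on a $2$-cut which is a safe separator, so adding it preserves the treewidth bound, and iterating over all $2$-cuts keeps $G'$ within treewidth $3k-1$. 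This accounts for exactly $1 + (3k+1) = 3k+2$ parameters, and the existence of a suitable $F$ is guaranteed by Lemma \ref{lemKOPGERFR} together with Lemma \ref{lemSpTrSet}. The set $\cS_E$ is then the union of $F$ with the reconnecting virtual edges it induces, oriented by the same colouring and flip set, and $\cR_V$ is obtained from the rooting; crucially, both are \emph{derived} from the above data and cost no further parameters.

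Next I would make $\cS_E$ explicit and verify its validity block by block. For a fixed $3$-connected $3$-block $B_3 = G'[X]$, Proposition \ref{propB3C2AddConn} shows that adding the incident $2$-cut edges to the forest $F[X]$ yields a connected graph, so I can select a canonical spanning tree $T_{B_3}^*$ by adding a $2$-cut edge $\{x,y\}$ exactly when $x$ and $y$ still lie in distinct components, breaking ties by the fixed orientation; $\cS_E$ is the union of these edge sets over all blocks. Since $B_3$ is $3$-connected its faces are MSOL-definable via Proposition \ref{propKOFB3Conn}, and by Corollary \ref{corKOP2DefVirt} every predicate available for a $3$-connected $k$-outerplanar graph, including the virtual-edge ones, transfers to $B_3$. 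Hence I can write a predicate, universally quantified over all vertex sets $X$ recognised as $3$-connected $3$-blocks by Proposition \ref{propConnTD3DefUtil}\ref{propConnTD3DefUtil5}, stating that $T_{B_3}^*$ is a spanning tree of $B_3$ with $er(B_3, T_{B_3}^*) \le 2k$ and $fr(B_3, T_{B_3}^*) \le k$. Each remember number is a fixed constant, so each bound is expressed as the non-existence of $2k+1$ (respectively $k+1$) distinct non-tree edges whose fundamental cycles pass through one common tree edge (respectively meet one common face boundary $\FaceB_3$), where ``the fundamental cycle of $e'$ uses the tree edge $e$'' is the standard predicate asserting that $e$ lies on the unique $T_{B_3}^*$-path between the endpoints of $e'$.

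The set $\cR_V$ is then read directly off the rooted orientation, following the proof of Lemma \ref{lemSpTrDirSet}. For the unique root block I place the global root $r_G$ in $\cR_V$; for every other block $B_3$ with parent $2$-cut $\{x,y\}$ oriented from $x$ to $y$ by Propositions \ref{propSpTrSetCh} and \ref{propSpTrSetPar}, the root is $x$, which is exactly the unique vertex of $T_{B_3}^*$ with no incoming tree edge inside $B_3$. Thus ``$v \in \cR_V$'' becomes the MSOL statement that $v$ is the in-degree-zero vertex of the rooted block tree containing it, and the two propositions above guarantee both that this vertex is unique for each block and that the orientation of the reconnecting cut edges is conflict-free across adjacent blocks, so that the single global rooting is consistent with all per-block rootings.

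The main obstacle is the face remember number. In contrast to the edge remember number, $fr$ is defined relative to an embedding, and faces are only combinatorially accessible for $3$-connected graphs, whereas $G'$ is in general neither $3$-connected nor even planar once virtual edges are present. The whole point of working through the Tutte decomposition is precisely to avoid ever referring to faces of $G'$ globally and instead to verify the face bound only inside each $3$-connected $3$-block, where Proposition \ref{propKOFB3Conn} and Corollary \ref{corKOP2DefVirt} apply; that the block-local trees $T_{B_3}^*$ still realise $fr(B_3, \cdot) \le k$ is exactly what Lemma \ref{lemSpTrSet} provides, so the definability argument and the existence argument dovetail. A secondary point, routine but requiring care, is that restricting the single globally guessed orientation to each block yields a genuine rooting whose per-block roots coincide with $\cR_V$; this is the content of Lemma \ref{lemSpTrDirSet} and needs only to be transcribed into a predicate.
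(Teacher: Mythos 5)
Your proposal is correct and follows essentially the same route as the paper: guess one global spanning tree plus the $3k$-colouring and flip edge set of Lemma \ref{lemtwkOrd} (giving $3k+2$ parameters), derive $\cS_E$ and $\cR_V$ from these at no extra parameter cost, and verify the $er \le 2k$, $fr \le k$ bounds and the rooting block-by-block via Proposition \ref{propKOP3CSpTr} and Corollary \ref{corKOP2DefVirt}, exactly as in properties (i)--(vi) of the paper's proof. The only cosmetic difference is that you construct $\cS_E$ and $\cR_V$ canonically, whereas the paper simply existentially quantifies them inside the predicate and checks the same per-block conditions, delegating existence to Lemmas \ref{lemSpTrSet} and \ref{lemSpTrDirSet}.
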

\begin{proof}
	Let $G = (V, E)$ denote a $k$-outerplanar graph, such that the virtual edges
	introduced by the Tutte decompositions of its 2-connected blocks are already
	included in $E$.
	On a high level, for defining $\cR_V$ and $\cS_E$, we need to encode is the
	following:
	\begin{enumerate}[label={(\roman*)}]
	  \item There are sets $\cR_V \subseteq V$, $F \subseteq E$ and $F' \subseteq
	  E$ with $\cS_E = F \cup F'$.\label{lemSpTrSetsDef1}
	  \item Guess a root $r_T \in V$, such that $F$ is the edge set of a rooted
	  	spanning tree in $G$.\label{lemSpTrSetsDef2}
	  \item An edge $e = \{x, y\}$ is possibly (but not necessarily) a member of
	  	$F'$, if $\{x, y\} \in \cC_2$ and $e \notin F$.\label{lemSpTrSetsDef3}
	  \item For all $B_3 \in \cB_3$, the graph $T_{B_3}^* = (B_3, \cS_E
	  \cap (B_3 \times B_3))$ is a spanning tree of the graph $G_{B_3} = G[B_3]$
	  with $er(G_{B_3}, T_{B_3}^*) \le 2k$ and $fr(G_{B_3}, T_{B_3}^*) \le
	  k$.\label{lemSpTrSetsDef4}
	  \item A vertex $v \in V$ is possibly (but not necessarily) a member of
	  $\cR_V$, if it is a member of a 2-cut $\{v, w\} \in
	  \cC_2$.\label{lemSpTrSetsDef5}
	  \item For each 3-connected 3-block $B_3 \in \cB_3$, there is a vertex
	  $r_{B_3} \in \cR_V$, such that $T_{B_3}^*$ can be rooted at $r_{B_3}$
	  (without altering the edge direction of any other edge in
	  $\cS_E$).\label{lemSpTrSetsDef6}
	\end{enumerate}
	The existence of such sets $\cR_V$ and $\cS_E$ is shown in Lemmas
	\ref{lemSpTrSet} and \ref{lemSpTrDirSet}, so we do not need to encode all
	details mentioned in the corresponding proofs explicitly. Property
	\ref{lemSpTrSetsDef4} is MSOL-definable by Proposition \ref{propKOP3CSpTr},
	since $G_{B_3}$ is 3-connected. \par
	As parameters we have the edge set of the spanning tree and
	again a $3k$-coloring and one edge set to fix the orientation of the edges in
	$\cS_E$.
	\par
	The details of the predicates encoding the rest
	of the properties are given in Appendix \ref{appSecMSOLHGDSE} and complete the
	proof.
 \end{proof}
 We can now use the above results to conclude that we can find predicates
defining tree decompositions of 3-connected 3-blocks of $k$-outerplanar graphs.
\begin{corollary}\label{corKOP3C3BTDDef}
	Let $G = (V, E)$ be a $k$-outerplanar graph. Then, there exist predicates
	existentially defining tree decompositions of width at most $3k$ for each
	3-connected 3-block of $G$ with a constant number of parameters.
\end{corollary}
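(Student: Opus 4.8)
The plan is to read this statement as the natural combination of the single-block construction of Lemma~\ref{lemKOP3CTDDef} with the global spanning-tree machinery of Lemmas~\ref{lemSpTrSet}--\ref{lemSpTrSetsDef}, using Corollary~\ref{corKOP2DefVirt} to pass from genuine $3$-connected $k$-outerplanar graphs to the $3$-blocks of $G$. By Lemma~\ref{lem3CCPlanar}, every $3$-connected $3$-block $B_3 = (V_{B_3}, E_{B_3})$ of $G$, including its virtual edges, is at most $k$-outerplanar, and Lemma~\ref{lemKOP3CTDDef} provides an existentially MSOL-definable tree decomposition of width at most $3k$ for any single such block, using $4k+4$ parameters. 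Since by Corollary~\ref{corKOP2DefVirt} (which, as noted there, also holds for existential definability) these predicates remain definable when the virtual edges of the Tutte decomposition are only implicitly available, the only remaining task is to show that one choice of a \emph{constant} number of parameters serves \emph{all} $3$-connected $3$-blocks of $G$ at once, circumventing the parameter blow-up illustrated in the preceding example.

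First I would inspect which of the $4k+4$ parameters of Lemma~\ref{lemKOP3CTDDef} genuinely need to be fixed globally. The edge set of the spanning tree together with its rooting is the critical one, because the orientation of each block's tree decomposition must be compatible across the entire hierarchical decomposition of $G$ (cf.\ Remark~\ref{remSpT} and the assembly in Lemma~\ref{lemConnTD3ConnDef}). Here I would invoke Lemmas~\ref{lemSpTrSet}, \ref{lemSpTrDirSet} and \ref{lemSpTrSetsDef}: the single set $\cS_E$ contains, for \emph{every} $3$-connected $3$-block simultaneously, a rooted spanning tree $T_{B_3}^*$ of $B_3$ with $er(B_3, T_{B_3}^*) \le 2k$ and $fr(B_3, T_{B_3}^*) \le k$, and $\cR_V$ supplies a unique root per block; both are existentially definable with $3k+2$ parameters. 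By Lemma~\ref{lemPlTWERFR} these remember-number bounds give width at most $\max\{2k+1, 3k\} = 3k$ inside each block, matching the claimed bound.

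For the auxiliary data I would argue that no further global parameters are needed. The $\Bag$-predicates take the block's vertex set $X$ as an argument, so everything internal to a block can be defined relative to $X$. In particular, the stripping-layer partition of $G[X]$ (and hence the identification of an incident face of lowest layer number, used to pick anchor and co-anchor edges as in Lemma~\ref{lemKOPG3COrd}) can be \emph{existentially quantified locally} inside the predicate via Lemma~\ref{lemKOPGLayerDef}, since that lemma asserts existential definability rather than requiring the layer sets to be free parameters; these bound variables therefore cost nothing globally. The $3k$-coloring used to orient the non-tree edges via Lemma~\ref{lemtwkOrd} can be taken as one global coloring of $G'$, since its restriction to each block is a proper $3k$-coloring. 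Substituting $\cS_E$, $\cR_V$ and the coloring into the predicates of Lemma~\ref{lemKOP3CTDDef} then yields exactly the predicates $\Bag_{\tau_1}^{\cB_3},\ldots,\Bag_{\tau_t}^{\cB_3}, \Bag_{\sigma_1}^{\cB_3},\ldots,\Bag_{\sigma_s}^{\cB_3}$ and $\Parent_{\cB_3}$ required by Proposition~\ref{propConnTD3DefUtil}, with a parameter count that is a constant-sized sum (dominated by $3k+2$) and hence constant for fixed $k$.

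The main obstacle I anticipate is verifying that the single global spanning tree and root choice is \emph{internally correct and mutually consistent} across blocks that share a $2$-cut: such a cut vertex lies in two different $3$-blocks, where the induced forest components and their rootings could in principle conflict. I expect the tree structure of the Tutte decomposition to be decisive, since it lets one propagate the orientation away from the fixed root $r$ block by block, so that restricting $\cS_E$ and $\cR_V$ to any individual block recovers a correctly rooted spanning tree of the required remember numbers. This is precisely the content that Lemmas~\ref{lemSpTrSet}--\ref{lemSpTrSetsDef} were designed to pre-establish, so once they are applied the corollary follows by assembling the pieces above.
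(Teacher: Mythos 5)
Your overall architecture is the same as the paper's: apply Lemma \ref{lemKOP3CTDDef} to each 3-connected 3-block (still $k$-outerplanar by Lemma \ref{lem3CCPlanar}, with virtual edges handled by the existential version of Corollary \ref{corKOP2DefVirt}), and avoid the parameter blow-up by sharing the sets $\cS_E$ and $\cR_V$ of Lemmas \ref{lemSpTrSet}--\ref{lemSpTrSetsDef} and one global coloring across all blocks. The width bound $\max\{2k+1,3k\}=3k$ and that part of the parameter accounting match the paper.

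The gap is your claim that the stripping-layer partition of each block --- and with it the lowest-layer face and the anchor/co-anchor edges of Lemma \ref{lemKOPG3COrd} --- ``can be existentially quantified locally inside the predicate'' and therefore costs no parameters. This misreads what ``existentially MSOL-definable with parameters'' means here: by the definition in Section \ref{secPrelDef}, the layer sets of Lemma \ref{lemKOPGLayerDef} \emph{are} free parameters that must be substituted by one fixed value, and Lemma \ref{lemExDefTD} quantifies all parameters exactly once, at the outermost level, inside a single predicate that verifies the whole decomposition. If instead every instance of a $\Bag$- or $\Parent$-predicate existentially quantifies its own layer partition, different instances can certify their sets with respect to \emph{different} admissible partitions --- these are not unique, since the embedding of a 3-connected planar graph is unique only up to the choice of the outer face, and Proposition \ref{propKOPGL} characterizes all of them. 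Then the anchors, the sets $C(v,f_\ell)$ occurring in the bags, and the $\Parent$-links around a vertex need not be mutually consistent, so the family of sets satisfying the $\Bag$-predicates need not form a tree decomposition at all, and Lemma \ref{lemExDefTD} is inapplicable. The paper avoids exactly this by keeping such per-block choices as \emph{global} parameter sets: its proof adds two further sets $E_\cA$ and $E_\cA'$ containing anchor and co-anchor edges for all 3-connected 3-blocks simultaneously, so that restricting them to a block recovers that block's fixed choice, rather than re-guessing anything inside the predicates. Your closing worry about cross-block consistency is the right instinct, but it applies to the layers and anchors just as much as to the spanning trees, and local existential quantification is precisely the step that fails; replacing it by the paper's global-set mechanism repairs the argument.
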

\begin{proof}
	By Lemma \ref{lemKOP3CTDDef} we know that a 3-connected $k$-outerplanar graph
	admits an MSOL-definable tree decomposition of width $3k$, based on a rooted
	spanning tree of the graph. By Corollary \ref{corKOP2DefVirt} we can define
	such a tree decomposition in a structure, which also includes the virtual edges
	of a 3-block in $G$ (and by Lemma \ref{lem3CCPlanar} we know that this graph
	is still $k$-outerplanar).
	Finally, by Lemmas \ref{lemSpTrSet}, \ref{lemSpTrDirSet} and
	\ref{lemSpTrSetsDef} we know that we can find definable edge and vertex sets
	which contain the edges of spanning trees for each 3-connected 3-block with the
	required bound on their vertex and edge remember numbers without violating the
	cardinality constraint of monadic second order logic. 
	Similarly, we can find sets containing anchor and co-anchor edges for all
	3-connected 3-blocks in a straightforward way. Hence, also for defining the
	ordering of all incident edges of all vertices in a 3-connected 3-block, two
	sets are sufficient.
	Subsequently, the number of parameters involved is
	bounded by a constant. For the exact bounds see the corresponding result.
\end{proof}
Combining Lemma \ref{lemConnTD3ConnDef} and Corollary \ref{corKOP3C3BTDDef}
yields that $k$-outerplanar graphs admit existentially MSOL-definable tree
decompositions of width at most $3k+3$. 
It then follows from Lemma \ref{lem5_4_Kal} that recognizability
implies CMSOL-definability for $k$-outerplanar graphs. In the light of
Courcelle's Theorem \cite{Cou90}, we have the main result of this paper.
\begin{theorem}
	CMSOL-definability equals recognizability for $k$-outerplanar graphs.
\end{theorem}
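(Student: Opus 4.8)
The plan is to establish the two directions of the equivalence separately and then combine them.

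For the direction that CMSOL-definability implies recognizability, I would invoke Courcelle's Theorem \cite{Cou90} directly. Since every $k$-outerplanar graph has treewidth at most $3k-1$ \cite{Bod98}, any CMSOL-sentence $\phi_P$ defining a property $P$ is decided by a finite-state tree automaton that processes width-$(3k-1)$ tree decompositions; this automaton witnesses that $P$ is recognizable. This direction needs no new machinery.

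The substantial direction is that recognizability implies CMSOL-definability, and here the plan is to appeal to Lemma \ref{lem5_4_Kal}, which reduces the task to exhibiting an MSOL-definable tree decomposition of bounded width for every $k$-outerplanar graph. I would assemble such a decomposition from the two culminating results of Section \ref{secG3CC}. Corollary \ref{corKOP3C3BTDDef} supplies, for each $k$-outerplanar graph $G$, predicates that existentially define width-$3k$ tree decompositions of all 3-connected 3-blocks of $G$ using only a constant number of parameters; feeding these into Lemma \ref{lemConnTD3ConnDef} then yields an existentially MSOL-definable tree decomposition of $G$ itself of width at most $3k+3$, again with constantly many parameters. By Lemma \ref{lemExDefTD} this existentially defined object is guaranteed to describe a genuine rooted width-$(3k+3)$ tree decomposition, so the hypothesis of Lemma \ref{lem5_4_Kal} is satisfied for the width bound $3k+3$, and we obtain a CMSOL-sentence $\Phi$ with $G \models \Phi$ if and only if $P(G)$. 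Combining the two directions gives that a property of $k$-outerplanar graphs is CMSOL-definable exactly when it is recognizable, which is the statement of the theorem.

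The main obstacle is not the final combination — that step is essentially bookkeeping — but verifying the two preconditions of Lemma \ref{lem5_4_Kal}: that the width of the definable tree decomposition is bounded by a constant independent of $|V|$, and that the decomposition is genuinely (existentially) MSOL-definable with only constantly many parameters. Both rest on the full apparatus of Section \ref{secG3CC}. In particular, they rely on circumventing the cardinality constraint of MSOL when one would otherwise have to guess a spanning tree in each of the potentially many 3-connected 3-blocks simultaneously — resolved by guessing a single edge set and a single vertex set in Lemmas \ref{lemSpTrSet}--\ref{lemSpTrSetsDef} — and on replacing edge quantification over the virtual Tutte edges by vertex quantification (Corollary \ref{corKOP2DefVirt}, via Theorem \ref{thm1DefEq2Def}). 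Once those results are in place, the theorem follows at once.
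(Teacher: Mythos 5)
Your proposal is correct and follows essentially the same route as the paper: the paper likewise obtains the definable width-$(3k+3)$ tree decomposition by combining Corollary \ref{corKOP3C3BTDDef} with Lemma \ref{lemConnTD3ConnDef}, applies Lemma \ref{lem5_4_Kal} to conclude that recognizability implies CMSOL-definability, and cites Courcelle's Theorem \cite{Cou90} for the converse direction. Your additional remarks about Lemma \ref{lemExDefTD}, the cardinality constraint, and virtual-edge quantification simply make explicit machinery that the paper's one-paragraph conclusion leaves implicit.
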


\section{Conclusion}\label{secConc}
In this paper we have shown that recognizability implies definability in
counting monadic second order logic for $k$-outerplanar graphs, resolving a special
case of a conjecture by Courcelle \cite{Cou90}. Starting at the more
restrictive case of 3-connected $k$-outerplanar graphs, we proved that one can
use hierarchical graph decompositions to define tree decompositions for general
$k$-outerplanar graphs in monadic second order logic. We have also given
indications that this technique might be applicable for other graph classes as
well (see Corollary \ref{cor2ConnClasses}), depending on how their tree
decompositions are defined in MSOL. 
3-Connected graphs often have favorable properties
when it comes to defining graph properties in MSOL. For example, in our proof we
used the fact that the face boundaries of a 3-connected can be expressed in
strictly combinatorial terms and are definable in a straightforward way (see
Propositions \ref{prop3ConnFB} and \ref{propKOFB3Conn}). Hence, we believe that
the techniques presented in this paper can be helpful in resolving the
conjecture in its general statement.

	\bibliography{References}
	
	\newpage
	\appendix
	
	\section{Monadic Second Order Predicates and Sentences}\label{appSecMSOL}
We build sentences in monadic second order logic from a collection
of predicates. Once we defined these predicates they will be the building blocks
of more complex expressions, joined by MSOL-connectives and/or quantification of
its declared variables. Hence, we follow the ideas of the work of Borie et al.
\cite{BPT92}, who also give a large list of predicates and their definitions. \\
Note that the length of our sentences and formulas always has to be bounded by
some constant, independent of the size of the input graph. \par
We will denote single element variables by small letters, where $v, w, v', w',
\ldots$ typically represent vertices and $e, f, e', f',\ldots$ edges. Set
variables will be denoted by capital letters. Unless stated otherwise
explicitly, $V$ always denotes the vertex set of some input graph $G$ and $E$
its edge set. Since we always assume our predicates to appear in the context of
such a graph we might drop these two variables as an argument of a predicate.
\par
By some trivial definition, the following predicates are MSOL-definable (see
also Theorem 1 in \cite{BPT92}). In our text we might refer to them as the
\emph{atomic} predicates of monadic second order logic over graphs.
\begin{enumerate}[label={(\Roman*)}]
  \item $v = w$ (Vertex equality)
  \item $\Inc(e, v)$ (Vertex-edge incidence)
  \item $v \in V$ (Vertex membership)
  \item $e \in E$ (Edge membership)
\end{enumerate}
Note that to shorten our notation we might omit statements such as $v \in V$ or
$e \in E$ when quantifying over a variable. In this case we are referring to some
vertex/edge in the whole graph and the interpretation of the variables will
always be obvious from the context or the notational conventions explained above. \par
From the atomic predicates, one can directly derive the following:
\begin{itemize}
  \item $\Adj(v, w, E)$ (Adjacency of $v$ and $w$ in $E$)
  \item $\Edge(e, v, w)$ ($e = \{v, w\}$)
\end{itemize}
In a straightforward way (and by Theorem 4 in \cite{BPT92}), one can see that
the following are MSOL-definable:
\begin{itemize}
  \item $V = V' \cup V''$, $V = V' \setminus V''$, $V = V' \cap V''$ (plus the
  edge set equivalents)
  \item $V' = \IncV(E')$ [$E' = \IncE(V')$] ($V'$ [$E'$] is the set of incident
  vertices [edges] of $E'$ [$V'$])
  \item $\deg(v, E) = k$ ($v$ has degree $k$ in $E$, where $k$ is a constant)
  \item $\Conn(V, E)$, $\Conn_k(V, E)$, $\Cycle(V, E)$, $\Tree(V, E)$, $\Path(V,
  E)$
  \item $\Minor_H$ (A graph contains a minor $H$ of fixed size)
\end{itemize}

\subsection{Bounded Vertex and Edge Remember Number}\label{appSecMSOLVRER}
In this section we show how to define tree decompositions of graphs for which we
can find a spanning tree with bounded vertex and edge remember
number. Note that this immediately implies a bounded-width tree decomposition
for bounded degree $k$-outerplanar graphs. 
First, we are going to show how to identify an edge set as a spanning tree with
vertex remember number less than or equal to $\kappa$ and edge remember number
less than or equal to $\lambda$, both constant.
\begin{align*}
	\exists E_T &(\Tree(V, E_T) \wedge vr(E_T) \le \kappa \wedge er(E_T) \le
	\lambda) \\
	vr(E_T) \le \kappa \Leftrightarrow &(\forall v \in V)(\forall e_1 \in E
	\setminus E_T)\cdots \forall (e_{\kappa + 1} \in E \setminus E_T)\\
	&\Big(\Big(\bigwedge_{i = 1,\ldots,\kappa + 1} \FundCyc(v, e_i)\Big) \to
	\bigvee_{1 \le i < j \le \kappa + 1} e_i = e_j\Big) \\
	er(E_T) \le \lambda \Leftrightarrow &(\forall e \in E)(\forall e_1 \in E
	\setminus E_T)\cdots \forall (e_{\lambda + 1} \in E \setminus E_T)\\
	&\Big(\Big(\bigwedge_{i = 1,\ldots,\lambda + 1} \FundCyc(e, e_i)\Big) \to
	\bigvee_{1 \le i < j \le \lambda + 1} e_i = e_j\Big)
\end{align*}
In the following, assume that $E_T$ is the edge set of the spanning tree of $G$
(as shown above), which additionally has edge orientations, defined in MSOL by
predicates $\head$ and $\tail$.
\begin{align*}
	\Bag_V(v, X) \Leftrightarrow &v' \in X \leftrightarrow (v' = v \vee (\exists
	e \in E \setminus E_T)(\Inc(v', e) \\ 
	&\wedge \FundCyc(v, e))) \\
	\Bag_E(e, X) \Leftrightarrow &v' \in X \leftrightarrow (\Inc(v', e) \vee
	(\exists e' \in E \setminus E_T)(\Inc(v', e') \\ 
	&\wedge \FundCyc(e, e'))) \\
	\Parent(X_p, X_c) \Leftrightarrow &\exists v(\exists e \in E_T)((\Bag_V(v,
	X_p) \wedge \Bag_E(e, X_c) \wedge \head(v, e)) \\ 
	&\vee (\Bag_V(v, X_c) \wedge \Bag_E(e, X_p) \wedge \tail(v, e))) \\
\end{align*}

\subsection{$k$-Outerplanar Graphs}\label{appSecMSOLKOPG}
Using the forbidden minors ($K_4$ and $K_{2, 3}$), we can define a predicate for
verifying whether a graph is outerplanar in a straightforward way.
\begin{align*}
	\Outerplanar(V', E') \Leftrightarrow \neg (\Minor_{K_4}(V', E') \vee
	\Minor_{K_{2, 3}}(V', E'))
\end{align*}
Following the argumentation in the proof of Lemma \ref{lemKOPGLayerDef}, we can
define our predicate as follows.
\begin{align*}
	\exists V_1 \cdots \exists V_k &\Big(\Part_V(V, V_1,\ldots,V_k) \wedge
	\Outerplanar(V_1, \IncE(V_1)) \\ 
	&\wedge \cdots \wedge \Outerplanar(V_k, \IncE(V_k)) \\
	&\wedge \forall v\Big(\bigwedge_{i = 1,\ldots,k} v \in V_i \to \forall w
	\forall e (\Edge(e, v, w) \\
	&\to (w \in V_{i-1} \vee w \in V_i \vee w \in V_{i + 1})\Big)\Big)
\end{align*}

\subsubsection{3-Connected $k$-Outerplanar Graphs}\label{appSecMSOL3CKOP1}
We first give the necessary definition of defining the ordering $\oriNB$ as
described in Lemma \ref{lemKOPG3COrd}. The first step is to define
face-adjacency of two edges.
\begin{align*}
	\Adj_F(e, f) \Leftrightarrow &\exists v (\Inc(v, e) \wedge \Inc(v, f)) \\
	&\wedge (\exists E' \subseteq E) (\FaceB_3(E') \wedge e \in E' \wedge f \in
	E')
\end{align*}
Next, we define a set to check whether a set of edges is a face-adjacency path
from the one to the other, if they both share a vertex $v$. Intuitively
speaking, this predicate states that each edge in the candidate set $E'$ has
precisely one neighbor in it, if the edge is either $e$ or $f$ and precisely two
otherwise. Furthermore, $E'$ has to consist of a subset of the incident edges of
$v$, without $e_\cA'$ (see the proof of Lemma \ref{lemKOPG3COrd}) and it has to contain
both $e$ and $f$.
\begin{align*}
	\Path_F(E', e, f) \Leftrightarrow & (\exists E'' \subseteq (\IncE(v)
	\setminus e_\cA'))(E' = E'' \cup \{e, f\}) \\
	&\wedge e_1 \in E' \leftrightarrow \Big(\Big((e_1 = e \vee e_1 = f) \wedge
	(\exists e_2 \in E') (\Adj_F(e_1, e_2) \\
	&\wedge (\forall e_3 \in E') ((\neg
	e_2 = e_3) \to \neg \Adj_F(e_1, e_3)))\Big) \\
	\vee &~\Big(\neg (e_1 = e \vee e_1 = f) \wedge (\exists e_2 \in E')(\exists e_3
	\in E')\Big(\Adj_F(e_1, e_2) \\
	&\wedge \Adj_F(e_1, e_3) \wedge (\forall e_4 \in E') ((\neg (e_4 = e_2 \vee
	e_4 = e_3)) \\
	&\to \neg \Adj_F(e_1, e_4)) \Big) \Big) \Big)
\end{align*}
We are now ready to define the predicate for the ordering $\oriNB$.
\begin{align*}
	\oriNB(e, f) \Leftrightarrow \exists E_e \exists E_f (\Path_F(E_e, e_\cA, e)
	\wedge \Path_F(E_f, e_\cA, f) \wedge E_e \subset E_f)
\end{align*}

\subsubsection{Tree Decompositions for 3-Connected $k$-Outerplanar
Graphs}\label{appSecMSOL3CKOPGTD} 
We first show how to define that a spanning tree with edge set $F$ has bounded
face remember number $\nu$ in a 3-connected planar graph $G = (V, E)$, which 
completes the proof of Proposition \ref{propKOP3CSpTr}. Intuitively speaking,
this predicate checks that for each combination of a vertex and a face boundary
$FB$, the number edges, whose fundamental cycle uses both $v$ and some edge in
$FB$, is bounded by $\nu$.
\begin{align*}
	fr(V, E, F) \le \nu &\Leftrightarrow \forall v (\forall E_{FB} \subseteq
	E)\Big(\FaceB_{3}(E_{FB}) 
	\to (\forall e_1 \in E \setminus F)\cdots(\forall e_{\nu + 1} \in E \setminus
	F) \\
	\Big(\Big(&\bigwedge_{1 \le i \le \nu + 1}(\exists E_C \subseteq E)
	(\FundCyc(e_i, C_e) \wedge \neg (C_E \cap E_{FB} = \emptyset) \wedge \Inc(v,
	E_C))\Big) \\
	&\to \bigvee_{1 \le i < j \le \nu+1} e_i = e_j\Big)\Big)
\end{align*}
Next, we will define the edge sets $C(v, f_i)$, as used in the proof of Lemma
\ref{lemKOP3CTDDef}.
\begin{align*}
	E' = C(v, E_{FB}, F) &\Leftrightarrow e \in E' \leftrightarrow (\exists E_C
	\subseteq E) (\FundCyc(e, E_C) \\
	&\wedge \neg (E_C \cap E_{FB} = \emptyset)
	\wedge \Inc(v, E_C))
\end{align*}
We furthermore denote by $C(v, e, F)$ the union of the sets $C(v, f_i)$ and
$C(v, f_j)$ of the two faces $f_i$ and $f_j$, whose face boundaries contain $e$
(such that $e$ is incident to $v$).\par
We now define a predicate identifying a unique face boundary with lowest layer
number for each vertex.
\begin{align*}
	\Layer_i(E_{FB}) \Leftrightarrow & \FaceB_3(E_{FB}) \wedge \exists v (\Inc(v,
	E_{F_B}) \wedge v \in V_i) \\ 
	E' = E_{f_\ell}(v) \Leftrightarrow & (\exists e \in E')\Big(\Inc(v, e) \wedge
	\bigwedge_{i = 1,\ldots,k} \Big(v \in V_i \to \Big(\Big(\Layer_{i-1}(E')
	\\
	&\wedge \neg
 	((\exists f\exists E_f)(\Layer_{i-1}(E_f) \wedge f \in E_f \wedge \Inc(v, f)
 	\wedge \oriNB(f, e))))\Big) \\
 	&\vee \Big(\Layer_i(E') \wedge \neg (\exists E_f
 	(\Layer_{i-1}(E_f) \wedge \Inc(v, E_f))) \\
 	&\wedge \neg ((\exists f \exists E_f)(\Layer_i(E_f) \wedge f \in E_f \wedge
 	\Inc(v, f) \wedge \oriNB(f, e)))\Big)\Big)\Big)
\end{align*}
\par
We are now ready to define the $\Bag$-predicates of our tree
decomposition. Note that the bag type $\sigma$ can be defined in the same way as
for bounded degree $k$-outerplanar graphs, hence we refer to Appendix
\ref{appSecMSOLVRER} for the details. The types $\sigma_H$ can be defined using
the predicates given above.
We assume that we are given an arbitrary but fixed orientation on the edges as
described in the proof of Lemma \ref{lemKOP3CTDDef}.
\begin{align*}
	\Bag_{\sigma_H}(e, X) \Leftrightarrow & v \in X \leftrightarrow \head(v, e)
		\vee (\exists e' \in (C(v, e, F) \cup C(v, E_{f_\ell}(\head(e)), F)) \\
		& (\Inc(v, e') \wedge \forall w (\neg(v = w) \wedge \Inc(w, e')) \to \col(v)
		< \col(w))
\end{align*}
We can define the bag type $\sigma_T$ by replacing '$\head$' by '$\tail$' in the
above predicate.
\par
We now define the set of anchor edges $E_\cA$ and co-anchor edges $E_\cA'$.
For each vertex $v$ we need to find a face with lowest layer number $f_\ell$.
Let $e_{\ell_1}$ and $e_{\ell_2}$ denote the incident edges of $v$ bounding
$f_\ell$. Then, $e_{\ell_1}$ has to be contained in $E_\cA$ and $e_{\ell_2}$ in
$E_\cA'$. Note that this choice is arbitrary and that we have to choose
precisely one such face for each vertex in the graph.
\begin{align*}
	E' = E_\cA \Leftrightarrow &\forall v \exists e (e \in E' \wedge \Inc(v, e)
	\wedge e \in E_{f_\ell}(v) \\
	&\wedge \forall e'((\Inc(v, e') \wedge \neg e = e') \to \neg (e' \in E'))) \\
	E' = E_\cA' \Leftrightarrow &(\forall e \in E_\cA)\forall v \exists e'(e' \in
	E' \wedge \Inc(v, e) \wedge \Inc(v, e') \wedge e \in E_{f_\ell}(v) \wedge e'
	\in E_{f_\ell}(v) \\
	&\wedge \forall e'' ((\Inc(v, e'') \wedge \neg e'' = e') \to \neg (e'' \in
	E')))
\end{align*}
\par
We now turn to defining the $\Parent$-predicate and begin by defining the case
when a bag of type $\sigma$ is a bag of type $\sigma_T$.
\begin{align*}
	\Parent_{\sigma\sigma_T}(X, Y) \Leftrightarrow &(\exists e \in F)
	(\Bag_\sigma(e, X) \wedge \Bag_{\sigma_T}(e, Y)) \\
	\vee &(\exists e \in F)(\exists e_\ell \in E_{f_\ell}(\tail(e)) \cap
	\Inc(\tail(e)))(\Adj_F(e, e_\ell) \\
	&\wedge \Bag_{\sigma}(e_\ell, X) \wedge \Bag_{\sigma_T}(e, Y))
\end{align*}
Similarly, we can define the case when a bag of type $\sigma_H$ is the parent of
a bag of type $\sigma$.
\begin{align*}
	\Parent_{\sigma_H\sigma}(X, Y) \Leftrightarrow &(\exists e \in F)
	(\Bag_{\sigma_H}(e, X) \wedge \Bag_{\sigma}(e, Y)) \\
	\vee &(\exists e \in F)(\exists e_\ell \in E_{f_\ell}(\head(e)) \cap
	\Inc(\head(e)))(\Adj_F(e, e_\ell) \\
	&\wedge \Bag_{\sigma_H}(e, X) \wedge \Bag_{\sigma}(e_\ell, Y))
\end{align*}
We now consider edges between bags of type $\sigma_H/\sigma_T$. In the
following, we define the case when all bags involved are $\sigma_T$-bags and
note that the other cases can be defined by the obvious replacements. We first
define the outgoing edges of the $\sigma_T$-bag corresponding to the unique
incoming edge in the directed spanning tree $T = (V, F)$.
\begin{align*}
	\Parent_{\sigma_T\sigma_T}^I(X, Y) \Leftrightarrow & (\exists e^* \in
	F)(\exists e \in E) \Big(\Bag_{\sigma_T}(e^*, X) \wedge \Bag_{\sigma_T}(e, Y)
	\\
	&\wedge \tail(e^*) = \tail(e) \wedge (\oriNBA(e, e^*) \vee \oriNBA(e^*,
	e))\Big)
\end{align*}
We now define the rest of the edges. We denote by $e^*(v)$ the edge which
satisfies $e^* \in F \wedge \tail(e^*) = v$.
\begin{align*}
	\Parent_{\sigma_T\sigma_T}^R(X, Y) \Leftrightarrow \exists e \exists
	f&\Big(\tail(e) = \tail(f) \wedge \oriNBA(e, f) \\
	\wedge \Big(&(\Bag_{\sigma_T}(e, X) \wedge \Bag_{\sigma_T}(f, Y) \wedge
	\oriNB(e^*(\tail(e)), e)) \\
	\vee &(\Bag_{\sigma_T}(f, X) \wedge \Bag_{\sigma_T}(e, Y) \wedge \oriNB(f,
	e^*(\tail(f)))) \Big)\Big) \\
\end{align*}
Unifying all above defined predicates (plus the omitted similar cases) yields
the $\Parent(X, Y)$-predicate for our tree decomposition.

\subsection{Hierarchical Graph Decompositions for
$k$-Outerplanar Graphs}\label{appSecMSOLHGD} 
In this section we provide details for the predicates used in proofs of Section
\ref{secG3CC}. First we show how to define the parent-relation between blocks in
our hierarchical decomposition as explained in the proof of Lemma
\ref{lemConnTD3ConnDef}. We assume that we are given a graph $G = (V, E)$ with a
spanning tree $S = (V, F)$, which i  rooted at an (arbitrary) vertex $r \in V$.
\par
Let $\Block(X)$ denote a predicate which is true if and
only if a set $X \subseteq V$ is a block in the hierarchical decomposition of
$G$. $\Block(X)$ is definable by \cite{Cou99} (cf. also
Proposition \ref{propConnTD3DefUtil}). This predicate both encodes the cases of
the edges between 2-cuts and 3-blocks (see Proposition \ref{propC2B3dir}) and of edges
between 1-cuts and 2-blocks.
\begin{align*}
	\Parent_{\Block}(X, Y) \Leftrightarrow & (\Block(X) \wedge \Block(Y) \wedge (X
	\cap Y = X \vee X \cap Y = Y))
	\\
	\wedge &\Big((X \subset Y) \to (\forall v \in Y)(\exists x \in X)\forall
	E_{P_v}\exists E_{P_x} \\
	&(\Path(r, v, E_{P_v}) \wedge \Path(r, x, E_{P_x})
	\wedge E_{P_x} \subset E_{P_v})\Big)
	\\
	\wedge &\Big((Y \subset X) \to (\exists v \in Y)(\exists x \in X)\forall
	E_{P_v} \exists E_{P_x} \\
	&(\Path(r, v, E_{P_v}) \wedge \Path(r, x, E_{P_x})
	\wedge E_{P_v} \subset E_{P_x})\Big)
\end{align*}

\subsubsection{Defining a Cycle Block}\label{appSecMSOLHGDCyc}
We now show how to define the predicates for tree decompositions of a cycle
block $C = (W, E_C)$ as used in the proof of Proposition
\ref{propConnTD3DefCycle}.
First, we find the root $r_C \in W$ of the cycle.
\begin{align*}
	v = r_C \Leftrightarrow &(r \in W \wedge v = r) \vee \Big((\exists C_P \subset
	V)(\Parent_{\Block}(C_P, C) \\
	&\wedge \exists v ((\Bag_{\cC_1}(v, C_P) \vee
	\Bag_{\cC_2}(v, C_P)) \wedge v = r_C))\Big)
\end{align*}
Now we can define the predicate $\Bag_{Cyc}$ straightforwardly.
\begin{align*}
	\Bag_{Cyc}(e, X) \Leftrightarrow & \neg \Inc(e, r_C) \wedge (v \in X
	\leftrightarrow (\Inc(v, e) \vee v = r_C))
\end{align*}
Furthermore we can define the predicate $\Parent_{Cyc}(X, Y)$ as described in
the proof of Proposition \ref{propConnTD3DefCycle}.
\begin{align*}
	\Parent_{Cyc}(X, Y) \Leftrightarrow &\exists e \exists f \Big(\Bag_{Cyc}(e, X)
	\wedge \Bag_{Cyc}(f, Y) \wedge |X \cap Y| = 2 \\
	\wedge &(\exists Z \subseteq V) \Big(\Cycle_{\cB_3}(Z) \wedge \Inc(e, Z) \wedge
	\Inc(f, Z) \\
	\wedge &(\exists P_e \subseteq \IncE(Z)) (\exists P_f \subseteq \IncE(Z)) \\
	& (\dir{\Path}(r_C, \tail(e), P_e) \wedge \dir{\Path}(r_C, f, P_f) \wedge P_e
	\subset P_f) \Big)\Big)
\end{align*}

\subsubsection{Defining the Parent-predicate for $(\cT,
\cX)$}\label{appSecMSOLHGDPar} 
We now complete the proof of Lemma \ref{lemConnTD3ConnDef} by defining the
parent-relation in all bags of the resulting tree decomposition $(\cT, \cX)$ of
the graph $G$. During this step we also modify some of the $\Bag$-predicates,
since, as explained in the proof, a number of vertices might be added to each
bag in the tree decomposition. A vertex $v$ is added to a bag $X$, when it is a
member of a tree decomposition of a 2-connected 2-block or a 3-block and $v$ is
contained in the parent cut bag of $X$ in the hierarchical decomposition of $G$.
We show how to define such a predicate for an arbitrary case.
\begin{align*}
	\Bag_*'(X) \Leftrightarrow &(\exists X' \subseteq X) \Big(\Bag_*(X') \wedge v
	\in X \setminus X' \leftrightarrow \exists Y \exists Z \Big(X' \subseteq Z \\
	&\wedge (\kConn{2}_{\cB_2}(Z) \vee \kConn{3}_{\cB_3}(Z) \vee
	\Cycle_{\cB_3}(Z)) \\
	&\wedge \Parent_{\Block}(Y, Z) \wedge v \in Y \Big)\Big)
\end{align*}
In the following, we indicate that we refer to these modified bags by using the
notation '$\Bag'\ldots$' instead of '$\Bag\ldots$'. We define two cases: One, in
which a $\cC_1$- or $\cC_2$-block is a parent of a $\cB_3$-block and vice versa.
The cases for $\cC_1$- and $\cB_2$-blocks can be defined by the obvious
replacements. Note that the predicate $\Root_{\cB_3}$ can be defined
straightforwardly using the $\Bag_{\cB_3}(X)$- and
$\Parent_{\cB_3}(X, Y)$-predicates.
\begin{align*}
	\Parent_{\cC \cB_3}(X, Y) \Leftrightarrow &(\Bag_{\cC_1}'(X) \vee
	\Bag_{\cC_2}'(X)) \wedge \Bag_{\cB_3}'(Y) \wedge X \subseteq Y \wedge
	\Root_{\cB_3}(Y) \\
	\wedge &\exists Z (Y \subseteq Z \wedge \Parent_{\Block}(X, Z \setminus X)) \\
	\Parent_{\cB_3 \cC}(X, Y) \Leftrightarrow &\Bag_{\cB_3}'(X) \wedge
	(\Bag_{\cC_2}'(Y) \vee \Bag_{\cC_1}'(Y)) \wedge X \subseteq Y \\
	\wedge &\exists Z (X \subseteq Z \wedge \Parent_{\Block}(X \setminus Z, Z)) \\
	\wedge &\neg(\exists X' (\Parent_{\cB_3}(X', X) \wedge X' \subseteq Y))
\end{align*}
The $\Parent_{\cB\cC}(X, Y)$-predicate can now be defined as a unification of
all these cases.

\subsubsection{Defining Tree Decompositions for 3-Connected
3-Blocks}\label{appSecMSOLHGDSE} 
We now show how to define the predicates for defining the sets $\cS_E$ and
$\cR_V$ as outlined in the proof of Lemma \ref{lemSpTrSetsDef}. To shorten our
notation, we will use the symbol $E[B_3, \cS_E]$ instead of the term
'$\IncE(B_3) \cap \cS_E$'.
\begin{align*}
	\mbox{\ref{lemSpTrSetsDef1}} &~(\exists \cR_V \subseteq V)(\exists F \subseteq
	E) (\exists F' \subseteq E) (\exists \cS_E \subseteq E) (\cS_E = F \cup
	F') \ldots \\
	\mbox{\ref{lemSpTrSetsDef2}} &~ (\exists r_T \in V) (\dir{\Tree}(r_T, F))\ldots \\
	\mbox{\ref{lemSpTrSetsDef3}} &~ e \in F' \to \exists x \exists y (\neg x = y
	\wedge \Inc(x, e) \wedge \Inc(y, e) \wedge \neg e \in F \\
	&\wedge \exists
	X(\Bag_{\cC_2}(X) \wedge x \in X \wedge y \in X))\ldots \\
	\mbox{\ref{lemSpTrSetsDef4}} &~ (\forall B_3 \subseteq V)
	\Big(\kConn{3}_{\cB_3}(B_3) \to \Big(er(B_3, \IncE(B_3), E[B_3, \cS_E]) \le 2k
	\\
	&\wedge fr(B_3, \IncE(B_3), E[B_3, \cS_E]) \le k \Big)\Big) \\
	\mbox{\ref{lemSpTrSetsDef5}} &~ v \in \cR_V \to \exists X (\Bag_{\cC_2}(X)
	\wedge v \in X) \\
	\mbox{\ref{lemSpTrSetsDef6}} &~ (\forall B_3 \subseteq V)
	\Big(\kConn{3}_{\cB_3}(B_3) \to (\exists r_{B_3} \in \cR_V)
	\Big(\dir{\Tree}(r_{B_3}, B_3, E[B_3, \cS_E])\Big)\Big)
\end{align*}

\end{document}